
\documentclass[journal]{IEEEtran}
%


%

%
\usepackage{cite}

%
\ifCLASSINFOpdf
  \usepackage[pdftex]{graphicx}
\else
  \usepackage[dvips]{graphicx}
\fi
%
%

%
\usepackage{amsmath}
\ifCLASSOPTIONcompsoc   
	\usepackage[caption=false,font=normalsize,labelfont=sf,textfont=sf]{subfig}
\else
	\usepackage[caption=false,font=footnotesize]{subfig}
\fi
\hyphenation{op-tical net-works semi-conduc-tor}

\usepackage{bm}
\usepackage{amsmath}
\usepackage{amsthm}
\usepackage{amsfonts}
\usepackage{amssymb}
\usepackage{color}
\usepackage{algorithm}
\usepackage{setspace}
\usepackage{tabularx,booktabs}
\usepackage{multirow}
\usepackage{algorithmicx}
\usepackage{enumitem}
\usepackage{algpseudocode}
\makeatletter
\def\BState{\State\hskip-\ALG@thistlm}
\makeatother

\usepackage{dsfont}
\usepackage{psfrag}

\usepackage{caption}
\captionsetup[algorithm]{font=small}

\errorcontextlines\maxdimen

\makeatletter
    \newcommand*{\algrule}[1][\algorithmicindent]{\makebox[#1][l]{\hspace*{.5em}\thealgruleextra\vrule height \thealgruleheight depth \thealgruledepth}}%
\newcommand*{\thealgruleextra}{}
\newcommand*{\thealgruleheight}{.75\baselineskip}
\newcommand*{\thealgruledepth}{.25\baselineskip}

\newcount\ALG@printindent@tempcnta
\def\ALG@printindent{%
    \ifnum \theALG@nested>0
        \ifx\ALG@text\ALG@x@notext
        \else
            \unskip
            \addvspace{-1pt}
            \ALG@printindent@tempcnta=1
            \loop
                \algrule[\csname ALG@ind@\the\ALG@printindent@tempcnta\endcsname]%
                \advance \ALG@printindent@tempcnta 1
            \ifnum \ALG@printindent@tempcnta<\numexpr\theALG@nested+1\relax
            \repeat
        \fi
    \fi
    }%
\usepackage{etoolbox}
\patchcmd{\ALG@doentity}{\noindent\hskip\ALG@tlm}{\ALG@printindent}{}{\errmessage{failed to patch}}
\makeatother

\newbox\statebox
\newcommand{\myState}[1]{%
    \setbox\statebox=\vbox{#1}%
    \edef\thealgruleheight{\dimexpr \the\ht\statebox+1pt\relax}%
    \edef\thealgruledepth{\dimexpr \the\dp\statebox+1pt\relax}%
    \ifdim\thealgruleheight<.75\baselineskip
        \def\thealgruleheight{\dimexpr .75\baselineskip+1pt\relax}%
    \fi
    \ifdim\thealgruledepth<.25\baselineskip
        \def\thealgruledepth{\dimexpr .25\baselineskip+1pt\relax}%
    \fi
    \State #1%
    \def\thealgruleheight{\dimexpr .75\baselineskip+1pt\relax}%
    \def\thealgruledepth{\dimexpr .25\baselineskip+1pt\relax}%
}


\DeclareMathOperator{\diag}{diag}

\newcommand{\norm}[1]{\left\lVert#1\right\rVert}
\newcommand{\tpose}{\mathsf{T}}
\newcommand{\hermi}{\mathsf{H}}

\definecolor{Dblue}{rgb}{0,0,1}
\definecolor{Dbrown}{rgb}{0.59,0.4,0}
\definecolor{Dred}{rgb}{0,0,0}
\definecolor{Dgreen}{rgb}{0,0.4,0}
\definecolor{Nblue}{rgb}{0,0,0.65}

\def \dred {\color{Dred}}

\newtheorem{prop}{Proposition}

\begin{document}
\bstctlcite{IEEEexample:BSTcontrol}
%
\title{Sparsity-Assisted Signal Denoising and \\ Pattern Recognition in Time-Series Data}
%
%
%

\author{G.V.~Prateek,~\IEEEmembership{Student Member,~IEEE,}
        Yo-El~Ju,
        and~Arye~Nehorai,~\IEEEmembership{Life~Fellow,~IEEE}
\thanks{Research reported in this publication was supported by National Institutes of Health awards K23-NS089922 (YSJ, GVP), R01AG059507 (AN, YSJ, GVP), and UL1RR024992 Sub-Award KL2-TR000450 (YSJ). The content is solely the responsibility of the authors and does not necessarily represent the official views of the National Institutes of Health.}
}

\maketitle

\begin{abstract}
We address the problem of signal denoising and pattern recognition in processing batch-mode time-series data by combining linear time-invariant filters, orthogonal multiresolution representations, and sparsity-based methods. We propose a novel approach to designing higher-order zero-phase low-pass, high-pass, and band-pass infinite impulse response filters as matrices, using spectral transformation of the state-space representation of digital filters. We also propose a proximal gradient-based technique to factorize a special class of zero-phase high-pass and band-pass digital filters so that the factorization product preserves the zero-phase property of the filter and also incorporates a sparse-derivative component of the input in the signal model. {\dred To demonstrate applications of our novel filter designs, we validate and propose new signal models to simultaneously denoise and identify patterns of interest. We begin by using our proposed filter design to test an existing signal model that simultaneously combines linear time invariant (LTI) filters and sparsity-based methods. We develop a new signal model called sparsity-assisted signal denoising (SASD) by combining our proposed filter designs with the existing signal model. Using simulated data, we demonstrate the robustness of the SASD signal model across different orders of filter and noise levels. Thereafter, we propose and derive a new signal model called sparsity-assisted pattern recognition (SAPR). In SAPR, we combine LTI band-pass filters and sparsity-based methods with orthogonal multiresolution representations, such as wavelets, to detect specific patterns in the input signal. Finally, we combine the signal denoising and pattern recognition tasks, and derive a new signal model called the sparsity-assisted signal denoising and pattern recognition (SASDPR). We illustrate the capabilities of the SAPR and SASDPR frameworks using sleep-electroencephalography data to detect K-complexes and sleep spindles, respectively.}
\end{abstract}

\begin{IEEEkeywords}
Signal smoothing, signal denoising, pattern recognition, zero-phase filters, convex optimization, electroencephalography, K-complexes, sleep, polysomnography.
\end{IEEEkeywords}

%
\IEEEpeerreviewmaketitle

\section{Introduction}
%
%
%
%
\IEEEPARstart{S}{ignal} denoising and pattern recognition of time-series data are widely used in many scientific fields, including physics, engineering, medicine, economics, acoustics, biology, and psychology. For example, specific signal patterns in the electroencephalogram (EEG) data, which are useful in clinical diagnosis and cognitive neuroscience, are challenging to detect and distinguish from artifacts. The traditional method for solving the signal denoising problem involves the use of linear time-invariant (LTI) filters. In machine learning, the conventional way to recognize patterns in time-series involves three steps: feature extraction, feature selection, and classification.

LTI filters are easy to implement and are also efficient, especially when the frequency band of the signal of interest is known. However, if the signal contains discontinuities, LTI filters over-smooth the region of discontinuities. In contrast, fast iterative methods such as total variation denoising (TVD) \cite{chambolle2004,rudin1992,figueiredo2006} preserve discontinuities or singular points and are suitable for piecewise-constant signals. Though TVD is fast and effective, it often exhibits staircase-like artifacts, especially in regions where the signals are locally approximated by higher order polynomials \cite{rodriguez2009,karahanoglu2011,bredies2010,hu2012}. Further, any abrupt change, such as discontinuities or spikes in the signal, spreads out over the whole frequency axis. As a result, the signal is no longer sparse in the frequency domain. Thus, sparsity-based methods such as compressed sensing \cite{chen2001,candes2011} with dictionary elements from an oversampled discrete Fourier transform (DFT) matrix cannot reconstruct the signal perfectly.

An alternative approach uses wavelets, which offer an orthogonal multiresolution representation of the signals and have several advantages over traditional Fourier methods in analyzing signals that contain discontinuities \cite{mallat1989}. Wavelet-based denoising developed using adaptive thresholding of wavelet coefficients can simultaneously denoise and preserve the singularity points of the signal \cite{donoho1994,donoho1995}. The main drawback of this approach is that it introduces pseudo-Gibbs artifacts at the singular points due to oscillations that are more local and of smaller amplitude near signal's discontinuities. Several prior methods addressed the pseudo-Gibbs phenomenon explicitly. These include wavelet transform modulus sum \cite{hsung1999}, wavelet-domain hidden Markov models \cite{crouse1998}, wavelet footprints \cite{dragotti2003}, and total variation-based wavelet denoising \cite{durand2003}. 

In addition to explicit wavelet-based techniques, denoising is also achieved by decomposing the underlying signal as the sum of two or more components which include a local polynomial signal or low-frequency signal together with a sparse or sparse-derivative signal, or both \cite{selesnick2012_poly,selesnick2014,selesnick2015,selesnic2017}. For example, the sparsity-assisted signal smoothening (SASS) algorithm \cite{selesnick2014,selesnick2015,selesnic2017} simultaneously combines LTI filtering and TVD to denoise a wide class of signals. The effectiveness of SASS is mainly due to the computationally efficient designing of zero-phase noncausal high-pass and low-pass recursive filters as banded matrices. {\dred The main purpose of the zero-phase property is to eliminate phase distortion introduced by causal linear time invariant filters. In other words, the zero-phase property denoises the signal and also preserves its shape.} However, using these recursive filters as matrices introduces three challenges: a)~filter response types are limited to low-pass and high-pass filters; b)~the orders of the filter numerator and denominator polynomials must be equal; and c)~filters with polynomial orders larger than six are highly unstable because the banded matrices are no longer invertible, thus limiting the steepness of the transition bands.

To address these shortcomings, we develop a novel approach to designing zero-phase noncausal filters as matrices. Our method is inspired by spectral transformation of the state-space representation of digital filters \cite{mullis1976_assp,constantinides1970} and by forward-backward filtering \cite{gustafsson1996}. The spectral transformation property expands the filter response types to include low-pass, high-pass, and band-pass filters. Moreover, these filters do not require the orders of the numerator and denominator polynomials to be equal. Furthermore, the forward-backward filtering approach to designing zero-phase noncausal filters as matrices does not require a matrix inversion step, thereby making it feasible to design filters of higher orders. The maximum achievable order of the filter depends only on the positive definiteness condition of the reachability and observability Gramians obtained from the state-space representation of the digital filter. In addition, we also develop a proximal gradient-based method to factorize a special class of zero-phase high-pass and band-pass digital filters which contain at least one zero at $z = 1$, so that the factorization product preserves the zero-phase property of the filters and also incorporates a sparse-derivative component of the input into the signal model. {\dred The key differences between the zero-phase filters designed in \cite{selesnick2014,selesnick2015,selesnic2017} and our method are as follows: a) the zero-phase filters designed as matrices are stable and not sparse, b) the orders of the filter depend on the positive definiteness condition of the reachability and observability Gramians, and c) filter response types include stable low-pass, high-pass, and band-pass filters. Because our zero-phase filters are not sparse, they are computationally expensive when compared with the sparse and banded zero-phase filters designed in \cite{selesnick2014,selesnick2015,selesnic2017}. However, our zero-phase filters are stable, and the filter response type includes narrow band-pass filters which enables the development of new signal models for pattern recognition.}

{\dred To demonstrate applications of our novel filter designs, we validate and propose new signal models to simultaneously denoise and identify patterns of interest. We use our proposed filter design to test an existing signal model that simultaneously combines linear time invariant (LTI) filters and sparsity-based methods \cite{selesnic2017}. We develop a new signal model called sparsity-assisted signal denoising (SASD) by combining our proposed filter designs with the existing signal model. Because the zero-phase filters in the SASD signal model are stable, they demonstrate consistent results on changing the orders of the filter. Thereafter, we propose and derive a new signal model called sparsity-assisted pattern recognition (SAPR). In SAPR, we combine LTI band-pass filters and sparsity-based methods with orthogonal multiresolution representations, such as wavelets, to detect specific patterns in the input signal. Finally, we combine the signal denoising and pattern recognition tasks, and derive a new signal model called the sparsity-assisted signal denoising and pattern recognition (SASDPR). In SAPR and SASDPR, we use the prior knowledge of the pattern of interest and design narrow zero-phase band pass filter so the pass band of the filter response covers the frequency band of the pattern of interest. The optimization framework in SAPR and SASDPR are analogous to a pattern recognition problem in machine learning. However, the three key tasks, i.e., feature extraction (orthogonal multiresolution representation), feature selection (via sparsity-inducing norms), and classification (zero-phase band-pass filtering), happen simultaneously. 

To demonstrate the capabilities of the SAPR and SASDPR, we provide an illustrative example of detecting K-complexes and sleep spindles, respectively, in sleep-EEG data. K-complexes, sleep spindles, and slow-wave sleep constitute physiological markers of non-rapid eye movement (NREM) sleep. Recent findings suggest that there exists a bidirectional relationship between NREM sleep and amyloid-beta pathophysiology that may contribute to Alzheimer disease (AD) \cite{steriade1998,mander2016,ju2013,ju2017}. To enhance the understanding of how NREM sleep affects AD pathophysiology, it is necessary to develop accurate methods to automatically detect NREM and other EEG features in large EEG datasets with linked phenotypic measurements and AD biomarker characterization. Our proposed method of detecting K-complexes and sleep spindles demonstrate an improved performance relative to the existing method \cite{parekh2015}.}

Sections \ref{sec:state_space_rep} and \ref{sec:spec_transform} review the concepts required for state-space representation and spectral transformation of digital filters, respectively. Section \ref{sec:zero_phase_mats} presents a novel approach for designing higher order zero-phase low-pass, high-pass, and band-pass digital filters as matrices. Section \ref{sec:factorize_zero_phase_mats} develops a proximal gradient-based algorithm to factorize zero-phase high-pass and band-pass filters so that the factorization procedure preserves the zero-phase property of the filters and also enables the incorporation of discontinuities in the signal as a sparse derivative. {\dred Sections \ref{subsec:sig_denoise}, \ref{subsec:pattern_reco}, and \ref{sec:sig_sdpr} formulate the problems of signal denoising and pattern recognition as a convex optimization problem, and derive iterative procedures to solve them.}

\textit{Notations}: The following general notation will be used throughout the paper. Bold uppercase and lowercase letters denote a matrix and vector, respectively. Uppercase letters that are not bold denote scalars. For any matrix $\bm{A}$, $\bm{A}^\tpose$, $\bm{A}^{-1}$, and $\mathrm{Tr}\left\{\bm{A}\right\}$, denote the transpose, inverse, and trace of $\bm{A}$, respectively. $\bm{I}_N$ represents an $N\times N$ identity matrix. The norms $\norm{\cdot}_2$, $\norm{\cdot}_1$, and $\norm{\cdot}_\mathrm{F}$ indicate the $\ell_2$, $\ell_1$, and Frobenius norms, respectively. The vectorization of matrix $\bm{A}$ (column-wise unfolding of the matrix) is represented as $\mathrm{vec}(\bm{A})$.
\section{Preliminaries\label{sec:prelim}}
\subsection{State-Space Representation \label{sec:state_space_rep}}
Consider the $M$-th order stable discrete linear time invariant system transfer function
\begin{align}
H(z) = \frac{ \sum_{i=0}^M b_i z^{-i}}{1 +  \sum_{i=1}^M a_i z^{-i}} = \frac{B(z)}{A(z)}\label{eqn:tf},
\end{align}
where $a_i$ and $b_i$ are the filter coefficients of the numerator and denominator polynomials $B(z)$ and $A(z)$, respectively. The transfer function in (\ref{eqn:tf}) can be realized with completely controllable and observable state model equations in a recursive form as in \cite{moore1981}:
\begin{equation}
\begin{aligned}
\bm{s}(k+1) &= \bm{A}_{\mathrm{f}} \bm{s}(k) + \bm{B}_{\mathrm{f}} u(k) \\ \label{eqn:state_eqn}
y(k) 		&= \bm{C}_{\mathrm{f}} \bm{s}(k) + D_{\mathrm{f}} u(k), 
\end{aligned}
\end{equation}
where $u(k) \in \mathbb{R}$, $y(k) \in \mathbb{R}$, and $\bm{s}(k) \in \mathbb{R}^{M \times 1}$ denote the scalar input, scalar output, and state vector, respectively, and $\bm{A}_{\mathrm{f}} \in \mathbb{R}^{M \times M}$, $\bm{B}_{\mathrm{f}} \in \mathbb{R}^{M \times 1}$, and $\bm{C}_{\mathrm{f}} \in \mathbb{R}^{1 \times M}$, are real constant state matrices, and ${D}_{\mathrm{f}} \in \mathbb{R}^{1 \times 1}$ is a scalar. The recursive form of the digital filter in (\ref{eqn:state_eqn}) represents the forward filtering equations of the transfer function $H(z)$ and is denoted by the subscript ${\mathrm{f}}$. The transfer function $H(z)$ can be expressed in terms of the state matrices as
\begin{equation}
\begin{aligned}
H(z) = \sum_{k=0}^{\infty} h(k) z^{-k} = {D}_{\mathrm{f}} + \bm{C}_{\mathrm{f}}( z\bm{I}_M - \bm{A}_{\mathrm{f}} )^{-1} \bm{B}_{\mathrm{f}},
\end{aligned}
\end{equation}
with the correspondence
\begin{align}
h(k) = \begin{cases} D_{\mathrm{f}}, &\quad k = 0; \\ \bm{C}_{\mathrm{f}} \bm{A}_{\mathrm{f}}^{k-1} \bm{B}_{\mathrm{f}}, &\quad k = 1,2,\ldots. \end{cases}
\end{align}
Therefore, the filter $H(z)$ can also be represented in state-space as $(\bm{A}_{\mathrm{f}},\bm{B}_{\mathrm{f}},\bm{C}_{\mathrm{f}},D_{\mathrm{f}})$. If $\bm{u} = [u(0),u(1),\ldots,u(N-1)]^\tpose \in \mathbb{R}^{N \times 1}$ is the input vector and $\bm{s}(0)$ is the initial state vector, then the outputs of the filter, denoted by $\bm{y} = [y(0),y(1),\ldots,y(N-1)]^\tpose \in \mathbb{R}^{N \times 1}$, can be expressed as in \cite{gustafsson1996}:
\begin{align}
\bm{y} = \bm{H}_{\mathrm{f}} \bm{u} + \bm{O}_{\mathrm{f}} \bm{s}(0), \label{eqn:forward_filter}
\end{align}	
where $\bm{H}_{\mathrm{f}} \in \mathbb{R}^{N \times N}$ is a lower-triangular Toeplitz matrix of impulse response coefficients, expressed as

\begin{align}
\bm{H}_{\mathrm{f}} = \begin{bmatrix} D_{\mathrm{f}}  &   0    & \ldots &   0  \\
							  \bm{C}_{\mathrm{f}}\bm{B}_{\mathrm{f}}  &  D_{\mathrm{f}}  &    0   & \vdots \\
							\vdots  &        & \ddots &      \\
							 \bm{C}_{\mathrm{f}}\bm{A}_{\mathrm{f}}^{N-2}\bm{B}_{\mathrm{f}} & \ldots &  \bm{C}_{\mathrm{f}}\bm{B}_{\mathrm{f}}  &  D_{\mathrm{f}}
			  \end{bmatrix}, \label{eqn:toeplitz_matrix}
\end{align}
and $\bm{O}_{\mathrm{f}} \in \mathbb{R}^{N \times M}$ is the observability matrix. We assume that the reachability and observability Gramians denoted by $\bm{W}_{\mathrm{r}} \in \mathbb{R}^{M \times M}$ and $\bm{W}_{\mathrm{o}} \in \mathbb{R}^{M \times M}$, respectively, are positive definite, and satisfy the algebraic Lyapunov equations given as in \cite{anderson1979}:
\begin{equation}
\begin{aligned}
\bm{W}_{\mathrm{r}} = \bm{A}_{\mathrm{f}} \bm{W}_{\mathrm{r}} \bm{A}_{\mathrm{f}}^\tpose + \bm{B}_{\mathrm{f}}\bm{B}_{\mathrm{f}}^\tpose, \mbox{ and } \bm{W}_{\mathrm{o}} = \bm{A}_{\mathrm{f}}^\tpose \bm{W}_{\mathrm{o}} \bm{A}_{\mathrm{f}} + \bm{C}_{\mathrm{f}}^\tpose \bm{C}_{\mathrm{f}}. \notag
\end{aligned}
\end{equation}
Let $\bm{T} \in \mathbb{R}^{M \times M}$ be a nonsingular matrix. It is well known that the transfer function in (\ref{eqn:tf}) is invariant under nonsingular transformations \cite{snelgrove1986}. Under the change of variables $\hat{\bm{x}}(k) = \bm{T}^{-1} \bm{x}(k)$, the parameterization of the state-variable can be written as
\begin{equation}
\begin{aligned}
\hat{\bm{A}_{\mathrm{f}}} = \bm{T}^{-1} \bm{A}_{\mathrm{f}} \bm{T}, \quad \hat{\bm{B}_{\mathrm{f}}} = \bm{T}^{-1} \bm{B}_{\mathrm{f}}, \quad \hat{\bm{C}_{\mathrm{f}}} = \bm{C}_{\mathrm{f}} \bm{T}, \label{eqn:tranform_tf}
\end{aligned}
\end{equation}
and the reachability and observability Gramians are
\begin{equation}
\hat{\bm{W}}_{\mathrm{r}} = \bm{T}^{-1} \bm{W}_{\mathrm{r}} \bm{T}^{-\tpose} \mbox{ and } \hat{\bm{W}}_{\mathrm{o}} = \bm{T}^\tpose \bm{W}_{\mathrm{o}} \bm{T},  \\ \label{eqn:transform_gramians}
\end{equation}
respectively. Such transformations are called \emph{similarity} transformations. Under similarity transformations, the transfer function $H(z)$ remains the same and is expressed in a different coordinate system. Further, the eigenvalues of an asymptotically stable system (or modes) are invariant, but the eigenvalues of the Gramians are not invariant. However, the eigenvalues of the product of the Gramian matrices are invariant because $\hat{\bm{W}}_{\mathrm{r}} \hat{\bm{W}}_{\mathrm{o}} = \bm{T}^{-1} {\bm{W}}_{\mathrm{r}} {\bm{W}}_{\mathrm{o}} \bm{T}$. 

Let ${\boldsymbol\varSigma} = \diag\{{\sigma}^2_1,\ldots,{\sigma}^2_M\} \in \mathbb{R}^{M \times M}$ denote a diagonal matrix whose diagonal entries are the eigenvalues of $\bm{W}_{\mathrm{r}} \bm{W}_{\mathrm{o}}$. A transformation $\bm{T}$ for which $\hat{\bm{W}}_{\mathrm{r}}$ and $\hat{\bm{W}}_{\mathrm{o}}$ in (\ref{eqn:transform_gramians}) are diagonal is called a \emph{principal axis realization} \cite{mullis1976_cs} or \emph{contragredient} transformation \cite{laub1987}. A special case of the principal axis realization transformations where $\hat{\bm{W}}_{\mathrm{r}} = \hat{\bm{W}}_{\mathrm{o}} = {\boldsymbol\varSigma}^{1/2}$ is called an \emph{internally balanced} transformation \cite{moore1981}. The necessary and sufficient conditions to obtain such transformations are derived in \cite[Proposition~10]{moore1981}, \cite[Theorem 1]{mullis1976_cs}, and \cite[Theorem~1]{laub1987}. The details of the algorithm to obtain an internally balanced transformation are presented in \cite[Section II]{laub1987}. In addition, the balanced realization with $\hat{\bm{W}}_{\mathrm{r}} = \hat{\bm{W}}_{\mathrm{o}} = {\boldsymbol\varSigma}^{1/2}$ has the minimum sensitivity to noise and thus is recommended as a starting realization if numerical algorithms are applied to rational functions \cite{hwang1977}\cite{thiele1986}.
\subsection{Spectral Transformations for Digital Filters \label{sec:spec_transform}}
Spectral transformation \cite{constantinides1970} provides a useful technique to construct low-pass, high-pass, band-pass, and band-stop filters. Given a prototype stable digital filter with a real rational transfer function $H(z)$ (preferably a low-pass filter), one constructs a composite transfer function of the form 
\begin{align}
G(z) = H(F(z)) = H(z)|_{z^{-1} \gets 1/F(z)},
\end{align}
where
\begin{align}
\frac{1}{F(z)} &= \pm \prod_{i=1}^{L} \left( \frac{1 - \alpha_i z}{z - \bar{\alpha}_i} \right) = \pm z^{-L} \frac{\pi(z^{-1})}{\pi(z)}. \label{eqn:unit_func}
\end{align}
Here, $\bar{\alpha}_i$ is the complex conjugate of $\alpha_i$, $|\alpha_i| < 1$, $\pi(z)$ is an $L$-th order polynomial in $z$, and the order of the filter is $G(z)$ is $LM$. The functions in (\ref{eqn:unit_func}) are called \emph{unit functions} \cite{constantinides1970}. Note that the unit function in (\ref{eqn:unit_func}), also represents an $L$-th order all-pass filter. The transformation to obtain the composite transfer function $G(z)$ involves substitution of $z^{-1}$ in $H(z)$ with the unit function in (\ref{eqn:unit_func}) where the mapping $z \mapsto F(z)$ is a mapping of the unit circle onto itself. Therefore, the regions of stability and instability of $H(z)$ are preserved in $G(z)$. The choice of the unit function depends on the frequency response of the composite filter.

Let $(\boldsymbol\alpha_{\mathrm{f}}, \boldsymbol\beta_{\mathrm{f}}, \boldsymbol\gamma_{\mathrm{f}}, \delta_{\mathrm{f}})$ denote the state-space representation of $1/F(z)$. Then, the transfer function $G(z)$ is
\begin{align}
G(z) = \mathcal{D}_{\mathrm{f}} + \bm{\mathcal{C}}_{\mathrm{f}}(z \bm{I}_{LM} - \bm{\mathcal{A}}_{\mathrm{f}})^{-1} \bm{\mathcal{B}}_{\mathrm{f}}, \label{eqn:spec_tf_1}
\end{align}
where
\begin{equation}
\begin{aligned} 
\bm{\mathcal{A}}_{\mathrm{f}} &= \bm{I}_M \otimes \boldsymbol\alpha_{\mathrm{f}} + \left[ \bm{A}_{\mathrm{f}} (\bm{I}_M - \delta_{\mathrm{f}})^{-1} \right] \otimes (\boldsymbol\beta_{\mathrm{f}} \boldsymbol\gamma_{\mathrm{f}}),  \\
\bm{\mathcal{B}}_{\mathrm{f}} &= \left[ (\bm{I}_M - \delta_{\mathrm{f}} \bm{\mathcal{A}}_{\mathrm{f}})^{-1} \bm{B}_{\mathrm{f}} \right] \otimes \boldsymbol\beta_{\mathrm{f}}, \\
\bm{\mathcal{C}}_{\mathrm{f}} &= \left[ \bm{C}_{\mathrm{f}}(\bm{I}_M - \delta_{\mathrm{f}} \bm{A}_{\mathrm{f}})^{-1} \right] \otimes \boldsymbol\gamma_{\mathrm{f}}, \\
\mathcal{D}_{\mathrm{f}}      &= D_{\mathrm{f}} + \delta_{\mathrm{f}} \bm{C}_{\mathrm{f}} (\bm{I}_M - \delta_{\mathrm{f}} \bm{A}_{\mathrm{f}})^{-1} \bm{B}_{\mathrm{f}}, \label{eqn:spec_tf_2}
\end{aligned}
\end{equation}
where $\otimes$ is the Kronecker product and $(\bm{\mathcal{A}}_{\mathrm{f}},\bm{\mathcal{B}}_{\mathrm{f}},\bm{\mathcal{C}}_{\mathrm{f}},\mathcal{D}_{\mathrm{f}})$ is the state-space representation of the transfer function $G(z)$ \cite[Lemma 1]{mullis1976_assp}. The details of the derivations of equations (\ref{eqn:spec_tf_1}) and (\ref{eqn:spec_tf_2}) are presented in \cite[Appendix A]{mullis1976_assp}. Let $\bm{\mathcal{W}}_{\mathrm{r}}$ and $\bm{\mathcal{W}}_{\mathrm{o}}$ denote the reachability and observability Gramian matrices, respectively, of the composite filter $G(z)$. Then, these Gramian matrices can be derived using the algebraic Lyapunov equations
\begin{equation}
\begin{aligned}
\bm{\mathcal{W}}_{\mathrm{r}} = \bm{\mathcal{A}}_{\mathrm{f}} \bm{\mathcal{W}}_{\mathrm{r}} \bm{\mathcal{A}}_{\mathrm{f}}^\tpose + \bm{\mathcal{B}}_{\mathrm{f}} \bm{\mathcal{B}}_{\mathrm{f}}^\tpose, \mbox{ and } \bm{\mathcal{W}}_{\mathrm{o}} = \bm{\mathcal{A}}_{\mathrm{f}}^\tpose \bm{\mathcal{W}}_{\mathrm{o}} \bm{\mathcal{A}}_{\mathrm{f}} + \bm{\mathcal{C}}_{\mathrm{f}}^\tpose \bm{\mathcal{C}}_{\mathrm{f}}. \notag
\end{aligned}
\end{equation}
The relationship between the Gramians of the two filters, $H(z)$ and the composite filter $G(z)$, is expressed as
\begin{equation}
\begin{aligned}
\bm{\mathcal{W}}_{\mathrm{r}} = \bm{W}_{\mathrm{r}} \otimes \bm{Q}, \mbox{ and } \bm{\mathcal{W}}_{\mathrm{o}} = \bm{W}_{\mathrm{o}} \otimes \bm{Q}^{-1},
\end{aligned}
\end{equation}
where $\bm{Q} \in \mathbb{R}^{L \times L}$ is the positive definite matrix \cite[Lemma 3]{mullis1976_assp}. Matrices $\bm{Q}$ and $\bm{Q}^{-1}$ are the reachability and observability matrices for the all-pass filter $1/F(z)$. Because the product of the Gramian matrices of a stable all-pass filter $1/F(z)$ is an Identity matrix, the second-order modes are all unity \cite[Corollary 1]{mullis1976_assp}. In addition, if $1/F(z)$ is internally balanced, then $\bm{Q} = \bm{I}_L$. Therefore, if $H(z)$ is a stable filter of order $M$ and $1/F(z)$ is a stable all-pass filter of order $L$, then the $LM$ second-order modes of the composite filter $G(z) = H(F(z))$ are simply $L$ copies of the $M$ second-order modes of $H(z)$ \cite[Theorem 2]{mullis1976_assp}. Moreover, if $1/F(z)$ is an internally balanced all-pass filter of order $L$, i.e., $\bm{Q} = \bm{I}_L$, and $\bm{T}$ be a transformation such that $H(z)$ is any internally balanced filter of order $M$, then the composite filter $G(z) = H(F(z))$ of order $LM$ is also internally balanced \cite{mullis1976_assp,koshita2008gramian}. The proof is simple and straightforward. If $H(z)$ and $1/F(z)$ are internally balanced, then the reachability and observability Gramians are $\boldsymbol\varSigma^{1/2}$ and $\bm{I}_L$, respectively. Therefore, $\bm{\mathcal{W}}_{\mathrm{r}} = \bm{\mathcal{W}}_{\mathrm{o}} = (\boldsymbol\varSigma^{1/2} \otimes \bm{I}_L)$. As the Gramians of the composite filter $G(z)$ are equal and are diagonal, $G(z)$ is automatically internally balanced, and thus demonstrates minimum sensitivity to noise.
\section{Infinite Impulse Response Filters as Matrices \label{sec:filt_as_mats}}
In this section, we present a novel approach to designing higher-order zero-phase low-pass, high-pass, and band-pass filters as matrices, using spectral transformation of the state-space representation of digital filters and forward-backward filtering. We also propose a proximal gradient-based method to factorize a special class of zero-phase high-pass and band-pass digital filters that contain at least one zero at $z = 1$. The factorization product almost completely preserves the zero-phase property of the filters and also incorporates any discontinuities, in the signal which are modeled as a sparse-derivative signal.
\begin{figure*}[t!]
\centering
\subfloat[Noncausal zero-phase fourth-order low-pass filter with cut-off frequency $\omega_c = 0.2\pi$. \label{fig:1a_lp_nczp}]{%
    \includegraphics[width=0.32\linewidth]{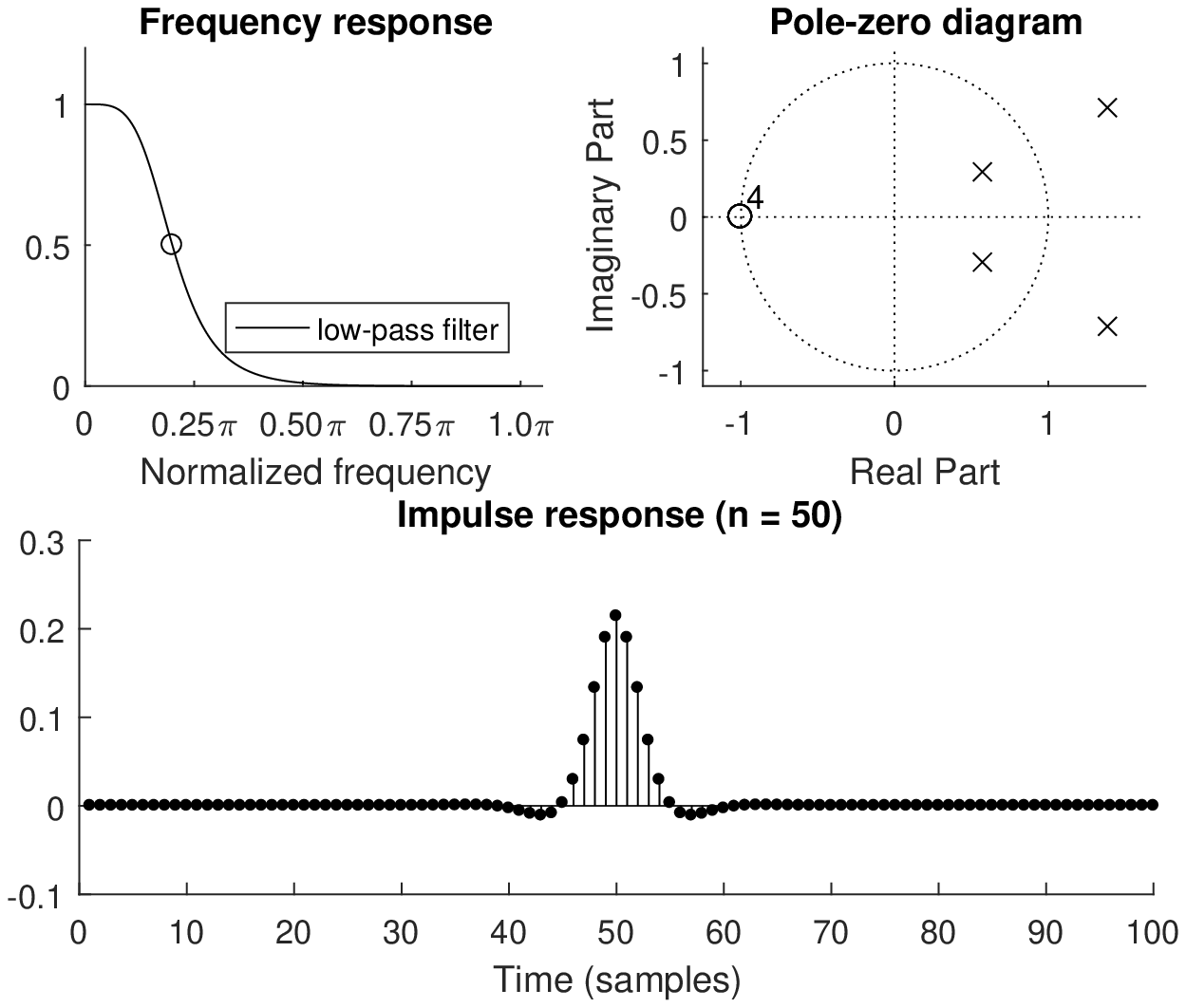}}
   \hfill
\subfloat[Noncausal zero-phase fourth-order high-pass filter with cut-off frequency $\omega_c = 0.2\pi$. \label{fig:1b_lp_nczp}]{%
     \includegraphics[width=0.32\linewidth]{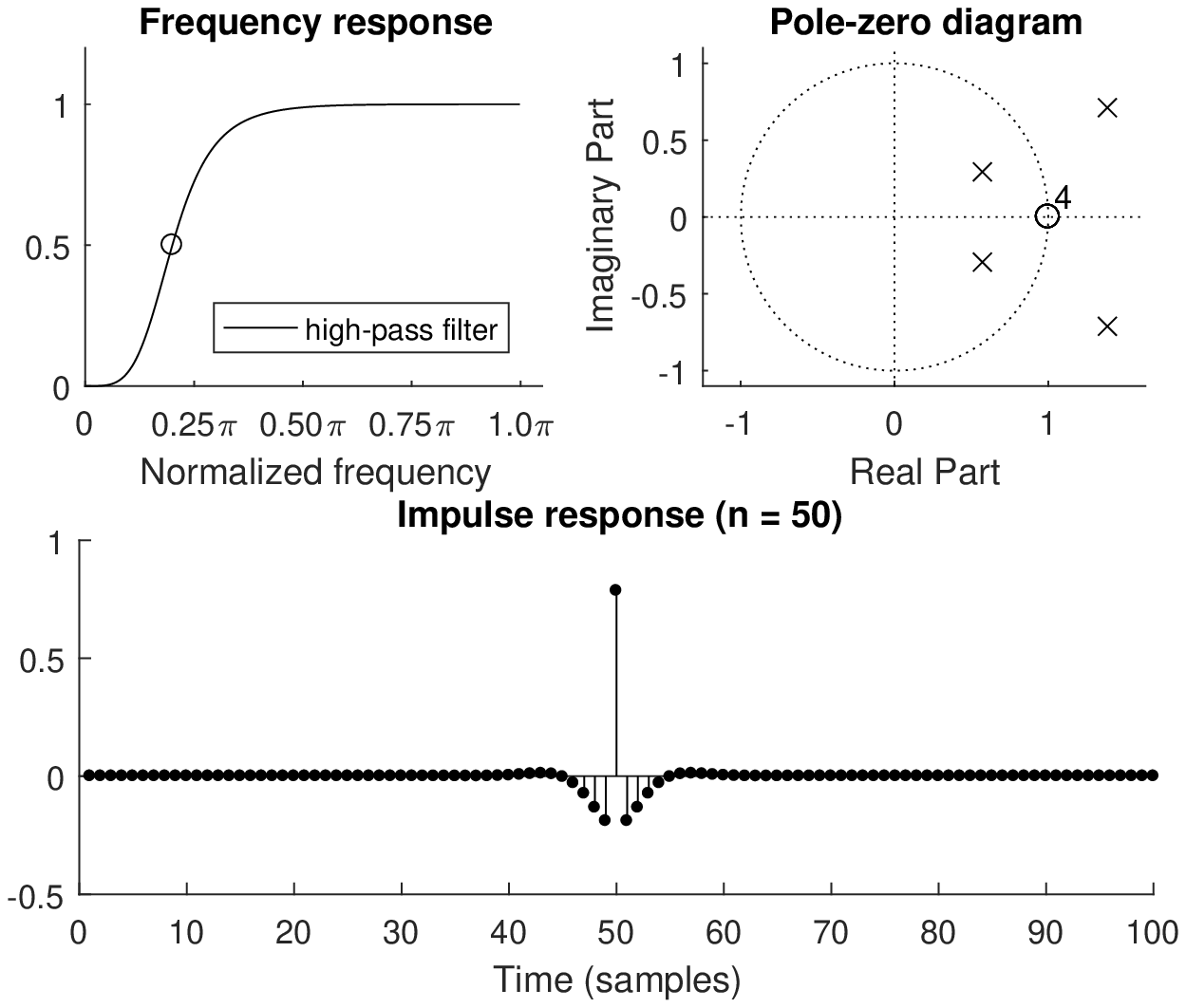}}
     \hfill
\subfloat[Noncausal zero-phase eighth-order band-pass filter $G(z)$ with center frequency $\omega_n = 0.5\pi$ and bandwith $\omega_b = 0.1\pi$. \label{fig:1c_lp_nczp}]{%
     \includegraphics[width=0.32\linewidth]{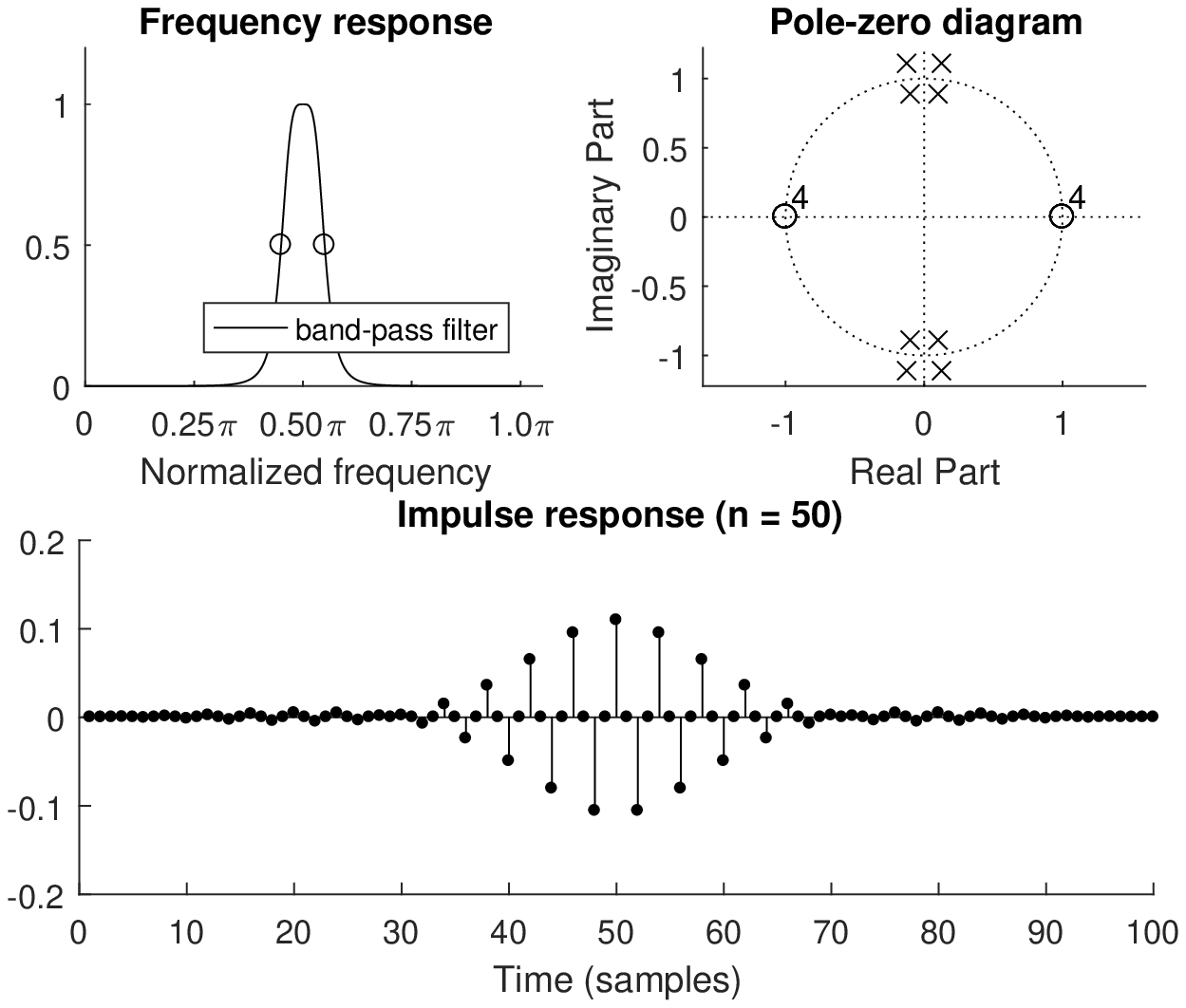}}
     \hfill
\caption{The composite filter $G(z)$ is designed using a prototype generalized digital low-pass Butterworth filter $H(z)$ with cut-off frequency at $\omega_c = 0.1\pi$ and filter order $M = 2$. The half-power point or $3 \mathrm{dB}$ points are denoted with circle in the frequency response plots. Poles are denoted with crosses whereas zeros are denoted with circles.\label{fig:non_causal_zero_phase}}
\end{figure*}
\subsection{Zero-Phase Filters as Matrices \label{sec:zero_phase_mats}}
\begin{prop}\label{lemma_1}
An $M$-th order prototype low-pass filter $H(z)$ with $M_1$ zeros at $z = -1$ and cut-off frequency $\omega_0$ can be spectrally transformed to
\begin{enumerate}[label=(\alph*)]
\item a composite low-pass filter $G(z) = H(F_{\mathrm{LP}}(z))$ with $M_1$ zeros at $z = -1$ and cut-off frequency $\omega_1$ where
\begin{align}
\frac{1}{F_{\mathrm{LP}}(z)} &= \frac{z^{-1} + \xi_{\mathrm{LP}}}{1 + \xi_{\mathrm{LP}} z^{^-1}}, \mbox{ and } \xi_{\mathrm{LP}} = \frac{\sin(\frac{\omega_0 - \omega_1}{2})}{\sin(\frac{\omega_0 + \omega_1}{2})}. \notag
\end{align}
\item a composite high-pass filter $G(z) = H(F_{\mathrm{HP}}(z))$ with $M_1$ zeros at $z = 1$ and cut-off frequency $\omega_1 $ where
\begin{align}
\frac{1}{F_{\mathrm{HP}}(z)} &= - \frac{z^{-1} + \xi_{\mathrm{HP}}}{1 + \xi_{\mathrm{HP}} z^{-1}}, \mbox{ and } \xi_{\mathrm{HP}} = \frac{\cos(\frac{\omega_0 + \omega_1}{2})}{\cos(\frac{\omega_0 - \omega_1}{2})}. \notag
\end{align}
\item a composite band-pass filter $G(z) = H(F_{\mathrm{BP}}(z))$ with $M_1$ zeros at $z = -1$ and $z = 1$, and center frequency $\omega_1$ where
\begin{align}
\frac{1}{F_{\mathrm{BP}}(z)} &= - \frac{z^{-1} (z^{-1} - \xi_{\mathrm{BP}})}{1 - \xi_{\mathrm{BP}} z^{-1}}, \mbox{ and } \xi_{\mathrm{BP}} = \cos(\omega_1). \notag
\end{align}
\end{enumerate}
\end{prop}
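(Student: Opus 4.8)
The plan is to treat all three parts uniformly, exploiting the two structural facts recalled in Section~\ref{sec:spec_transform}: that each $1/F(z)$ is a stable all-pass (unit) function, so the substitution $z^{-1}\gets 1/F(z)$ maps the unit circle onto itself and leaves the regions of stability and instability of $H$ unchanged; and that along this map the magnitude response is merely warped, $|G(e^{j\omega})| = |H(e^{j\theta})|$, where $e^{-j\theta} = 1/F(e^{j\omega})$ defines a monotone frequency warp $\theta = \theta(\omega)$. Granting this, proving each claim reduces to two elementary computations: (i) locating the composite cut-off or center by imposing the correct correspondence between a distinguished prototype frequency and $\omega_1$, which pins down $\xi$; and (ii) tracking the $M_1$ prototype zeros through the substitution.

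For step (i), I would set $z = e^{j\omega}$, $z_H = e^{j\theta}$, impose $e^{-j\theta} = 1/F(e^{j\omega})$, cross-multiply, and solve for $\xi$; multiplying numerator and denominator by $e^{j(\theta+\omega)/2}$ collapses the exponentials into sines and cosines of half-angle sums and differences. For the low-pass and high-pass cases the distinguished prototype frequency is the cut-off $\theta = \omega_0$, which must map to $\omega = \omega_1$; substituting reproduces $\xi_{\mathrm{LP}}$ and $\xi_{\mathrm{HP}}$ once the orientation of the warp is fixed consistently. For the band-pass case the distinguished prototype frequency is DC, $z_H^{-1} = 1$, since that is the center of the prototype passband; here I would verify directly that at $z = e^{j\omega_1}$ the choice $\xi_{\mathrm{BP}} = \cos\omega_1$ yields $1/F_{\mathrm{BP}}(e^{j\omega_1}) = 1$, using the identity $\sin\omega_1 + j\cos\omega_1 = j e^{-j\omega_1}$, so that $\omega_1$ is indeed the center frequency.

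For step (ii), the cleanest route is to factor $H(z) = (1 + z^{-1})^{M_1}\tilde H(z)$ with $\tilde H$ free of zeros at $z = -1$, and then simplify the induced factor $\big(1 + 1/F(z)\big)^{M_1}$ in each case. A short computation gives $1 + 1/F_{\mathrm{LP}}(z) = (1+\xi_{\mathrm{LP}})(1+z^{-1})/(1+\xi_{\mathrm{LP}} z^{-1})$, $1 + 1/F_{\mathrm{HP}}(z) = (1-\xi_{\mathrm{HP}})(1-z^{-1})/(1+\xi_{\mathrm{HP}} z^{-1})$, and $1 + 1/F_{\mathrm{BP}}(z) = (1-z^{-1})(1+z^{-1})/(1-\xi_{\mathrm{BP}} z^{-1})$. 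Reading off the numerator factors yields $M_1$ zeros at $z = -1$ for the low-pass, $M_1$ zeros at $z = 1$ for the high-pass, and $M_1$ zeros at each of $z = \pm 1$ for the band-pass, with multiplicities preserved; the denominators contribute only poles (inside the unit circle since $|\xi| < 1$), not zeros at $\pm 1$. The same factorization doubles as a consistency check on filter type: the prototype zeros at $z_H = -1$ sit at the prototype stopband edge $\theta = \pi$, and their images land exactly in the composite stopband ($z = -1$ for low-pass, $z = 1$ for high-pass, and both $z = \pm 1$, i.e.\ DC and Nyquist, for band-pass).

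I expect the main obstacle to be bookkeeping rather than anything deep: fixing the sign and orientation conventions so that the warp sends a low-pass prototype to a low-pass (not high-pass) composite and so that the extracted $\xi$ matches the stated closed forms exactly, and, in the band-pass case, correctly identifying DC as the reference point while handling the second-order (degree-two) map, whose all-pass character I would confirm by checking $|z^{-1}(z^{-1}-\xi_{\mathrm{BP}})| = |1 - \xi_{\mathrm{BP}} z^{-1}|$ on $|z| = 1$. I would also note in passing that admissibility of each unit function, i.e.\ its pole lying strictly inside the unit circle, requires $|\xi| < 1$, which holds on the relevant ranges of $\omega_0$ and $\omega_1$ and is what guarantees, via Section~\ref{sec:spec_transform}, that $G$ inherits stability from $H$.
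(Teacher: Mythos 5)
Your zero-tracking step coincides exactly with the paper's own proof: Appendix \ref{appx_prop_1} proves only part (b), factorizing $B(z) = B_1(z)B_2(z)$ with $B_1(z) = (1+z^{-1})^{M_1}$ and computing $B_1(F_{\mathrm{HP}}(z)) = (1-\xi_{\mathrm{HP}})^{M_1}(1-z^{-1})^{M_1}/(1+\xi_{\mathrm{HP}}z^{-1})^{M_1}$ --- which is precisely your identity $1 + 1/F_{\mathrm{HP}}(z) = (1-\xi_{\mathrm{HP}})(1-z^{-1})/(1+\xi_{\mathrm{HP}}z^{-1})$ raised to the $M_1$-th power --- and then asserts that parts (a) and (c) ``follow a similar approach.'' Where you genuinely diverge is in completeness rather than method: the paper never derives the $\xi$ formulas or verifies the cut-off/center correspondence at all; it merely remarks that $\xi_{\mathrm{HP}}$ ``depends on'' the two cut-off frequencies and outsources the frequency-warping facts to the cited Constantinides reference, just as it outsources the all-pass and stability claims to Section \ref{sec:spec_transform}. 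Your unit-circle warp computation, your direct verification that $1/F_{\mathrm{BP}}(e^{j\omega_1}) = 1$ with $\xi_{\mathrm{BP}} = \cos\omega_1$, and your $|\xi| < 1$ admissibility check therefore prove claims the proposition makes but the paper's proof leaves unproved; all three of your $1 + 1/F$ factorizations are correct, so the plan goes through. One concrete caution that vindicates your own flagged worry about orientation: for the high-pass case the warp is frequency-reversing, so the correct correspondence is $e^{+j\theta} = 1/F_{\mathrm{HP}}(e^{j\omega})$ (legitimate since $|H(e^{j\theta})| = |H(e^{-j\theta})|$ for real $H$), which yields the classical parameter $\xi = -\cos\bigl(\tfrac{\omega_0+\omega_1}{2}\bigr)/\cos\bigl(\tfrac{\omega_0-\omega_1}{2}\bigr)$, whereas imposing $e^{-j\theta} = 1/F_{\mathrm{HP}}(e^{j\omega})$ produces a parameter of magnitude greater than one. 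Carried out carefully, your derivation would thus reveal that the signs of $\xi_{\mathrm{LP}}$ and $\xi_{\mathrm{HP}}$ as printed in the proposition agree with the classical Constantinides values only up to this orientation convention --- a discrepancy the paper's own proof cannot surface, because it never performs the computation you propose.
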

\begin{proof}
See Appendix \ref{appx_prop_1}. \renewcommand{\qedsymbol}{}
\end{proof}

Three different categories of digital IIR filters satisfy Proposition \ref{lemma_1}: generalized digital Butterworth filter \cite{selesnic1996}, Chebyshev Type-I filter, and Chebyshev Type-II filter. The most important property of these filters is that there exists only one possible way to divide the number of zeros between $z = -1$ and the passband. Generalized Butterworth filters are a class of digital filters which have maximally flat response in the passband, i.e., no ripples, and their frequency response rolls off towards zero in the passband \cite{selesnic1996}. The orders of the numerator polynomial and denominator polynomials of these filters need not be equal. As a result of Proposition \ref{lemma_1}, the composite filter obtained by transforming a generalized digital Butterworth low-pass filter preserves its flatness. Further, Chebyshev Type-I and Chebyshev Type-II digital filters with odd filter orders also satisfy Proposition \ref{lemma_1} because the numerator polynomial can be uniquely factorized as the product of two polynomials, with one of them representing the number of zeros at $z = -1$. However, a flat frequency response is observed only in either the pass-band or stop-band, not in both, unlike the case of generalized digital Butterworth filters.

Using the results of Proposition \ref{lemma_1} and (\ref{eqn:spec_tf_2}), we get an internally balanced state-space representation of the composite filter $G(z)$ denoted by $(\bm{\mathcal{A}}_{\mathrm{f}},\bm{\mathcal{B}}_{\mathrm{f}},\bm{\mathcal{C}}_{\mathrm{f}},{\mathcal{D}}_{\mathrm{f}})$. The lower-triangular Toeplitz impulse response matrix of the composite filter, denoted by $\bm{G}_{\mathrm{f}}$, is obtained using (\ref{eqn:toeplitz_matrix}). The impulse response matrix is LTI and causal, and thus introduces phase distortions into the filtered signal. To avoid phase distortions introduced by the impulse response matrix $\bm{G}_{\mathrm{f}}$, we propose a simple approach to implement zero-phase noncausal digital filters as matrices. Our approach is based on forward-backward filtering \cite{gustafsson1996}. The resulting filter is zero-phase because it removes the phase distortions introduced by the filter. Further, the filter is noncausal because the backward filter depends on the future state vectors to compute the filter output.
\begin{prop}\label{lemma_2}
If the initial conditions of the forward and backward filter are set to zero, then $\bm{G}_{\mathrm{f}}^\tpose \bm{G}_{\mathrm{f}}$ represents a zero-phase impulse response matrix of the composite filter $G(z)$ with undesirable transients at the start and end of the signal. 
\end{prop}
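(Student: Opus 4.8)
The plan is to realize the zero-phase filter as a forward pass followed by a time-reversed backward pass and to identify the resulting linear operator explicitly. First I would fix notation: let $\bm{J} \in \mathbb{R}^{N \times N}$ be the exchange (anti-identity) matrix, with $J_{ab} = 1$ iff $a + b = N+1$, so that $\bm{J}\bm{v}$ reverses the order of the entries of any $\bm{v}$ and $\bm{J} = \bm{J}^\tpose = \bm{J}^{-1}$. With the initial state set to zero, the analogue of \eqref{eqn:forward_filter} for the composite filter gives the forward output $\bm{y} = \bm{G}_{\mathrm{f}}\bm{u}$, since the boundary term proportional to $\bm{s}(0)$ vanishes. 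The backward pass consists of reversing $\bm{y}$, filtering it through the same causal filter (again with zero initial state), and reversing the result, which in matrix form reads $\bm{z} = \bm{J}\,\bm{G}_{\mathrm{f}}\,\bm{J}\,\bm{y}$.

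The key algebraic step is the identity $\bm{J}\,\bm{G}_{\mathrm{f}}\,\bm{J} = \bm{G}_{\mathrm{f}}^\tpose$, which follows purely from the Toeplitz structure of $\bm{G}_{\mathrm{f}}$ in \eqref{eqn:toeplitz_matrix}. Writing $(\bm{G}_{\mathrm{f}})_{ij} = g_{i-j}$ for the impulse-response coefficients $g_m = \bm{\mathcal{C}}_{\mathrm{f}}\bm{\mathcal{A}}_{\mathrm{f}}^{m-1}\bm{\mathcal{B}}_{\mathrm{f}}$, a direct index computation gives $(\bm{J}\,\bm{G}_{\mathrm{f}}\,\bm{J})_{ij} = (\bm{G}_{\mathrm{f}})_{N+1-i,\,N+1-j} = g_{j-i} = (\bm{G}_{\mathrm{f}}^\tpose)_{ij}$, so conjugating a triangular Toeplitz matrix by $\bm{J}$ yields its transpose. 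Substituting $\bm{y} = \bm{G}_{\mathrm{f}}\bm{u}$ then gives $\bm{z} = \bm{G}_{\mathrm{f}}^\tpose \bm{G}_{\mathrm{f}}\bm{u}$, so the forward-backward cascade is exactly the operator $\bm{G}_{\mathrm{f}}^\tpose \bm{G}_{\mathrm{f}}$.

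To establish the zero-phase property I would argue in the frequency domain. Reversing a sequence, filtering by $G(z)$, and reversing again realizes the transfer function $G(1/z)$; since the coefficients are real, $G(1/z)$ restricted to the unit circle equals $\overline{G(e^{j\omega})}$, so the cascade has effective frequency response $G(e^{j\omega})\,\overline{G(e^{j\omega})} = |G(e^{j\omega})|^{2}$, which is real and nonnegative and therefore contributes zero phase while squaring the magnitude response of $G(z)$. The finite-dimensional counterpart of this statement is that $\bm{G}_{\mathrm{f}}^\tpose \bm{G}_{\mathrm{f}}$ is symmetric positive semidefinite, the matrix signature of an even (zero-phase) impulse response.

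The main obstacle, and the reason for the qualifier about transients, is that $\bm{G}_{\mathrm{f}}^\tpose \bm{G}_{\mathrm{f}}$ is symmetric but \emph{not} Toeplitz, whereas an exact two-sided zero-phase convolution operator is both. I would make this precise by comparing $\bm{G}_{\mathrm{f}}^\tpose \bm{G}_{\mathrm{f}}$ with the symmetric Toeplitz matrix generated by the deterministic autocorrelation $r(\ell) = \sum_{m} g_m\,g_{m+\ell}$, which is even in $\ell$ and is exactly the inverse transform of $|G(e^{j\omega})|^{2}$. Expanding $(\bm{G}_{\mathrm{f}}^\tpose \bm{G}_{\mathrm{f}})_{ij} = \sum_k g_{k-i}\,g_{k-j}$ shows that for indices far from the boundary the summation runs over the effectively full impulse response and reproduces $r(i-j)$, so the cascade acts as the ideal symmetric convolution in the interior; near the leading and trailing rows the sum is truncated by the finite window and the zero initial states, so the cascade cannot reproduce the two-sided symmetric kernel and the entries fall short of $r(i-j)$. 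These boundary distortions are the ``undesirable transients'' at the start and end. Quantifying their decay away from the corners, using that $g_m$ decays geometrically because $G(z)$ is stable (its modes lie strictly inside the unit circle, cf. Section~\ref{sec:spec_transform}), is the one genuinely technical point and is what justifies calling the edge effects transients rather than a global distortion.
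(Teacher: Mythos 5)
Your proposal is correct and arrives at the same operator, but by a genuinely different route from the paper. The paper works entirely in state space: it rewrites the forward recursion backwards in time to obtain a noncausal realization $(\bm{\mathcal{A}}_{\mathrm{f}}^{-1},\,\bm{\mathcal{A}}_{\mathrm{f}}^{-2}\bm{\mathcal{B}}_{\mathrm{f}},\,\bm{\mathcal{C}}_{\mathrm{f}},\,\mathcal{D}_{\mathrm{f}})$ of the backward pass, assembles its upper-triangular Toeplitz impulse-response matrix $\bm{G}_{\mathrm{b}}$, verifies entrywise that $\bm{G}_{\mathrm{b}} = \bm{G}_{\mathrm{f}}^\tpose$, and then reads off the cascade $\bm{y} = \bm{G}_{\mathrm{f}}^\tpose\bm{G}_{\mathrm{f}}\bm{u} + \bm{G}_{\mathrm{b}}\bm{O}_{\mathrm{f}}\bm{s}(0) + \bm{O}_{\mathrm{b}}^{\mathrm{C}}\bm{s}(N-1)$, with the boundary terms vanishing under zero initialization. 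You instead model the second pass as reverse--filter--reverse and reduce everything to the single algebraic identity $\bm{J}\bm{G}_{\mathrm{f}}\bm{J} = \bm{G}_{\mathrm{f}}^\tpose$, proved by a direct Toeplitz index computation. Your route is more elementary and slightly more general: it never inverts $\bm{\mathcal{A}}_{\mathrm{f}}$, so it does not require the state matrix to be nonsingular, whereas the paper's backward realization presupposes $\bm{\mathcal{A}}_{\mathrm{f}}^{-1}$ exists. What the paper's state-space computation buys is the explicit initial-state terms, which exhibit precisely which boundary contributions are being discarded and connect naturally to Gustafsson's optimal-initial-state construction and the preprocessing step of Section~\ref{sec:preproc_step}. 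Conversely, your closing analysis --- that $\bm{G}_{\mathrm{f}}^\tpose\bm{G}_{\mathrm{f}}$ is symmetric positive semidefinite but not Toeplitz, and that its entries $\sum_k g_{k-i}g_{k-j}$ reproduce the ideal autocorrelation kernel $r(i-j)$ only away from the boundary --- gives a sharper account of \emph{why} transients persist even with the exact operator than the paper, which merely asserts them. Your one uncompleted step (quantifying the geometric decay of the edge effects via stability of $G(z)$) is flagged honestly and is not needed for the proposition's qualitative claim, which the paper likewise does not quantify.
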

\begin{proof}
See Appendix \ref{appx_prop_2}. \renewcommand{\qedsymbol}{}
\end{proof}
It is important to note that the orders of the zero-phase low-pass and high-pass filters are twice that of the composite filter because the overall filter response of the zero-phase filter implemented using the forward-backward filtering approach \cite{gustafsson1996} is $G(z) G(1/z)$.

\textit{Example:} We first begin by designing a prototype digital IIR Butterworth low-pass filter $H(z)$. We use the \texttt{maxflat} function in MATLAB \cite{selesnic1996}\cite{MATLAB:2016} to design a maximally flat  or classical Butterworth filter of order $M = 2$, i.e., the orders of the numerator and denominator polynomials are equal. The cut-off frequency $\omega_0 = 0.1 \pi$ {radians}/{sample}. Based on Proposition \ref{lemma_1}, the cut-off frequency represents the half-power point, or $3 \mathrm{dB}$ point, of the composite low-pass and high-pass filter, or the band-width of the composite band-pass filter. We convert the transfer function $H(z)$ into a state-space representation ($\bm{A}_{\mathrm{f}}, \bm{B}_{\mathrm{f}}, \bm{C}_{\mathrm{f}}, D_{\mathrm{f}}$) using the \texttt{tf2ss} command in MATLAB, and apply internal balanced transformation \cite[Section II]{laub1987}. Depending on the type of frequency response (low-pass/high-pass/band-pass) and the required operating frequency $\omega_1$, the function $1/F(z)$ is selected and balanced internally (see Proposition \ref{lemma_1}). Next, using the equations in (\ref{eqn:spec_tf_2}), we obtain the state-space representation of $G(z)$, which is denoted as $(\bm{\mathcal{A}}_{\mathrm{f}},\bm{\mathcal{B}}_{\mathrm{f}},\bm{\mathcal{C}}_{\mathrm{f}},\mathcal{D}_{\mathrm{f}})$. Note that $G(z)$ is already internally balanced because $H(z)$ and $1/F(z)$ are internally balanced. Finally, we obtain the zero-phase impulse response matrix using (\ref{eqn:toeplitz_matrix}) and Proposition \ref{lemma_2}. 

In Fig. \ref{fig:non_causal_zero_phase}, we demonstrate the results of Proposition \ref{lemma_1} and \ref{lemma_2} for a prototype digital low-pass Butterworth filter $H(z)$. We make three important observations based on the impulse response plots in Fig. \ref{fig:non_causal_zero_phase}. First, we note that the impulse response is almost symmetric and noncausal. We use the word ``almost'' because the response to an impulse $\delta(n - n_0)$ is not strictly symmetric because of the finite length \cite{selesnick2014} and non-optimal initial condition of the state-space digital filter. Second, we observe, because the magnitude of the frequency response of a zero-phase filter is square of the magnitude of the original filter, the transition bands are steeper and the half-power points are now half-magnitude points. As a result, the points on the transition band whose magnitude $\in (0,1)$ are squared, thereby making the transition band steeper. Further, because the pass-band and stop-band of the generalized digital Butterworth filter are flat, the magnitude response of the zero-phase filters remains the same. Finally, the order of the zero-phase low-pass and high-pass filters are twice the order of the composite filter $G(z)$. Hence, the numbers of zeros at $z = 1$ for the high-pass and band-pass filter are twice that of the composite filter.
\vspace{-1em}
\subsection{Factorization of a Zero-Phase Impulse Response Matrix\label{sec:factorize_zero_phase_mats}}
In this subsection, we address the problem of incorporating discontinuities in the input signal as $K$-order sparse derivatives. When the input vector is multiplied with the zero-phase filter $\bm{G}_{\mathrm{f}}^\tpose \bm{G}_{\mathrm{f}}$, the resulting output is a filtered signal that depends on the frequency response of the filter. However, these filters over-smooth the discontinuities. In order to preserve the discontinuities, $\bm{G}_{\mathrm{f}}$ should be factorized as $\bm{G}_1 \bm{D}$, where $\bm{G}_1 \in \mathbb{R}^{N \times N-K}$ and $\bm{D} \in \mathbb{R}^{N-K \times N}$ is a $K$-order sparse derivative matrix. For example, if $K = 1$, then $\bm{D}$ can be written as
\begin{align}
\bm{D} = \begin{bmatrix} -1 &  1 &   & & \\
                            & -1 & 1 & & \\
                            &    & \ddots & \ddots & \\
                            &    &   & -1 & 1 
        \end{bmatrix}. \notag
\end{align}
In other words, the matrix $\bm{D}$, when multiplied with an input signal with discontinuities, outputs an $N-K \times 1$ sparse vector, whose non-zero elements represents the singularity points in the input signal.
\begin{table*}[t!]
\caption{Properties of IIR filters as Matrix Operators}\vspace{-0.5em}
\footnotesize
\centering
\label{tab:zero_phase_properties}
\def\arraystretch{1.0}
\begin{tabular}{l  p{12.75cm}}
\toprule
\toprule
Zero-Phase Filter Operator & Conditions \\ 
\midrule 
\midrule
\multirow{1}{*}{$\mathsf{LPF}_{\omega_1}( \bm{u} ) \triangleq \bm{L}^\tpose \bm{L} \bm{u}$} & $\bm{L}$ is the impulse response matrix of the composite low-pass filter $G(z)$ with cut-off frequency $\omega_1$. \\

\midrule 
\multirow{1}{*}{$\mathsf{HPF}_{\omega_1}( \bm{u} ) \triangleq \bm{H}^\tpose \bm{H} \bm{u} \approx \bm{H}^\tpose \bm{H}_1 \bm{D} \bm{u}$} & $\bm{H}$ is the impulse response matrix of the composite high-pass filter $G(z)$ with cut-off frequency $\omega_1$, $\bm{D}$ is the $K$-order sparse derivative matrix with $0 < K \le M_1$, and $\bm{H}_1$ is a factor of $\bm{H}$ obtained by solving (\ref{eqn:matrix_factor_opt_2}).\\  

\midrule 
\multirow{1}{*}{$\mathsf{HPF}_{\omega_1}( \bm{u} ) \triangleq  \{ \mathsf{I} - \mathsf{LPF}_{\omega_1} \} ( \bm{u} )$} & $\mathsf{I}$ is an Identity matrix of size $N\times N$. The degrees of the numerator and denominator polynomials of the composite filter $G(z)$ are equal and developed from the same prototype classical Butterworth low-pass filter \cite{selesnick2014,selesnick2015,selesnic2017}. \\

\midrule 
\multirow{1}{*}{$\mathsf{BPF}_{\omega_1}^{\omega_2} ( \bm{u} ) \triangleq \bm{B}^\tpose \bm{B} \bm{u} \approx \bm{B}^\tpose \bm{B}_1 \bm{D} \bm{u}$} & $\bm{B}$ is the impulse response matrix of the composite band-pass filter $G(z)$ with pass-band $(\omega_1,\omega_2)$, $\omega_1 < \omega_2$, and $\bm{B}_1$ is a factor of $\bm{B}$ obtained by solving (\ref{eqn:matrix_factor_opt_2}). \\ 

\bottomrule
\bottomrule
\end{tabular}
\end{table*}

Based on Proposition \ref{lemma_1}, the numerator polynomial of the composite high-pass or band-pass filters can be factorized as a product of two polynomial functions $B'_1(z) B'_2(z)$, where $B'_1(z)$ represents the number of zeros at $z = 1$. Because the maximum number of zeros at $z = 1$ is $M_1$, we can further factorize $B'_1(z)$ as $(1-z^{-1})^{M_1-K} (1-z^{-1})^{K}$, where $0 < K \le M_1$. Let $D(z) = (1-z^{-1})^{K}$, and then the transfer function of the composite filter can be rewritten as
\begin{align}
G(z) &= \frac{B''_2(z) D(z)}{A'(z)} = G_1(z) D(z), \label{eqn:poly_factorization} 
\end{align}
where $B''_2(z)$ is obtained by deconvolving $B'_1(z)B'_2(z)$ with $D(z)$. However, we cannot factorize $\bm{G}_{\mathrm{f}}$ as $\bm{G}_1 \bm{D}$, where $\bm{G}_1 \in \mathbb{R}^{N \times N-K}$ is a lower-triangular Toeplitz impulse response matrix of $G_1(z)$ and $\bm{D} \in \mathbb{R}^{N-K \times N}$, because $\bm{G}_1$ represents a partial impulse response matrix with only $N-K$ columns. Thus, the resulting product  $\bm{G}_{\mathrm{f}}^\tpose \bm{G}_1 \bm{D}$ is no longer zero-phase. To overcome this, we solve for $\bm{G}_1$ in the least squares sense by imposing zero-phase property on the product of $\bm{G}^\tpose_{\mathrm{f}} \bm{G}_1 \bm{D}$. We minimize the following objective function:
\begin{align}
\begin{split}
\arg \min_{\bm{G}_1} &\quad \norm{\bm{G}_{\mathrm{f}}^{\tpose} \bm{G}_{\mathrm{f}} - \bm{G}_{\mathrm{f}}^{\tpose} \bm{G}_1 \bm{D} }_{\mathrm{F}}^2. \label{eqn:matrix_factor_opt}
\end{split}
\end{align}
The optimization problem in (\ref{eqn:matrix_factor_opt}) is convex (see Appendix \ref{appx_a}) and has a closed-form solution. However, the closed-form solution requires the computation of the inverse of the lower-triangular Toeplitz matrix $\bm{G}_{\mathrm{f}}$, which is ill-conditioned, especially when $\mathcal{D}_\mathrm{f}$ (the diagonal element of the impulse response matrix $\bm{G}_\mathrm{f}$) is close to zero. Further, as the sample size increases, the closed-form solution requires computation and storage of the inverse of a large matrix of size $N^2 \times N^2$. To avoid these computationally expensive tasks, we propose an proximal gradient decent algorithm to solve (\ref{eqn:matrix_factor_opt}). We provide the details of our algorithm in Appendix \ref{appx_a}. Our proposed method is inspired by the fast iterative shrinkage-thresholding algorithm (FISTA) \cite{beck2009}. Note that the optimization problem in (\ref{eqn:matrix_factor_opt}) does not impose any constraint on $\bm{G}_1$. 

Ideally, $\bm{G}_1$ takes the form of a lower-triangular Toeplitz matrix structure, which can be imposed as a set of linear constraints in the optimization problem (\ref{eqn:matrix_factor_opt}). However, it will lead only to an overdetermined set of equations with tight constraints. Therefore, we relax the constraint so that the lower-triangular matrix structure of $\bm{G}_1$ is preserved and not the Toeplitz structure. Thus, the optimization problem in (\ref{eqn:matrix_factor_opt}) can be formulated as
\begin{align}
\begin{split}
\arg \min_{\bm{G}_1} &\quad \norm{\bm{G}_{\mathrm{f}}^{\tpose} \bm{G}_{\mathrm{f}} - \bm{G}_{\mathrm{f}}^{\tpose} \bm{G}_1 \bm{D} }_{\mathrm{F}}^2, \\
\mathrm{subject \ to}       &\quad \mathrm{tril}(\bm{G}_1), \label{eqn:matrix_factor_opt_2}
\end{split}
\end{align}
where $\mathrm{tril}$ applies a lower-triangular matrix constraint on $\bm{G}_1$. The lower-triangular matrix constraint in (\ref{eqn:matrix_factor_opt_2}) can also be formulated as a set of equality constraints. Thus, the optimization problem in (\ref{eqn:matrix_factor_opt_2}) is a quadratic program with linear equality constraints. Because no matrix inversion step is required in solving (\ref{eqn:matrix_factor_opt_2}), we can design higher order filters as long as we can compute Gramian preserving transformations.
\begin{table}[t!]
\caption{Performance Metrics of Zero-Phase Filters} \vspace{-0.5em}
\footnotesize
\centering
\label{tab:performance_metrics}
\setlength\tabcolsep{2.0pt}
\def\arraystretch{.90}
\begin{tabular}{c  c  c  c  c}
\toprule
\toprule
Length & Sparsity & Error & Filt. Norm & Filt. Norm \cite{selesnic2017} \\
$N$ & $K$ & $\norm{\bm{G}_{\mathrm{f}}^{\tpose} \bm{G}_{\mathrm{f}} - \bm{G}_{\mathrm{f}}^{\tpose} \bm{G}_1 \bm{D} }_{\mathrm{F}}^2$ & $\norm{\bm{G}_{\mathrm{f}}^{\tpose} \bm{G}_1 \bm{h}}$ & $\norm{\bm{A}^{-1} \bm{B}_1 \bm{h}}$ \\
\midrule
\midrule
\multirow{2}{*}{$N = 100$} & $K = 1$ & 0.0497 & 0.6384 & 0.6388 \\
						   & $K = 2$ & 0.2044 & 0.6515 & 0.6512 \\
\midrule
\multirow{2}{*}{$N = 500$} & $K = 1$ & 0.0389 & 0.6388 & 0.6388 \\
						   & $K = 2$ & 0.1992 & 0.6512 & 0.6512 \\
\midrule
\multirow{2}{*}{$N = 1000$} & $K = 1$ & 0.0389 & 0.6388 & 0.6388 \\
						    & $K = 2$ & 0.1992 & 0.6512 & 0.6512 \\
\bottomrule
\bottomrule
\end{tabular}
\end{table}
In Table \ref{tab:zero_phase_properties}, we summarize the properties of the zero-phase filters as matrix operators developed in this section. We denote low-pass filtering, high-pass filtering, and band-pass filtering, by $\mathsf{LPF}$, $\mathsf{HPF}$, and $\mathsf{BPF}$, respectively. The subscripts and superscripts indicate the half-power cut-off frequency points.

\textit{Example (Cont.):} In Table \ref{tab:performance_metrics}, we use the proposed filter designs in Section \ref{sec:filt_as_mats} and \ref{sec:factorize_zero_phase_mats}, and compute various performance metrics that allow us to compare our proposed approach of designing zero-phase filters as matrices with an existing method based on recursive filters, proposed in \cite{selesnic2017}. We design zero-phase high-pass filters with a cut-off frequency $0.2 \pi$ {radians}/{second} and degrees of the numerator and denominator polynomials $M = 4$, as shown in Fig. \ref{fig:1b_lp_nczp}, for different values of sample size $N$ and sparse derivative order $K$. We compute the value of the cost function obtained by solving the matrix factorization optimization problem in (\ref{eqn:matrix_factor_opt_2}). As can be seen from Table \ref{tab:performance_metrics}, the Frobenius norm in (\ref{eqn:matrix_factor_opt_2}) increases with $K$ due to the overdetermined nature of the matrix factorization problem in (\ref{eqn:matrix_factor_opt_2}). In most practical applications that perform signal smoothing or denoising, $K \in (0,2]$. In addition, we also compute the filter norms of the zero-phase high-pass filters in columns four and five of Table \ref{tab:performance_metrics}, where $\bm{h}$ denotes an impulse vector and the impulse is located at the center to avoid transients (see Fig. \ref{fig:1b_lp_nczp}). As the value of $N$ increases, we notice that the filter norms obtained by the proposed and existing method of designing zero-phase filters converge.
\subsection{Preprocessing Step\label{sec:preproc_step}}
The proposed zero-phase filters introduce undesirable transients at the start and end of the signal when the initial state vectors of the recursive filter are initialized to zero (see Proposition \ref{lemma_2}). To remove the effect of undesirable transients, we introduce a preprocessing step. Our approach is inspired by the preprocessing method introduced in \cite[Section 4.4]{selesnick2015}. In the preprocessing step, we pad the input signal of sample size $N$ with $P$ samples of preprocessed data at the start and end of the input signal. The size of $P$ depends on the sampling rate of the input signal. In our work, we choose $P$ as one-fifth of the sampling rate. Further, $P$ samples of padding data at the start and end of the input signal are obtained by using a polynomial fit of the first $P$ and last $P$ samples of the input signal, respectively. Then, the approximate polynomial used for extrapolating the input signal at the start and end, with $P$ samples. The degree of the polynomial approximation depends on the nature of the input signal. The $P$ padded samples at the start and end of the signal are removed after filtering. We avoid a simpler approach, such as zero-padding of the input signal, to escape the abrupt transients that are introduced by zero-padding.
\section{Signal Denoising and Pattern Recognition\label{sec:sig_sdpr}}
{\dred
In this section, we develop various signal models for signal denoising and pattern recognition. We apply our proposed filter designs to an existing signal model \cite{selesnick2014,selesnick2015,selesnic2017} and demonstrate the robustness of our filter designs using real and simulated data. We also propose two new signal models using our proposed zero-phase narrow band-pass filter to simultaneously denoise and detect patterns of interest. We illustrate the capabilities of the proposed frameworks using sleep-electroencephalography data to detect K-complexes and sleep spindles. All algorithms are evaluated on a Windows computer (2.7 GHz Intel Core i7) running MATLAB 2016b, unless otherwise stated explicitly.
}
\vspace{-0.5em}
\subsection{Sparsity-Assisted Signal Denoising \label{subsec:sig_denoise}}
{\dred
In this subsection, we validate the sparsity-assisted signal smoothing signal model \cite{selesnick2014,selesnick2015,selesnic2017} using our proposed filter designs. The details of implementing the SASD and its performance are presented in Appendix \ref{sec:sig_denoising}. In particular, we use illustrative examples employing simulated and real data to demonstrate the robustness of our proposed filter designs.
}
\vspace{-0.5em}
\subsection{Sparsity-Assisted Pattern Recognition \label{subsec:pattern_reco}}
Let $\bm{y}$ denote the noisy measured signal, which is written as the sum of three components. The first component is a low-frequency signal $\bm{x}_1$ with cut-off frequency $\omega_0$; the second is a band-limited signal $\bm{x}_2$ in the frequency band $[\omega_1,\omega_2]$, where $\omega_0 \le \omega_1 < \omega_2$ describes a pattern that is wavelet-shaped; and the third component is residue, which is not necessarily additive white Gaussian. Our goal is to detect patterns of interest in the input signal, i.e., wavelet-shaped components of the band-limited signal $\bm{x}_2$. We begin by modeling the noisy measured signal as
\begin{align}
\bm{y} = \bm{x}_1 + \bm{x}_2 + \bm{w}, \label{eqn:band_signal_model}
\end{align}
where $\bm{w}$ is the residual signal. Let $\hat{\bm{x}}_1$ and $\hat{\bm{x}}_2$ denote approximate estimates of $\bm{x}_1$ and $\bm{x}_2$, respectively. Given an estimate of $\bm{x}_2$, we can estimate $\bm{x}_1$ as
\begin{align}
\hat{\bm{x}}_1 &:= \mathsf{LPF}_{\omega_0}(\bm{y} - \hat{\bm{x}}_2), \label{eqn:band_lpf_apprx}
\end{align}
where $\mathsf{LPF}_{\omega_0}(\cdot)$ is the specified zero-phase low-pass impulse response matrix operator. If estimates of $\hat{\bm{x}}_2$ is known, then we can write the estimate of $\hat{\bm{x}}$ as
\begin{align}
\hat{\bm{x}} &= \hat{\bm{x}}_1 + \hat{\bm{x}}_2 \notag \\
         &= \mathsf{LPF}_{\omega_0}(\bm{y} - \hat{\bm{x}}_2) + \hat{\bm{x}}_2 \notag \\
         &= \mathsf{LPF}_{\omega_0} (\bm{y}) + \mathsf{HPF}_{\omega_0} (\hat{\bm{x}}_2) \notag \\
         &\approx \mathsf{LPF}_{\omega_0} (\bm{y}) + \mathsf{BPF}_{\omega_1}^{\omega_2} (\hat{\bm{x}}_2), \label{eqn:appx_band_signal_model}
\end{align}
where in (\ref{eqn:appx_band_signal_model}) we used the prior knowledge of the signal component of interest and limited the continuous high-pass region into a band-pass region where the signal of interest resides. To model the wavelet-shaped signal of interest in $\bm{x}_2$, we use windowed discrete wavelet transform ($\mathrm{WDWT}$). The $\mathrm{WDWT}$ coefficients, denoted by $\bm{k} \in \mathbb{R}^{W \times V}$, depend on the window length, windows overlapping factor, and the number of levels of the wavelet decomposition. In our work, we define $\boldsymbol\varPsi: \mathbb{R}^{W \times V} \to \mathbb{R}^{N}$ (the synthesis equation of $\mathrm{WDWT}$) as
\begin{align}
\boldsymbol\varPsi \bm{k} \triangleq \mathrm{WDWT}^{-1}(\bm{k}), \label{eqn:k_complex}
\end{align}
whereas $\boldsymbol\varPsi^\tpose: \mathbb{R}^{N} \to \mathbb{R}^{W \times V}$ (the analysis equation of $\mathrm{WDWT}$) is defined as
\begin{align}
\boldsymbol\varPsi^\tpose \bm{y} \triangleq \mathrm{WDWT}(\bm{y}). \label{eqn:inv_k_complex}
\end{align}
In addition, the $\mathrm{WDWT}$ satisfies a generalized version of Parseval's identity\cite{akansu1993,mallat2008wavelet}, i.e., $\norm{\boldsymbol\varPsi \bm{k}} = \norm{\bm{y}}$. Using (\ref{eqn:k_complex}) in (\ref{eqn:appx_band_signal_model}), we get
\begin{align}
\hat{\bm{x}} &\approx \mathsf{LPF}_{\omega_0} (\bm{y}) + \mathsf{BPF}_{\omega_1}^{\omega_2} (\boldsymbol\varPsi \bm{k}), \notag \\
       &= \bm{L}^\tpose \bm{L} \bm{y} + \bm{B}^\tpose \bm{B} \boldsymbol\varPsi \bm{k}, \notag 
\end{align}
where $\bm{L}^\tpose \bm{L}$ and $\bm{B}^\tpose \bm{B}$ are the zero-phase filters representing $\mathsf{LPF}_{\omega_0}$ and $\mathsf{BPF}_{\omega_1}^{\omega_2}$, respectively. In order to detect the signal patterns of interest, we construct a suitable cost function, expressed as
\begin{align}
\arg \min_{\bm{k}} & \Bigl\{ \frac{1}{2} \norm{ \bm{y} - \bm{L}^\tpose \bm{L} \bm{y} - \bm{B}^\tpose \bm{B} \boldsymbol\varPsi \bm{k} }_2^2   \Bigr\}. \label{eqn:cs_l1_min_cf_1}
\end{align}
Because the orders of the numerators and denominators polynomials of the composite low-pass filter are of equal, we can further simplify (\ref{eqn:l1_min_cf_1}) using the identity $\bm{I} - \bm{L}^\tpose \bm{L} = \bm{H}^\tpose \bm{H}$. {\dred In addition, we impose sparsity on the wavelet coefficients and the first-order difference of the reconstructed signal $\boldsymbol\varPsi \bm{k}$. Imposing sparsity on the wavelet coefficients allows the coefficients representing the signal pattern of interest dominate and the remaining coefficients are set to zero, whereas imposing sparsity on the first order difference of the reconstructed signal allows the separation of two or more K-complexes that appear close to each other. Therefore, (\ref{eqn:cs_l1_min_cf_1}) can be rewritten as
\begin{align}
\hspace{-1em}\arg \min_{\bm{k}} & \Bigl\{ \frac{1}{2} \norm{ \bm{H}^\tpose \bm{H} \bm{y} - \bm{B}^\tpose \bm{B} \boldsymbol\varPsi \bm{k} }_2^2 +  \notag \\
& \qquad \qquad \qquad \lambda_0 \norm{\bm{k}}_1 + \lambda_1 \norm{\bm{D} \boldsymbol\varPsi \bm{k}}_1 \Bigr\}. \label{eqn:cs_l1_min_cf_2}
\end{align}
}The optimization problem in (\ref{eqn:cs_l1_min_cf_2}) is convex. In our work, we solve (\ref{eqn:cs_l1_min_cf_2}) using the alternating direction method of multipliers (ADMM) \cite[Chapter 3]{boyd2011}. We call our proposed algorithm as sparsity-assisted pattern recognition (SAPR). The details of the SAPR algorithm are listed in Appendix \ref{appx_b}. On solving (\ref{eqn:cs_l1_min_cf_2}), we get estimate of $\bm{k}$. The regions with patterns of interest in the input signal can be obtained by using simple energy-based thresholding methods on the band-pass filtered estimate $\bm{B}^\tpose \bm{B} \boldsymbol\varPsi \bm{k}$.
\subsubsection{Example}
\begin{figure}[t!]
\centering
\includegraphics[width=0.48\textwidth, height=0.47\textwidth]{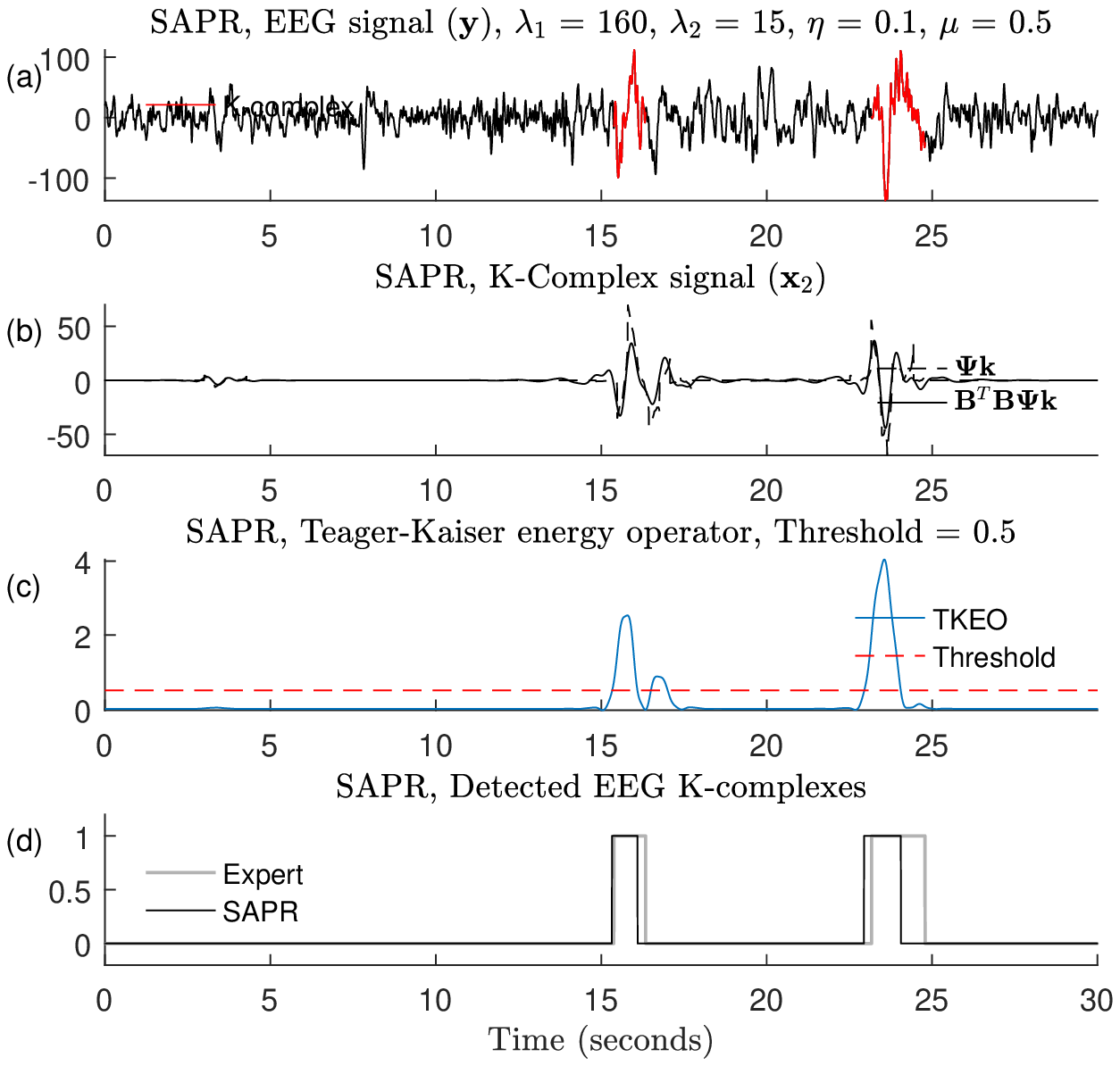}
\caption{Pattern recognition. (a) 30 second epoch of sleep-EEG data obtained from \texttt{excerpt4.edf}. The epoch consists of two K-complexes as identified by two experts. (b) Reconstructed K-complex signal $\boldsymbol\varPsi \bm{k}$ and its corresponding band-pass filtered component $\bm{B} \bm{B}^\tpose \boldsymbol\varPsi \bm{k}$ obtained using the SAPR method. (c) Signal obtained by applying the Teager-Kaiser energy operator on $\bm{B} \bm{B}^\tpose \boldsymbol\varPsi \bm{k}$. (d) Expert and algorithm annotated K-complex regions \label{fig:kcomplex}}
\end{figure}
We provide an example of detecting specific pattern of interest in sleep-EEG data, using the method described in Section \ref{subsec:pattern_reco}.  Based on the sleep scoring guidelines set by the American Academy of Sleep Medicine (AASM), human sleep can be broadly divided into two stages: rapid eye movement (REM) and non-rapid eye movement (NREM) \cite{berry2012aasm}. The NREM stage is further divided into three stages: N1, N2, and N3 \cite{berry2012aasm}. K-complexes and sleep spindles constitute physiological markers of the NREM stage of sleep. The AASM guidelines defines K-complex as ``a well-delineated, negative, sharp wave immediately followed by a positive component standing out from the background EEG, with total duration $\ge$ 0.5 seconds, usually maximal in amplitude when recorded using frontal derivations'' \cite{berry2012aasm}. These K-complexes appear within the frequency range of 0.5-2.0 Hz \cite{weigenand2014}. Current ``gold standard'' sleep staging and K-complex detection are by visual scoring by trained experts; however, this method is cumbersome and is subject to error with inter-scorer agreement of 82\% considered acceptable \cite{hopfe2009}. Our goal is to automatically detect K-complexes in sleep-EEG data using the SAPR algorithm in (\ref{eqn:cs_l1_min_cf_2}), for accurate and rapid EEG processing.

{\dred As SAPR is applicable only in batch-processing mode, we consider a fixed window of $30$ seconds. The size of the window is determined based on an established scoring criteria \cite{berry2012aasm} used by experts when scoring the K-complexes manually. In Fig. \ref{fig:kcomplex}(a), we plot $30$ seconds of sleep-EEG data obtained from the C3-A1 channels of the DREAMS database \cite{devuyst2010}. Each dataset contains $30$ minutes of sleep-EEG data sampled at $f_s = 200$ Hz, i.e., $60$ epochs of sleep-EEG data with each epoch of length $30$ seconds or $N=6000$. The ``true'' regions of K-complexes in the 30 second sleep-EEG epoch, as annotated by experts, are shown as red curves in the first plot of Fig. \ref{fig:kcomplex}(a). To design the $\mathrm{WDWT}$ in (\ref{eqn:k_complex}) and (\ref{eqn:inv_k_complex}), we select a Daubechies wavelet (\texttt{db2} or \texttt{D4}) as the mother wavelet because it closely resembles with the shape of a K-complex signal. We use a window length of the next highest power of the sampling rate $f_s$, expressed as a power of 2, i.e., $W = 2^8 = 256$. A window of length $W$, where $W$ is a power of 2, gives $\log_2 W$ levels of wavelet coefficients. In addition, we use 75\% overlap between the windows to generate an over-complete dictionary $\bm{k}$. We begin by designing a narrow band-pass filter as matrix which spread across the frequency range of the signal of interest. To detect K-complexes, we choose $\omega_1 = 0.006 \pi$ rads/s (equivalent to $0.6$ Hz), $\omega_2 = 0.02 \pi$ rads/s (equivalent to $2$ Hz), orders of the filter $M = 4$, and design a zero-phase narrow band-pass filter, denoted by $\bm{B}^\tpose \bm{B}$. We set the high-pass filter cutoff frequency $\omega_0 = 0.006\pi$ rads/s (equivalent to $0.6$ Hz), and design a zero-phase high-pass filter denoted by $\bm{H}^\tpose \bm{H}$.

\begin{table*}[t!]
\centering \footnotesize
\setlength\tabcolsep{4.9pt} 
\def\arraystretch{0.65}
\caption{Performance evaluation of DETOKS\cite{parekh2015} and SAPR for K-complex detection. \label{tab:kcomplex}}
\begin{tabular}{c c cccccccccc}
\toprule
\toprule
\multirow{2}{*}{Dataset \cite{devuyst2010}} & {Cohen's $\kappa$} & \multicolumn{2}{c}{{F1 Score}} & \multicolumn{2}{c}{{Cohen's $\kappa$}} & \multicolumn{2}{c}{{Events Detected}} & \multicolumn{2}{c}{{False Detections}} & \multicolumn{2}{c}{{Computation Time (sec)}}\\
\cmidrule(lr){2-2} \cmidrule(lr){3-4} \cmidrule(lr){5-6} \cmidrule(lr){7-8} \cmidrule(lr){9-10} \cmidrule(lr){11-12}
        & Between Experts & DETOKS & SAPR  & DETOKS & SAPR & DETOKS & SAPR & DETOKS & SAPR & DETOKS & SAPR\\
\midrule
\midrule
{excerpt1}       & 0.206 & 0.378  & 0.375  & 0.354 & 0.396  & 31/45  & 27/45   & 57  & 46  & 0.456 $\pm$ 0.07 & 4.516 $\pm$ 0.44 \\
{excerpt2}       & 0.200 & 0.627  & 0.616  & 0.615 & 0.627  & 36/45  & 38/45   & 21  & 16  & 0.475 $\pm$ 0.04 & 4.957 $\pm$ 0.65 \\
{excerpt3}       & 0.233 & 0.486  & 0.492  & 0.481 & 0.496  & 10/12  & 9/12    & 11  & 4   & 0.471 $\pm$ 0.10 & 4.695 $\pm$ 0.70 \\
{excerpt4}       & 0.098 & 0.484  & 0.512  & 0.447 & 0.543  & 56/81  & 65/81   & 58  & 53  & 0.423 $\pm$ 0.11 & 4.115 $\pm$ 0.65 \\
{excerpt5}       & 0.304 & 0.437  & 0.519  & 0.415 & 0.536  & 37/45  & 37/45   & 48  & 40  & 0.463 $\pm$ 0.17 & 4.532 $\pm$ 0.52 \\
\midrule
{Average}        & 0.208 & 0.482  & 0.502  & 0.462 & 0.519  & 170/228& 176/228 & 195 & 159 & 0.457 $\pm$ 0.09 & 4.563 $\pm$ 0.59 \\
\bottomrule
\bottomrule
\end{tabular}
\vspace{1ex}

\raggedright
\footnotesize{DETOKS: Detection of K-complexes and sleep spindles; SAPR: Sparsity-assisted pattern recognition.}
\end{table*}
In Appendix \ref{sec:param_kcomplex}, we develop a methodology to determine the regularization parameters $\lambda_0$ and $\lambda_1$, and the rate of convergence parameters $\mu$ and $\eta$ of the SAPR algorithm. We set $\lambda_0 = 160$ and $\lambda_1 = 15$, $\mu = 0.5$, and $\eta = 0.1$ in our work. Note that the parameters $\mu$ and $\eta$ only affect the rate of convergence of the SAPR algorithm and not the final value of the cost function. On solving (\ref{eqn:cs_l1_min_cf_2}), we get an estimate of $\bm{B}^\tpose \bm{B} \boldsymbol\varPsi\bm{k}$ which contain information about the wavelet-like pattern of interest as shown in Fig. \ref{fig:kcomplex}(b). To detect K-complexes, we apply the Teager-Kaiser energy operator ($\mathrm{TKEO}$) \cite{kaiser1993} to estimate the instantaneous energy present in $\bm{B}^\tpose \bm{B} \boldsymbol\varPsi\bm{k}$ as shown in Fig. \ref{fig:kcomplex}(c). We select regions of $\mathrm{TKEO}(\cdot)$ where the instantaneous energy is greater than a fixed threshold value $0.5$. We allow the minimum and maximum duration of a detected K-complex to be $0.5$ and $2.25$ seconds, respectively. The lower threshold of the duration of the K-complex is determined based on the definition of the K-complex whereas the upper threshold of $2.25$ seconds is used to minimize the number of false detections caused by slow wave activity which belongs to the same frequency band as the K-complex signal but occur in multiples (see Fig. \ref{fig:kcomplex_sapr} in Appendix \ref{sec:param_kcomplex} for an illustrative example). As can be seen in Fig. \ref{fig:kcomplex}(c), the SAPR algorithm can separate K-complex like patterns that appear close to each other because of the additional sparsity inducing term $\norm{\bm{D} \boldsymbol\varPsi \bm{k}}_1$ (see Fig. \ref{fig:kcomplex_detoks} in Appendix \ref{sec:param_kcomplex} for the output of the DETOKS algorithm). To further minimize the number of false detections introduced by the slow wave activity, the SAPR algorithm only selects the first peak if two or more K-complex like patterns appear within $1.5$ seconds duration. If Fig. \ref{fig:kcomplex}(c), the SAPR algorithm rejects the second peak detected near the $15$ second interval of the sleep-EEG data because it appears within $1.5$ second interval of the first peak. Finally, in Fig. \ref{fig:kcomplex}(d), we plot the K-complex regions annotated by the experts and the regions detected using SAPR algorithm. As can be seen, the SAPR algorithm detects the expert annotated K-complex regions accurately.

In Table \ref{tab:kcomplex}, we evaluate the performance of the SAPR algorithm for the K-complex EEG dataset \cite{devuyst2010} using various performance measures. These measures include the F1-score (harmonic mean of accuracy and recall computed across all sample points), Cohen's $\kappa$ (agreement between the experts and algorithm detected K-complex intervals across all sample point), number of K-complex events detected i.e., the number of overlapping intervals between events detected by the experts and algorithm, and number of false detections, i.e., the number of non-overlapping regions between the events detected by experts the and algorithm. Note that an event consists of a set of sample points. The K-complex EEG repository consists of ten datasets each of $30$ minutes duration \cite{devuyst2010}. Only five of the ten datasets were scored independently by two experts. In our work, we considered only those datasets that were scored by two or more experts because the average value of the Cohen's $\kappa$ coefficient for the inter-rater manual scoring is $0.208$, which is low. We compare the performance of our proposed method with the DETOKS algorithm \cite{parekh2015}. In DETOKS \cite{parekh2015}, to detect K-complexes, the TKEO is applied to the low-frequency signal ($< 2$ Hz) which is obtained by removing the transient and oscillatory signal components. We use the same threshold and regularization parameters as mentioned in \cite{parekh2015} because we are evaluating the SAPR and DETOKS for the same database \cite{devuyst2010}. As can be seen in Table \ref{tab:kcomplex}, the SAPR algorithm outperforms the DETKOS algorithm in all measures except for the computation time because the DETOKS algorithm employs recursive sparse banded matrices as zero-phase filters \cite{selesnick2014,selesnick2015,selesnic2017}.}
\vspace{-1.5em}
\subsection{\dred Sparsity-Assisted Signal Denoising and Pattern  \\ Recognition \label{sec:sig_sdpr}}
Let $\bm{y}$ denote the noisy measured signal, which is written as the sum of four components. The first component is a low-frequency signal $\bm{x}_1$ with cut-off frequency $\omega_0$; the second is a band-limited signal $\bm{x}_2$ in the frequency band $[\omega_1,\omega_2]$, where $\omega_0 \le \omega_1 < \omega_2$ describes an oscillatory pattern; $\bm{x}_3$ is the sparse signal with sparse first-order derivative; and the fourth component is residue, which is not necessarily additive white Gaussian.
\setlength{\abovedisplayskip}{2.5pt} \setlength{\abovedisplayshortskip}{2.5pt}
\begin{align}
\bm{y} = \bm{x}_1 + \bm{x}_2 + \bm{x}_3 + \bm{w}, \label{eqn:spin_signal_model}
\end{align}
where $\bm{w}$ is the residual signal. Let $\hat{\bm{x}}_1$, $\hat{\bm{x}}_2$, and $\hat{\bm{x}}_3$ denote approximate estimates of $\bm{x}_1$, $\bm{x}_2$, and $\bm{x}_3$, respectively. Given an estimate of $\bm{x}_2$ and $\bm{x}_3$, we can estimate $\bm{x}_1$ as
\begin{align}
\hat{\bm{x}}_1 &:= \mathsf{LPF}_{\omega_0}(\bm{y} - \hat{\bm{x}}_2 - \hat{\bm{x}}_3), \label{eqn:band_lpf_apprx}
\end{align}
where $\mathsf{LPF}_{\omega_0}(\cdot)$ is the specified zero-phase low-pass impulse response matrix operator. If estimates of $\hat{\bm{x}}_2$ and $\hat{\bm{x}}_3$ are known, then we can write the estimate of $\hat{\bm{x}}$ as
\setlength{\abovedisplayskip}{2.5pt} \setlength{\abovedisplayshortskip}{2.5pt}
\begin{align}
\hat{\bm{x}} &= \hat{\bm{x}}_1 + \hat{\bm{x}}_2 + \hat{\bm{x}}_3 \notag \\
         &= \mathsf{LPF}_{\omega_0}(\bm{y} - \hat{\bm{x}}_2 - \hat{\bm{x}}_3) + \hat{\bm{x}}_2 + \hat{\bm{x}}_3 \notag \\
         &= \mathsf{LPF}_{\omega_0} (\bm{y}) + \{ \mathsf{I} - \mathsf{LPF}_{\omega_0} \} (\hat{\bm{x}}_2 + \hat{\bm{x}}_3) \notag \\
         &= \mathsf{LPF}_{\omega_0} (\bm{y}) + \mathsf{HPF}_{\omega_0} (\hat{\bm{x}}_2) + \mathsf{HPF}_{\omega_0} (\hat{\bm{x}}_3) \notag \\
         &\approx \mathsf{LPF}_{\omega_0} (\bm{y}) + \mathsf{BPF}_{\omega_1}^{\omega_2} (\hat{\bm{x}}_2) + \mathsf{HPF}_{\omega_0} (\hat{\bm{x}}_3), \label{eqn:appx_spin_signal_model}
\end{align}
where in (\ref{eqn:appx_spin_signal_model}) we used the prior knowledge of the oscillatory signal of interest. To model the oscillatory behavior of $\bm{x}_2$, we use short-time Fourier transform ($\mathrm{STFT}$). The $\mathrm{STFT}$ coefficients, denoted by $\bm{c} \in \mathbb{C}^{W \times V}$, depend on the window length, $\mathrm{STFT}$ window overlapping factor, and the length of discrete Fourier transform ($\mathrm{DFT}$). We define $\boldsymbol\varPhi: \mathbb{C}^{W \times V} \to \mathbb{R}^{N}$ (the synthesis equation of $\mathrm{STFT}$) as
\begin{align}
\boldsymbol\varPhi \bm{c} \triangleq \mathrm{STFT}^{-1}(\bm{c}), \label{eqn:sleep_spindle}
\end{align}
whereas $\boldsymbol\varPhi^\hermi: \mathbb{R}^{N} \to \mathbb{C}^{W \times V}$ (the analysis equation of $\mathrm{STFT}$) is defined as
\begin{align}
\boldsymbol\varPhi^\hermi \bm{y} \triangleq \mathrm{STFT}(\bm{y}). \label{eqn:inv_sleep_spindle}
\end{align}
Note that for a sine window, the $\mathrm{STFT}$ satisfies a generalized version of Parseval's identity, i.e., $\norm{\boldsymbol\varPhi \bm{c}} = \norm{\bm{y}}$. Using (\ref{eqn:sleep_spindle}) in (\ref{eqn:appx_spin_signal_model}), we get
\begin{align}
\hat{\bm{x}} &\approx \mathsf{LPF}_{\omega_0} (\bm{y}) + \mathsf{BPF}_{\omega_1}^{\omega_2} (\boldsymbol\varPhi \bm{c}) + \mathsf{HPF}_{\omega_0} (\hat{\bm{x}}_3), \notag \\
       &= \bm{L}^\tpose \bm{L} \bm{y} + \bm{B}^\tpose \bm{B} \boldsymbol\varPhi \bm{c} + \bm{H}^\tpose \bm{H} \bm{x}_3, \notag 
\end{align}
where $\bm{L}^\tpose \bm{L}$, $\bm{H}^\tpose \bm{H}$, and $\bm{B}^\tpose \bm{B}$, are the zero-phase filters representing $\mathsf{LPF}_{\omega_0}$, $\mathsf{HPF}_{\omega_0}$, and $\mathsf{BPF}_{\omega_1}^{\omega_2}$, respectively. In order to detect the signal patterns of interest, we construct a suitable cost function, expressed as
\begin{align}
\hspace{-0.7em}\arg \min_{\bm{c}, \bm{x}_3} &\quad \biggl\{ \frac{1}{2} \norm{ \bm{y} - \bm{L}^\tpose \bm{L} \bm{y} - \bm{B}^\tpose \bm{B} \boldsymbol\varPhi \bm{c} - \bm{H}^\tpose \bm{H} \bm{x}_3 }_2^2 \biggr\} \label{eqn:spin_l1_min_cf_1}.
\end{align}
Because the orders of the numerators and denominators polynomials of the composite low-pass filter are of equal, we can further simplify (\ref{eqn:spin_l1_min_cf_1}) using the identity $\bm{I} - \bm{L}^\tpose \bm{L} = \bm{H}^\tpose \bm{H}$. In addition, we can also impose sparsity on the Fourier coefficients $\bm{c}$, the signal $\bm{x}_3$ and its derivative. Therefore, (\ref{eqn:spin_l1_min_cf_1}) can be rewritten as
\setlength{\abovedisplayskip}{2.5pt} \setlength{\abovedisplayshortskip}{2.5pt}
\begin{align}
\arg \min_{\bm{c}, \bm{x}_3} &\quad \biggl\{ \frac{1}{2} \norm{ \bm{H}^\tpose \bm{H} \left( \bm{y} - \bm{x}_3 \right) - \bm{B}^\tpose \bm{B} \boldsymbol\varPhi \bm{c} }_2^2 + \notag \\ 
                             & \qquad \qquad \lambda_0 \norm{\bm{c}}_1 + \lambda_1 \norm{ \bm{D} \bm{x}_3}_1 + \lambda_2 \norm{\bm{x}_3}_1 \biggr\} \label{eqn:spin_l1_min_cf_2}.
\end{align}
The optimization problem in (\ref{eqn:spin_l1_min_cf_2}) is convex. In our work, we solve (\ref{eqn:spin_l1_min_cf_2}) using the alternating direction method of multipliers (ADMM) \cite[Chapter 4]{boyd2011}. We call our proposed algorithm as sparsity-assisted signal denoising and pattern recognition (SASDPR). The details of deriving the iterative procedure to solve the cost function in (\ref{eqn:spin_l1_min_cf_2}) of the SASDPR algorithm are listed in Appendix \ref{appx_spindle}. \\

\emph{Remark}: The signal model in (\ref{eqn:appx_spin_signal_model}) is different from the signal model in DETKOS \cite{parekh2015}. The SASDPR method employs zero-phase narrow band-pass filters as matrices to detect the oscillatory pattern of interest whereas in DETKOS \cite{parekh2015} uses a zero-phase Butterworth band-pass filter (not designed as a matrix) after detecting the oscillatory pattern. While the DETOKS \cite{parekh2015} is limited to detecting oscillatory patterns, the signal model in (\ref{eqn:appx_spin_signal_model}) can be further extended by developing multiple narrow non-overlapping zero-phase band-pass filters and corresponding multiresolution over-complete dictionaries that represent the patterns of interest.
\begin{figure}[t!]
\centering
\hspace{-0.25em}\includegraphics[width=0.49\textwidth, height=0.46\textwidth]{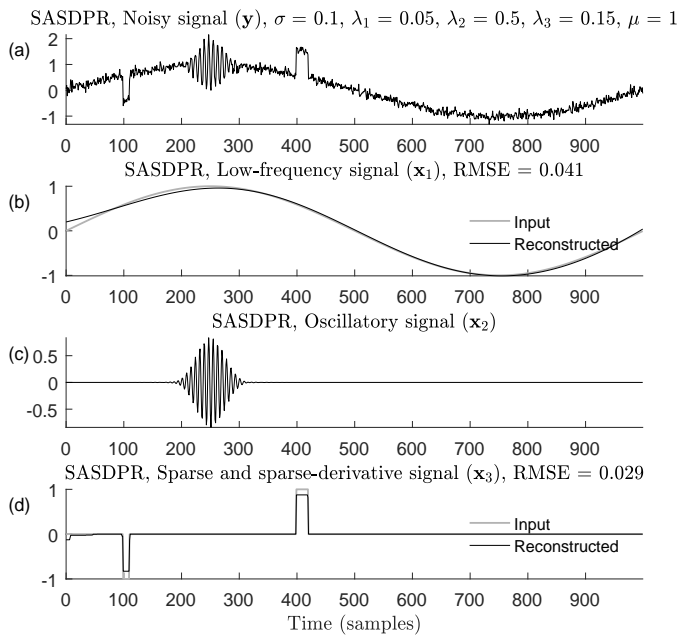}
\caption{Signal denoising and pattern recognition. (a) Input signal $\bm{y}$ is written as sum of low-frequency signal ($\bm{x}_1$), oscillatory signal ($\bm{x}_2$), sparse and sparse-derivative signal ($\bm{x}_3$), and additive white Gaussian noise with $\sigma = 0.1$. (b) Reconstructed low-frequency signal. (c) Reconstructed sparse and sparse-derivative discontinuous signal. (d) Extracted oscillatory pattern using narrow zero-phase bandpass filters. \label{fig:sdpr_toy_example}}
\end{figure}
\begin{table*}[t!]
\centering \footnotesize
\setlength\tabcolsep{8.9pt} 
\def\arraystretch{0.9}
\caption{Performance Evaluation of DETOKS\cite{parekh2015} and SASDPR. \label{tab:sasdpr}}
\begin{tabular}{cccccccccc}
\toprule
\toprule
\multirow{2}{*}{Degree of Filter} & \multirow{2}{*}{RMSE}  & \multicolumn{2}{c}{$f_s = 50$} & \multicolumn{2}{c}{$f_s = 100$} & \multicolumn{2}{c}{$f_s = 150$} & \multicolumn{2}{c}{$f_s = 200$}\\
\cmidrule(lr){3-4} \cmidrule(lr){5-6} \cmidrule(lr){7-8} \cmidrule(lr){9-10}
                         & & DETOKS & SASDPR   & DETOKS & SASDPR   & DETOKS & SASDPR  & DETOKS & SASDPR \\
\midrule
\midrule
\multirow{2}{*}{$M = 2$} & $\mathrm{rmse}(\bm{x}_1)$ &  0.076  & 0.073   & 0.060  & 0.063    & 0.064  & 0.067 & 0.331  & 0.066 \\
                         & $\mathrm{rmse}(\bm{x}_3)$ &  0.053  & 0.046   & 0.025  & 0.029    & 0.019  & 0.023 & 0.279  & 0.022 \\
\multirow{2}{*}{$M = 3$} & $\mathrm{rmse}(\bm{x}_1)$ &  INF    & 0.049   & INF    & 0.046    & INF    & 0.043 & 0.028  & 0.046 \\
                         & $\mathrm{rmse}(\bm{x}_3)$ &  INF    & 0.042   & INF    & 0.028    & INF    & 0.021 & 0.065  & 0.019 \\
\multirow{2}{*}{$M = 4$} & $\mathrm{rmse}(\bm{x}_1)$ &  NA     & 0.043   & NA     & 0.041    & NA     & 0.037 & NA     & 0.040 \\
                         & $\mathrm{rmse}(\bm{x}_3)$ &  NA     & 0.042   & NA     & 0.029    & NA     & 0.021 & NA     & 0.018 \\   
\midrule
\midrule
Time (secs)        &                                          &  0.051  & 0.162   & 0.079  & 0.530    & 0.107  & 1.138 & 0.180  & 2.351 \\
\bottomrule
\bottomrule
\end{tabular}
\vspace{1ex}

\raggedright
\footnotesize{DETOKS: Detection of K-complexes and sleep spindles; SASDPR: Sparsity-assisted signal denoising and pattern recognition; INF: Very large value; NA: Not applicable.}
\end{table*}
\subsubsection{Example}
We illustrate an example to demonstrate the performance of the SASDPR algorithm in (\ref{eqn:spin_l1_min_cf_2}). In Fig. \ref{fig:sdpr_toy_example}(a), the noisy input signal $\bm{y}$, sampled at $f_s = 100$ Hz, consists of a low-frequency signal of $0.1$ Hz, an oscillatory signal belonging to $11-15$ Hz frequency range, two discontinues, before and after the oscillatory signal, and additive white Gaussian noise. Our goal in this example is to use the signal model in (\ref{eqn:spin_signal_model}), and reconstruct $\hat{\bm{x}}_1$ and $\hat{\bm{x}}_3$, and also detect the oscillatory pattern $\hat{\bm{x}}_2$. We begin by designing a narrow band-pass filter as matrix which spread across the oscillating signal's frequency range. To detect the oscillatory pattern, we choose $\omega_1 = 0.18 \pi$ rads/s (equivalent to $11$ Hz), $\omega_2 = 0.34 \pi$ rads/s (equivalent to $15$ Hz), filter order of $M = 4$, and design a zero-phase narrow band-pass filter, denoted by $\bm{B}^\tpose \bm{B}$. Next, we set the high-pass filter cutoff frequency $\omega_0 = 0.004\pi$ rads/s (equivalent to $0.1$ Hz), and design a zero-phase high-pass filter denoted by $\bm{H}^\tpose \bm{H}$. Finally, to design $\mathrm{STFT}$ and inverse $\mathrm{STFT}$ in (\ref{eqn:sleep_spindle}) and (\ref{eqn:inv_sleep_spindle}), respectively, we use a window length of the next highest power of the sampling rate $f_s$, expressed as a power of 2, i.e., $W = 2^7 = 128$, with 75\% overlap between the windows.

To determine the optimal value of the regularization parameters $\lambda_0$, $\lambda_1$, and $\lambda_2$, we perform a grid search of different combinations of $\lambda_0 \in \{0.01,0.03,\ldots,0.09\}$, $\lambda_1 \in \{0.1,0.2,\ldots,0.5\}$, and $\lambda_2 \in \{0.1,0.2,\ldots,0.5\}$. We select $\lambda_0$, $\lambda_1$, and $\lambda_2$, such that the root-mean-square error of $\bm{x}_1$ and $\bm{x}_3$ are minimized, and the oscillatory pattern $\hat{\bm{x}}_2$ is detected. In Fig. \ref{fig:sdpr_toy_example}(b)-(d), we plot the reconstructed signals $\hat{\bm{x}}_1$, $\hat{\bm{x}}_2$, and $\hat{\bm{x}}_3$ for $\lambda_0 = 0.05$, $\lambda_1 = 0.5$, and $\lambda_2 = 0.15$. Note that the parameter $\mu$ in the derivation of the SASDPR iterative method only affects the rate of convergence of the algorithm and not affect the final value of the cost function. In our simulation, we set $\mu = 1.0$. As can be seen in Fig. \ref{fig:sdpr_toy_example}, using the SASDPR signal model we can reconstruct ${\bm{x}}_1$ and ${\bm{x}}_3$, and also detect the oscillatory pattern $\bm{x}_2$. The root-mean-square error of the reconstructed signals $\bm{x}_1$ and $\bm{x}_3$ are $0.041$ and $0.029$, respectively, as shown in Fig \ref{fig:sdpr_toy_example}.

We compare the performance of the SASDPR method with the DETOKS algorithm \cite{parekh2015}. The DETOKS algorithm employs recursive sparse banded matrices as zero-phase high-pass filters \cite{selesnick2014,selesnick2015,selesnic2017}. The algorithm decomposes the input signal into three components: a) low-frequency signal, b) oscillatory signal, and c) sum of sparse and sparse-derivative signal. To find the optimal parameters of the DETOKS algorithm for the input signal in Fig. \ref{fig:sdpr_toy_example}(a), we again perform a grid search and choose $\lambda_0$, $\lambda_1$, and $\lambda_2$ such that the root-mean-square errors of $\bm{x}_1$ and $\bm{x}_33$ are minimized, and the oscillatory pattern $\hat{\bm{x}}_2$ is detected. We begin by designing zero-phase high-pass filter of order $M=2$. For $\lambda_0 = 0.05$, $\lambda_1 = 0.5$, and $\lambda_2 = 0.15$, the root-mean-square error of the reconstructed signals $\bm{x}_1$ and $\bm{x}_3$ using the DETOKS algorithm are $0.060$ and $0.025$, respectively.

In Table \ref{tab:sasdpr}, we evaluate the performance of the SASDPR and DETOKS algorithms, across various sampling rates and filter orders. Because the sampling rate of the simulated signal changes, the position of the discontinuities and oscillatory signal in Fig \ref{fig:sdpr_toy_example}(a) also change. We keep the regularization parameters fixed for the SASDPR and DETOKS algorithms, and evaluate the performance of the two algorithms by varying the sampling rates $f_s$ and orders of the filter $M$. As can be seen in Table \ref{tab:sasdpr}, the DETOKS algorithm performs well for filter orders $M \le 2$ across all sampling rates. However, when $M > 2$ and $f_s \le 150$ Hz, the zero-phase filters in the DETOKS are no longer stable due to which the reconstruction error is very large. The large reconstruction error is mainly because the sparse-banded matrix designs used as zero-phase high pass filters in DETOKS framework are unstable. In particular, the condition number of $\bm{G} = \mu \bm{A} \bm{A}^\tpose + 2 \bm{B} \bm{B}^\tpose$ in \cite[Eq. (31a)]{parekh2015} is very large and the matrix is no longer invertible. On the other hand, SASDPR algorithm demonstrates a consistent performance across different sampling rates and orders of the filter because the matrix $\bm{F}$ in (\ref{eqn:matrix_inverted}) is always positive definite. While the filter designs in the SASS framework are limited to filter orders of $M \le 3$, the same filter designs when applied in the DETOKS framework are limited to filter orders of $M \le 2$. However, filter designs obtained using our proposed method in Section \ref{sec:filt_as_mats} can achieve filter orders of $M > 3$, and also demonstrate a consistent performance across different sampling rates. 
\subsubsection{Example}
We provide an example of detecting sleep spindles in sleep-EEG data using the SASDPR method described in Section \ref{sec:sig_sdpr}. Sleep spindles are bursts of oscillatory neural activity that are generated by interplay of the thalamic reticular nucleus and other thalamic nuclei during the N2 stage of sleep. These bursts are of at least $0.5$ seconds in duration and observed in the sigma frequency range ($11$-$15$ Hz). Our goal is to automatically detect sleep-spindles in sleep-EEG data using the SASDPR algorithm in (\ref{eqn:spin_l1_min_cf_2}), for accurate and rapid EEG processing.

As SASDPR is applicable only in batch-processing mode, we consider a fixed window of $30$ seconds. The size of the window is again determined based on an established scoring criteria \cite{berry2012aasm} used by experts when scoring the sleep spindle manually. In Fig. \ref{fig:spindle}(a), we plot 30 seconds of sleep-EEG data obtained from the C3-A1 channels of the DREAMS database \cite{devuyst2011}. Each dataset contains $30$ minutes of sleep-EEG data sampled at $f_s = 200$ Hz, i.e., $60$ epochs of sleep-EEG data with each epoch of length $N=6000$. The ``true'' regions of sleep spindles in the 30 second sleep-EEG epoch, as annotated by experts, are shown as red curves in the first plot of Fig. \ref{fig:spindle}(a). We begin by designing a narrow band-pass filter as matrix which spread across the frequency range of the signal of interest. To detect sleep spindles, we choose $\omega_1 = 0.11 \pi$ rads/s (equivalent to $11$ Hz), $\omega_2 = 0.15 \pi$ rads/s (equivalent to $15$ Hz), filter order of $M = 4$, and design a zero-phase narrow band-pass filter, denoted by $\bm{B}^\tpose \bm{B}$. We set the high-pass filter cutoff frequency $\omega_0 = 0.02\pi$ rads/s (equivalent to $2$ Hz), and design a zero-phase high-pass filter denoted by $\bm{H}^\tpose \bm{H}$. The choice of the high pass filter cutoff frequency is selected so that the low-frequency signal extracted using (\ref{eqn:spin_l1_min_cf_2}) represents the deep sleep or slow wave activity of the N3 stage of sleep. Next, to design $\mathrm{STFT}$ and inverse $\mathrm{STFT}$ in (\ref{eqn:sleep_spindle}) and (\ref{eqn:inv_sleep_spindle}), respectively, we use a window length of the next highest power of the sampling rate $f_s$, expressed as a power of 2, i.e., $W = 2^8 = 256$. In addition, we use 75\% overlap between the windows to generate an over-complete dictionary of the short time Fourier transform coefficients.
\begin{table*}[t!]
\centering \footnotesize
\setlength\tabcolsep{3.4pt} 
\def\arraystretch{0.65}
\caption{Performance evaluation of DETOKS\cite{parekh2015} and SASDPR for sleep spindle detection. \label{tab:spindle}}
\begin{tabular}{c c cccccccccc}
\toprule
\toprule
\multirow{2}{*}{Dataset \cite{devuyst2011}} & {Cohen's $\kappa$} & \multicolumn{2}{c}{{F1 Score}} & \multicolumn{2}{c}{{Cohen's $\kappa$}} & \multicolumn{2}{c}{{Events Detected}} & \multicolumn{2}{c}{{False Detections}} & \multicolumn{2}{c}{{Computation Time (sec)}}\\
\cmidrule(lr){2-2} \cmidrule(lr){3-4} \cmidrule(lr){5-6} \cmidrule(lr){7-8} \cmidrule(lr){9-10} \cmidrule(lr){11-12}
        & Between Experts & DETOKS & SASDPR  & DETOKS & SASDPR & DETOKS & SASDPR & DETOKS & SASDPR & DETOKS & SASDPR\\
\midrule
\midrule
{excerpt2}       & 0.515 & 0.670  & 0.660  & 0.655 & 0.644  & 60/77  & 59/77   & 22  & 23 & 0.424 $\pm$ 0.13 & 5.654 $\pm$ 0.57 \\
{excerpt4}       & 0.112 & 0.341  & 0.374  & 0.323 & 0.359  & 22/62  & 22/62   & 21  & 11 & 0.448 $\pm$ 0.12 & 5.800 $\pm$ 0.57 \\
{excerpt5}       & 0.410 & 0.523  & 0.569  & 0.506 & 0.551  & 43/103 & 50/103  & 4   & 6  & 0.400 $\pm$ 0.10 & 5.532 $\pm$ 0.59 \\
{excerpt6}       & 0.396 & 0.649  & 0.641  & 0.630 & 0.622  & 66/117 & 63/117  & 6   & 6  & 0.360 $\pm$ 0.05 & 5.310 $\pm$ 0.29 \\
\midrule
{Average}        & 0.358 & 0.545  & 0.561  & 0.528 & 0.544  & 191/359& 194/359 & 53  & 46 & 0.408 $\pm$ 0.10 & 5.574 $\pm$ 0.50 \\
\bottomrule
\bottomrule
\end{tabular}
\vspace{1ex}

\raggedright
\footnotesize{DETOKS: Detection of K-complexes and sleep spindles; SASDPR: Sparsity-assisted signal denoising and pattern recognition.}
\end{table*}
\begin{figure}[t!]
\centering
\includegraphics[width=0.47\textwidth, height=0.46\textwidth]{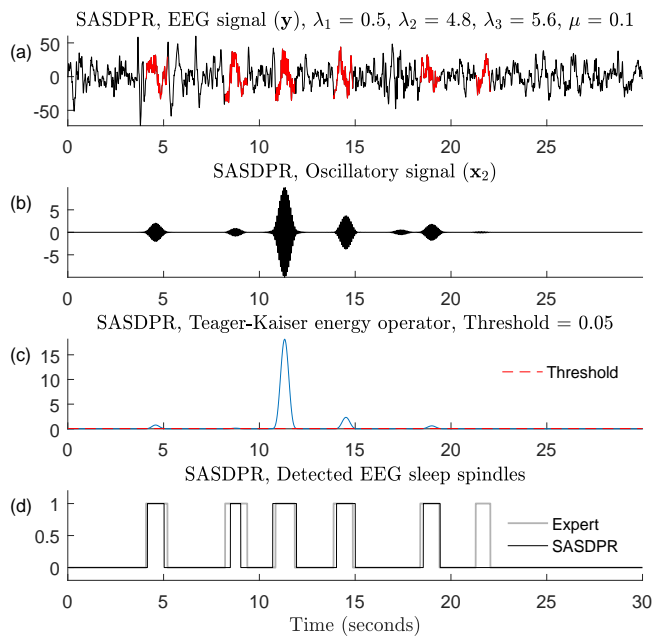}
\caption{Spindle detection. (a) 30 second epoch of sleep-EEG data obtained from \texttt{excerpt5.edf}. The epoch consists of six spindles as identified by two experts. (b) Oscillatory signal component $\bm{x}_2$ detected using the SASDPR algorithm. (c) Signal obtained by applying Teager-Kaiser energy operator on the extracted oscillatory signal component. (d) Expert and algorithm annotated sleep spindle regions. \label{fig:spindle}}
\end{figure}

In Appendix \ref{sec:param_spindle}, we develop a methodology to determine the regularization parameters $\lambda_0$, $\lambda_1$, and $\lambda_2$, and the rate of convergence parameter $\mu$. In our simulation, we set $\lambda_0 = 0.6$, $\lambda_1 = 4.8$, $\lambda_2 = 5.6$, and $\mu = 0.1$. Note that the parameter $\mu$ only affects the rate of convergence of the SASDPR algorithm and not the final value of the cost function. On solving (\ref{eqn:spin_l1_min_cf_2}), we get an estimate of $\bm{B}^\tpose \bm{B} \boldsymbol\varPhi\bm{c}$ which consists of the information about the oscillatory parameter of interest as shown in Fig. \ref{fig:spindle}(b). To detect sleep spindles, we apply the Teager-Kaiser energy operator ($\mathrm{TKEO}$) \cite{kaiser1993} and estimate the instantaneous energy present in $\bm{B}^\tpose \bm{B} \boldsymbol\varPhi\bm{c}$ as shown in Fig. \ref{fig:spindle}(c). We select regions of $\mathrm{TKEO}(\cdot)$ where the instantaneous energy is greater than a fixed threshold value $0.05$. We allow the minimum and maximum duration of a detected sleep spindle to be $0.5$ and $3.0$ seconds, respectively. In Fig. \ref{fig:spindle}(d), we plot the spindle regions annotated by the experts and SASDPR algorithm. 

In Table \ref{tab:spindle}, we evaluate the performance of the SASDPR algorithm for the sleep spindle EEG database \cite{devuyst2011} using the same performance metrics defined in the K-complex detection algorithm via the SAPR (see Table \ref{tab:kcomplex}). The sleep spindle EEG repository consists of eight datasets each of $30$ minutes duration. Only six of the eight datasets were scored independently by two experts. Of those six, only four datasets satisfy the minimum sampling rate criteria of $200$ Hz set by the AASM \cite{berry2012aasm}. We compare the performance of our proposed method with the DETOKS algorithm \cite{parekh2015}. As reported in \cite{parekh2015}, the regularization parameter, $\lambda_2$, which controls the sparsity of the STFT coefficients, belongs to a range of values from $[7.5,8.5]$. For the performance measures presented in Table \ref{tab:spindle}, we fix $\lambda_2 = 7.0$ for the DETOKS algorithm and keep the remaining parameters unchanged. See Fig. \ref{fig:spindle_detoks} in Appendix \ref{sec:param_spindle} for the output of the DETOKS algorithm. As can be seen, the SASDPR algorithm detects more number of sleep spindle events and also demonstrates a higher value F1-score than the DETOKS algorithm across all sample points. The performance of the DETOKS and SASDPR algorithms are almost similar because both algorithms are identifying oscillatory patterns using an over-complete dictionary consisting of the coefficients of the STFT. The main difference between the DETOKS and SASDPR is that the zero-phase narrow band-pass filters are incorporated within the cost function. Although, the DETOKS algorithm is computationally inexpensive, its performance varies on changing the sampling rate or the orders of the zero-phase filters (see Table \ref{tab:sasdpr}).
\section{Conclusion\label{sec:conclusion}}
In this paper, we proposed a novel approach to designing higher-order zero-phase filters as matrices using spectral transformation of the state-space representation of digital filters. We also proposed a proximal gradient-based technique to factorize a special class of zero-phase high-pass and band-pass digital filters which contain at least one zero at $z = 1$. The factorization procedure developed preserves the zero-phase property of the filters and also enables the incorporation of discontinuities into the signal model. {\dred Using the proposed filter designs, we validated and developed various signal models for denoising and pattern recognition applications. In SASD, we addressed the problem of signal denoising by simultaneously combining LTI filtering and sparsity-based techniques, and demonstrated consistent results in reconstructing the original signal when the orders of the filter are varied. In SAPR and SASDPR, we developed a general framework that combined orthogonal multiresolution representations, LTI filtering, and sparsity-based techniques to denoise and detect patterns of interest, simultaneously. Our proposed SAPR method reduced the number of false detections in identifying K-complexes in sleep-EEG data relative to the existing method. Further, using simulated data, we showed the robustness of the SASDPR method across fixed regularization parameters and varying sampling rate.}

{\dred The SAPR and SASDPR model can be extended to simultaneously detect multiple patterns of interest, which reside in non-overlapping frequency bands. In the future, we plan to develop signal models that combine stable zero-phase non-overlapping band-pass filters and corresponding over-complete dictionaries, to detect multiple patterns of interest, simultaneously. We also plan to develop stable and sparse zero-phase filters so that the computational cost is minimized while improving the performance of the various signal models presented in this work.}

\newpage
\clearpage
\setcounter{page}{1}
\twocolumn[
    \begin{center}
      {\huge Supplemental Material: Sparsity-Assisted Signal Denoising \\ \vspace{0.25em} and Pattern Recognition}\\
       \vspace{2ex}
      {G.V.~Prateek, Yo-El~Ju, and Arye~Nehorai}
    \end{center}
  ]
\appendices
\numberwithin{equation}{section}
\renewcommand{\thefigure}{\arabic{figure}}
\setcounter{figure}{0}
\renewcommand{\thetable}{\arabic{table}}
\setcounter{table}{0}
\appendices
\section{Spectral Transformation\label{appx_prop_1}}
We prove only part (b) of Proposition \ref{lemma_1}. The proofs for part (a) and (c) follow a similar approach. The numerator polynomial $B(z)$ of the transfer function $H(z)$ can be \emph{uniquely} factorized as $B(z) = B_1(z) B_2(z)$, where $B_1(z)$ is a polynomial of order $M_1$ representing the zeros at $z = -1$. The composite filter $G(z) = H(F(z))$ is obtained by replacing $z^{-1}$ terms in $H(z)$ with $1/F(z)$, where $F(z)$ depends on the frequency response type of the composite filter. If the composite filter $G(z) = H(F_{\mathrm{HP}}(z))$ is high-pass, then the transfer function can be written as
\begin{align}
G(z) = H(F_{\mathrm{HP}}(z)) = \frac{B_1(F_{\mathrm{HP}}(z)) B_2(F_{\mathrm{HP}}(z))}{A(F_{\mathrm{HP}}(z))}. \notag
\end{align}
In particular, replacing $z^{-1}$ with $1/F_{\mathrm{HP}}(z)$ in $B_1(z)$, we get
\begin{align}
B_1(F_{\mathrm{HP}}(z)) = \frac{(1 - \xi_{\mathrm{HP}})^{M_1} (1 - z^{-1})^{M_1}}{(1 + \xi_{\mathrm{HP}} z^{-1})^{M_1}} \notag.
\end{align}
Therefore, the composite high-pass filter $G(z)$ contains $M_1$ zeros at $z = 1$ and can be written as
\begin{align}
G(z) = \frac{B'_1(z) B'_2(z)}{A'(z)}, \label{eqn:factorized_tf}
\end{align}
where $B_1'(z) = (1 - z^{-1})^{M_1}$, $B_2'(z) = (1-\xi_{\mathrm{HP}})^{M_1} (1 + \xi_{\mathrm{HP}} z^{-1})^{-M_1} B_2(F_{\mathrm{HP}}(z))$, and $A'(z) = A(F_{\mathrm{HP}}(z))$. The value of $\xi_{\mathrm{HP}}$ depends on the cut-off frequency of the prototype low-pass filter and cut-off frequency of the composite high-pass filter.
\vspace{-1em}
\section{Forward-Backward Filtering\label{appx_prop_2}}
Let $(\bm{\mathcal{A}}_{\mathrm{f}},\bm{\mathcal{B}}_{\mathrm{f}},\bm{\mathcal{C}}_{\mathrm{f}},\mathcal{D}_{\mathrm{f}})$ denote the state-space representation of the forward filter. Then, based on (\ref{eqn:state_eqn}), we get
\begin{equation}
\begin{aligned}
\bm{s}(k+1) &= \bm{\mathcal{A}}_{\mathrm{f}} \bm{s}(k) + \bm{\mathcal{B}}_{\mathrm{f}} u(k) \\ \label{eqn:for_state_eqn}
y(k)    &= \bm{\mathcal{C}}_{\mathrm{f}} \bm{s}(k) + \mathcal{D}_{\mathrm{f}} u(k).
\end{aligned}
\end{equation}
Let $(\bm{\mathcal{A}}_{\mathrm{b}},\bm{\mathcal{B}}_{\mathrm{b}},\bm{\mathcal{C}}_{\mathrm{b}},\mathcal{D}_{\mathrm{b}})$ denote the state-space representation of the backward filter. Next, we express the forward filter in terms of a backward filter and find the relationship between $(\bm{\mathcal{A}}_{\mathrm{f}},\bm{\mathcal{B}}_{\mathrm{f}},\bm{\mathcal{C}}_{\mathrm{f}},\mathcal{D}_{\mathrm{f}})$ and $(\bm{\mathcal{A}}_{\mathrm{b}},\bm{\mathcal{B}}_{\mathrm{b}},\bm{\mathcal{C}}_{\mathrm{b}},\mathcal{D}_{\mathrm{b}})$. From the state equation of (\ref{eqn:for_state_eqn}), we get
\begin{align}
- \bm{s}(k) &= -\bm{\mathcal{A}}^{-1}_{\mathrm{f}} \bm{s}(k+1) + \bm{\mathcal{A}}^{-1}_{\mathrm{f}} \bm{\mathcal{B}}_{\mathrm{f}} u(k). \notag
\end{align}
Let $\bm{z}(k) \triangleq - \bm{s}(k) - \bm{\mathcal{A}}^{-1}_{\mathrm{f}} \bm{\mathcal{B}}_{\mathrm{f}} u(k)$ be the new state vector. Then,
\begin{align}
\bm{z}(k) &= - \bm{\mathcal{A}}^{-1}_{\mathrm{f}} \bm{s}(k+1) \notag \\
          &= - \bm{\mathcal{A}}^{-1}_{\mathrm{f}} \left[ - \bm{z}(k+1) - \bm{\mathcal{A}}^{-1}_{\mathrm{f}} \bm{\mathcal{B}}_{\mathrm{f}} u(k+1) \right] \label{eqn:change_state} \\
          &= \bm{\mathcal{A}}^{-1}_{\mathrm{f}} \bm{z}(k+1) + \bm{\mathcal{A}}^{-2}_{\mathrm{f}} \bm{\mathcal{B}}_{\mathrm{f}} u(k+1), \label{eqn:tf_back_filt_part_a}
\end{align}
where in (\ref{eqn:change_state}) we use the definition of $\bm{z}(k)$. From (\ref{eqn:tf_back_filt_part_a}) and the definition of $\bm{z}(k)$, we get $\bm{s}(k)$. Using $\bm{s}(k)$, we compute the output as
\begin{align}
y(k)    &= \bm{\mathcal{C}}_{\mathrm{b}} \bm{s}(k) + \mathcal{D}_{\mathrm{b}} u(k). \label{eqn:tf_back_filt_part_b}
\end{align}
Therefore, an equivalent representation of $(\bm{\mathcal{A}}_{\mathrm{b}},\bm{\mathcal{B}}_{\mathrm{b}},\bm{\mathcal{C}}_{\mathrm{b}},\mathcal{D}_{\mathrm{b}})$ in terms of $(\bm{\mathcal{A}}_{\mathrm{f}},\bm{\mathcal{B}}_{\mathrm{f}},\bm{\mathcal{C}}_{\mathrm{f}},\mathcal{D}_{\mathrm{f}})$ is expressed as $(\bm{\mathcal{A}}^{-1}_{\mathrm{f}}, \bm{\mathcal{A}}^{-2}_{\mathrm{f}} \bm{\mathcal{B}}_{\mathrm{f}}, \bm{\mathcal{C}}_{\mathrm{f}}, \mathcal{D}_{\mathrm{f}})$. Because the backward filter is noncausal, impulse response coefficients are expressed as
\begin{align}
g(k) = \begin{cases} \mathcal{D}_{\mathrm{b}} &\quad k = 0, \\ 
                    \bm{\mathcal{C}}_{\mathrm{b}} \bm{\mathcal{A}}_{\mathrm{b}}^{-k-1} \bm{\mathcal{B}}_{\mathrm{b}} &\quad k = 1,2,\ldots
       \end{cases}.
\end{align}
\begin{figure}[t!]
\centering
\psfrag{aaa}{\footnotesize $\bm{u}$}
\psfrag{ccc}{\footnotesize $\bm{p}$}
\psfrag{bbb}{\footnotesize $G(z)$}
\psfrag{ddd}{\footnotesize $\mathrm{Rev}$}
\psfrag{eee}{\footnotesize $ $}
\psfrag{fff}{\footnotesize $ $}
\psfrag{ggg}{\footnotesize $\bm{y}$}
\psfrag{hhh}{\footnotesize $\mathrm{Rev}$}
\psfrag{forward}{\footnotesize forward}
\psfrag{backward}{\footnotesize backward}
\includegraphics[width=0.45\textwidth, height=0.09\textwidth]{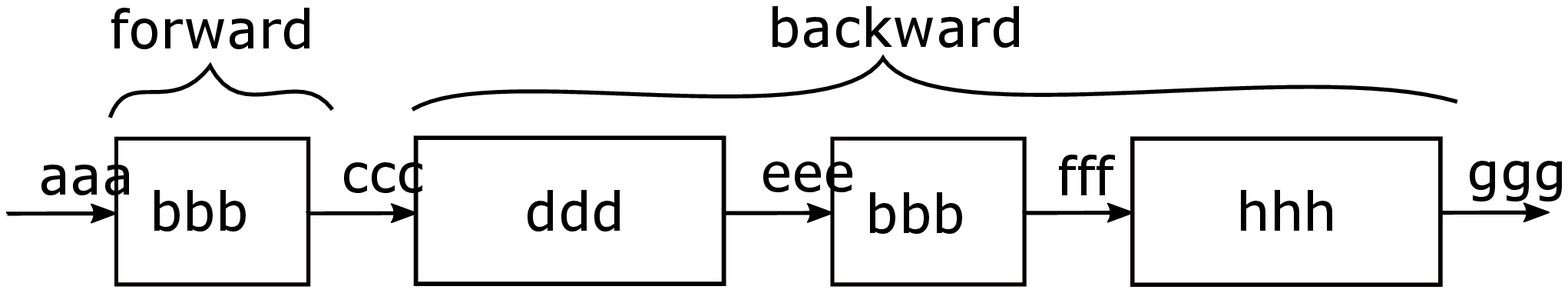}
\caption{A block diagram illustration of the forward-backward filtering approach. The forward filter computes the filter outputs using (\ref{eqn:forward_filter}) for the given transfer function whereas the backward filter computes the output using (\ref{eqn:backward_filt}). \label{fig:zp_filter}}
\end{figure}

In addition, the relationship between input and output vectors for a backward filter is expressed as
\begin{align}
\bm{y} = \bm{G}_{\mathrm{b}} \bm{u} + \bm{O}^{\mathrm{C}}_{\mathrm{b}} \bm{s}(N-1), \label{eqn:backward_filt}
\end{align}
where $\bm{s}(N-1)$ is the initial condition of the state vector in the recursive implementation of the backward filter, 
\begin{align}
\bm{G}_{\mathrm{b}} = \begin{bmatrix}
                      \mathcal{D}_{\mathrm{b}} & \bm{\mathcal{C}}_{\mathrm{b}} \bm{\mathcal{A}}_{\mathrm{b}}^{-2} \bm{\mathcal{B}}_{\mathrm{b}} & \ldots & \bm{\mathcal{C}}_{\mathrm{b}} \bm{\mathcal{A}}_{\mathrm{b}}^{-N} \bm{\mathcal{B}}_{\mathrm{b}} \\
                      0 & \mathcal{D}_{\mathrm{b}} & \ldots & \bm{\mathcal{C}}_{\mathrm{b}} \bm{\mathcal{A}}_{\mathrm{b}}^{-N+1} \bm{\mathcal{B}}_{\mathrm{b}} \\
                      \vdots & \ddots  & \ddots & \vdots \\
                      0      & \ldots &   0     & \mathcal{D}_{\mathrm{b}}, \label{eqn:back_toeplitz_mat}
                      \end{bmatrix}
\end{align}
where $\bm{G}_{\mathrm{b}}$ is an upper-triangular Toeplitz matrix representing the impulse response matrix, and $\bm{O}^{\mathrm{C}}_{\mathrm{b}}$ is the observability matrix of the backward filter. We observe that on substituting $(\bm{\mathcal{A}}_{\mathrm{b}},\bm{\mathcal{B}}_{\mathrm{b}},\bm{\mathcal{C}}_{\mathrm{b}},\mathcal{D}_{\mathrm{b}})$ as $(\bm{\mathcal{A}}^{-1}_{\mathrm{f}}, \bm{\mathcal{A}}^{-2}_{\mathrm{f}} \bm{\mathcal{B}}_{\mathrm{f}}, \bm{\mathcal{C}}_{\mathrm{f}}, \mathcal{D}_{\mathrm{f}})$ in (\ref{eqn:back_toeplitz_mat}), we get
\begin{align}
\bm{G}_{\mathrm{b}} = \bm{G}^\tpose_{\mathrm{f}}, \label{eqn:for_back_rel}
\end{align}
where $\bm{G}_{\mathrm{f}}$ is the impulse response matrix in (\ref{eqn:toeplitz_matrix}) for the forward filter represented by $(\bm{\mathcal{A}}_{\mathrm{f}},\bm{\mathcal{B}}_{\mathrm{f}},\bm{\mathcal{C}}_{\mathrm{f}},\mathcal{D}_{\mathrm{f}})$. In Fig. \ref{fig:zp_filter}, we provide a block diagram illustration of the forward-backward filtering approach \cite{gustafsson1996}. If $\bm{p}$ denotes the output of the forward filter, then
\begin{align}
\bm{p} = \bm{G}_{\mathrm{f}} \bm{u} + \bm{O}_{\mathrm{f}} \bm{s}(0). 
\end{align}
The output of the forward filter is the input to the backward filter, and thus, we get
\begin{align}
\bm{y} &= \bm{G}_{\mathrm{b}} \bm{p} + \bm{O}^{\mathrm{C}}_{\mathrm{b}} \bm{s}(N-1), \notag \\
	   &= \bm{G}_{\mathrm{b}} (\bm{G}_{\mathrm{f}} \bm{u} + \bm{O}_{\mathrm{f}} \bm{s}(0)) + \bm{O}^{\mathrm{C}}_{\mathrm{b}} \bm{s}(N-1), \notag \\
	   &= \bm{G}^\tpose_{\mathrm{f}} \bm{G}_{\mathrm{f}} \bm{u} + \bm{G}_{\mathrm{b}} \bm{O}_{\mathrm{f}} \bm{s}(0) + \bm{O}^{\mathrm{C}}_{\mathrm{b}} \bm{s}(N-1), \label{eqn:fb_filt_full_exp}
\end{align}
where (\ref{eqn:fb_filt_full_exp}) is obtained from (\ref{eqn:for_back_rel}). Since the initial states of the forward and backward filter are set to zero, the impulse response matrix of the composite filter with zero-phase is expressed as $\bm{G}^\tpose_{\mathrm{f}} \bm{G}_{\mathrm{f}}$.
\section{Matrix Factorization of Zero-Phase Filters\label{appx_a}}
Expanding the Frobenius norm  as the trace of the inner product, and selecting terms that only depend on $\bm{G}_1$, the optimization problem in (\ref{eqn:matrix_factor_opt_2}) can be written as
\begin{align}
\arg \min_{\bm{G}_1} \quad \mathrm{tr}(\bm{D}^\tpose \bm{G}_1^\tpose \bm{G}_{\mathrm{f}} \bm{G}_{\mathrm{f}}^\tpose \bm{G}_1 \bm{D} ) - 2 \mathrm{tr} (\bm{G}_{\mathrm{f}}^{\tpose} \bm{G}_{\mathrm{f}} \bm{G}_{\mathrm{f}}^\tpose \bm{G}_1 \bm{D}). \notag
\end{align}
The trace of the product of two matrices can be written as the dot product of two vectors using the vectorization operator, i.e., $\mathrm{tr}(\bm{A}^\tpose \bm{B}) = \mathrm{vec}(\bm{A}^\tpose) \mathrm{vec}(\bm{B})$. Thus, objective function can be rewritten in vectorized form as
\begin{align}
\begin{split}
\arg \min_{\bm{G}_1} &\quad \mathrm{vec}(\bm{G}_{\mathrm{f}}^\tpose \bm{G}_1 \bm{D})^\tpose \mathrm{vec}(\bm{G}_{\mathrm{f}}^\tpose \bm{G}_1 \bm{D}) \\
                     &\qquad - 2 \mathrm{vec}(\bm{G}_{\mathrm{f}} \bm{G}_{\mathrm{f}}^\tpose \bm{G}_{\mathrm{f}} \bm{D}^\tpose)^\tpose \mathrm{vec}(\bm{G}_1). \notag
\end{split}
\end{align}
Using the mixed-product property of the Kronecker product, the objective function can be further simplified as
\begin{align}
\begin{split}
\arg \min_{\mathrm{vec}(\bm{G}_1)} &\quad \mathrm{vec}(\bm{G}_1)^\tpose (\bm{D} \bm{D}^\tpose \otimes \bm{G}_{\mathrm{f}} \bm{G}_{\mathrm{f}}^\tpose) \mathrm{vec}(\bm{G}_1) \\
                     &\qquad - 2 \mathrm{vec}(\bm{G}_{\mathrm{f}} \bm{G}_{\mathrm{f}}^\tpose \bm{G}_{\mathrm{f}} \bm{D}^\tpose)^\tpose \mathrm{vec}(\bm{G}_1). \label{eqn:cost_func_vectorized}
\end{split}
\end{align}
The objective function in (\ref{eqn:cost_func_vectorized}) is convex because $\bm{D} \bm{D}^\tpose \otimes \bm{G}_{\mathrm{f}} \bm{G}_{\mathrm{f}}^\tpose$ is always positive semidefinite. Further, if $\bm{\mathcal{T}}$ is a set of vectorized lower-triangular matrices, then $\bm{\mathcal{T}} = \{ \bm{G}_1 \in \mathbb{R}^{N \times (N-K)} | \mathrm{vec}(\bm{G}_1) = \mathrm{vec}(\mathrm{tril}(\bm{G}_1)) \}$ is a convex set, because any linear combination of lower-triangular matrices is always a lower-triangular matrix. Therefore, the optimization problem in (\ref{eqn:matrix_factor_opt}) is a quadratic program with linear equality constraints, and thus convex. There are many efficient ways to solve the optimization problem in (\ref{eqn:matrix_factor_opt}), however, for sufficiently large $N$, solving (\ref{eqn:matrix_factor_opt}) would require computing and storing a large Kronecker product $\bm{D} \bm{D}^\tpose \otimes \bm{G}_{\mathrm{f}} \bm{G}_{\mathrm{f}}^\tpose$ of size $N^2 \times N^2$, and also find its pseudo-inverse. Further, when $\mathcal{D}_\mathrm{f}$ is close to zero, inverting $\bm{G}_{\mathrm{f}}$ is not stable because it is an ill conditioned matrix. To avoid the computational and storage burdens of solving (\ref{eqn:matrix_factor_opt}), we develop an accelerated projected gradient descent approach. In particular, we use the fast iterative shrinkage-threshold algorithm (FISTA) to solve the optimization problem in (\ref{eqn:matrix_factor_opt}). First, we rewrite the cost function as
\begin{align}
\arg \min_{\mathrm{vec}(\bm{G}_1)} \psi(\mathrm{vec}(\bm{G}_1)) + \phi(\mathrm{vec}(\bm{G}_1)),
\end{align}
where $\psi(\cdot)$ is given (\ref{eqn:cost_func_vectorized}) and $\phi(\cdot)$ imposes a lower-triangular matrix constraint on $\bm{G}_1$. The iterative shrinkage operator denoted by $p_L(\cdot)$ is given as
\begin{align}
\begin{split}
p_L(\mathrm{vec}(\bm{G}_1)) &= \arg \min_{\bm{b}} \biggl\{ \phi(\bm{b}) + \biggr. \\
&\hspace{-3em}+ \biggl. \frac{L}{2} \norm{\bm{b} - \left( \mathrm{vec}(\bm{G}_1) - \frac{1}{L} \nabla \psi(\mathrm{vec}(\bm{G}_1)) \right)}_2^2 \biggr\}, \label{eqn:big_opt}
\end{split}
\end{align}
where $\bm{b} \in \mathbb{R}^{N(N-K) \times 1}$ is the vectorized version of the lower-triangular matrix to be found and $L$ is the Lipschitz constant. The lower-triangular constraint can be directly incorporated into the optimization problem (\ref{eqn:big_opt}) by imposing the constraint $\bm{b} \in \bm{\mathcal{T}}$, which gives us the following objective function:
\begin{align}
\arg \min_{\bm{b} \in \bm{\mathcal{T}}} \frac{L}{2} \norm{\bm{b} - \left( \mathrm{vec}(\bm{G}_1) - \frac{1}{L} \nabla \psi(\mathrm{vec}(\bm{G}_1)) \right)}_2^2, \label{eqn:opt_simplified}
\end{align}
where the gradient of the cost function in vectorized form $\nabla \psi (\bm{G}_1) = \mathrm{vec}(\bm{G}_{\mathrm{f}} \bm{G}_{\mathrm{f}}^\tpose \bm{G}_1 \bm{D} \bm{D}^\tpose) - \mathrm{vec}(\bm{G}_{\mathrm{f}} \bm{G}_{\mathrm{f}}^\tpose \bm{G}_{\mathrm{f}} \bm{D}^\tpose)$. The constraint is applied by reshaping the $N(N-K) \times 1$ vector $\mathrm{vec}(\bm{G}_1)$ into an $N \times N-K$ matrix $\bm{G}_1$ and applying lower triangular matrix constraint using $\texttt{tril}$ command in MATLAB. In the initialization step, we initialize $\bm{G}_1^{(1)}$ as an $N \times N-K$ impulse response matrix obtained from the transfer function $G_1(z)$ in (\ref{eqn:factorized_tf}), where $\bm{G}_1$ is a lower triangular Toeplitz matrix. The complete algorithm to factorize zero-phase filters as matrices is listed as \emph{Algorithm \ref{alg:pgd_algo}}.
\begin{algorithm}[t!]
\small
\setstretch{1.05}
\caption{\small Accelerated Projected Gradient Descent (\ref{eqn:matrix_factor_opt_2})}\label{alg:pgd_algo}
\begin{algorithmic}
\Procedure{APGD}{$\bm{G},\bm{D},\bm{G}_1^{(1)},K,\epsilon,k_{\mathrm{max}}$}
\State \textbf{initialize}
\State $\mathrm{TOL} \gets \epsilon$, $\mathrm{MAX} \gets k_\mathrm{max}$, $k \gets 1$
\State $t_1 \gets 1$, $\bm{b}_1 \gets \mathrm{vec}(\bm{G}_1^{(1)})$, $\bm{f}_0 \gets \bm{b}_1$
\State $\bm{B} \gets \bm{G} \bm{G}^\tpose$, $\bm{Q} \gets \bm{D} \bm{D}^\tpose$, $\bm{R} \gets \bm{G} \bm{G}^\tpose \bm{G} \bm{D}^\tpose$
\State $c_1 \gets \norm{\bm{G}^{\tpose} \bm{G} - \bm{G}^{\tpose} \bm{G}^{(k)}_1 \bm{D} }_{\mathrm{F}}^2$
\Repeat
	\State $k = k + 1$ \Comment{(increment counter)}
	\State $\bm{f}_{k-1} \gets \bm{b}_{k-1} - ({1}/{L}) [ \mathrm{vec}(\bm{B} \bm{G}_1^{(k-1)} \bm{Q}) - \mathrm{vec}(\bm{R}) ]$
	\State $\bm{G}_1^{(k)} \gets \mathrm{tril}(\mathrm{reshape}(\bm{f}_{k-1},N,N-K))$ \Comment{(project)}
	\State $t_k \gets (1 + \sqrt{1 + 4 t_k^2})/2$
	\State $\bm{b}_k \gets \bm{f}_{k-1} + (\bm{f}_{k-1} - \bm{f}_{k-2})({t_k - 1}/{ t_{k-1}})$
	\State $c_k = \norm{\bm{G}^{\tpose} \bm{G} - \bm{G}^{\tpose} \bm{G}^{(k)}_1 \bm{D} }_{\mathrm{F}}^2$ \Comment{(compute cost)}
\Until{$(k = \mathrm{MAX}) \mbox{ \textbf{or} } |c_k - c_{k-1}| < \mathrm{TOL}$}
\State \Return{$\bm{G}_1^{(k)}$}
\EndProcedure
\end{algorithmic}
\end{algorithm}
\section{Sparsity-Assisted Signal Denoising \label{sec:sig_denoising}}
Let $\bm{y}$ denote the noisy measured signal, which can be modeled as an additive mixture of a low-frequency signal $\bm{x}_1$ and a sparse-derivative signal $\bm{x}_2$:
\begin{align}
\bm{y} = \bm{x}_1 + \bm{x}_2 + \bm{w}, \label{eqn:sass_signal_model}
\end{align}
where $\bm{w}$ is assumed be stationary white Gaussian noise. Let $\hat{\bm{x}}_1$ and $\hat{\bm{x}}_2$ denote the approximate estimates of $\bm{x}_1$ and $\bm{x}_2$, respectively. Given an estimate of $\bm{x}_2$, we can estimate $\bm{x}_1$ as
\begin{align}
\hat{\bm{x}}_1 &:= \mathsf{LPF}_{\omega_1}(\bm{y} - \hat{\bm{x}}_2), \label{eqn:sass_lpf_apprx}
\end{align}
where $\mathsf{LPF}_{\omega_1}(\cdot)$ is the specified zero-phase low-pass impulse response matrix operator. If an estimate of $\hat{\bm{x}}_2$ is known, then we can write the estimate of $\hat{\bm{x}}$ as
\begin{align}
\hat{\bm{x}} &= \hat{\bm{x}}_1 + \hat{\bm{x}}_2 \notag \\
         &= \mathsf{LPF}_{\omega_1}(\bm{y} - \hat{\bm{x}}_2) + \hat{\bm{x}}_2 \notag \\
         &= \mathsf{LPF}_{\omega_1} (\bm{y}) - \mathsf{LPF}_{\omega_1} (\hat{\bm{x}}_2) + \hat{\bm{x}}_2 \notag \\
         &= \mathsf{LPF}_{\omega_1} (\bm{y}) + \{ \mathsf{I} - \mathsf{LPF}_{\omega_1} \} (\hat{\bm{x}}_2) \notag \\
         &= \mathsf{LPF}_{\omega_1} (\bm{y}) + \mathsf{HPF}_{\omega_1} (\hat{\bm{x}}_2). \label{eqn:appx_sig_model}
\end{align}
In (\ref{eqn:appx_sig_model}), we assumed that the orders of the denominator and numerator polynomials of the composite filter $G(z)$ are equal and used the identity $\mathsf{HPF}_{\omega_1}( \bm{u} ) \triangleq  \{ \mathsf{I} - \mathsf{LPF}_{\omega_1} \} ( \bm{u} )$. Using the definitions of $\mathsf{LPF}_{\omega_1}$ and $\mathsf{HPF}_{\omega_1}$ in Table \ref{tab:zero_phase_properties}, we get
\begin{align}
\hat{\bm{x}} &= \bm{L}^\tpose \bm{L} \bm{y} + \bm{H}^\tpose \bm{H} \bm{x}_2. \label{eqn:x_hat_estimate}
\end{align}
Now, we can incorporate the sparse derivative nature of $\bm{x}_2$ in (\ref{eqn:x_hat_estimate}) by factorizing the zero-phase impulse response matrix $\bm{H}^\tpose \bm{H}$ as $\bm{H}^\tpose \bm{H}_1 \bm{D}$. Therefore, we can approximate (\ref{eqn:x_hat_estimate}) as
\begin{align}
\hat{\bm{x}} \approx \bm{L}^\tpose \bm{L} \bm{y} + \bm{H}^\tpose \bm{H}_1 \bm{D} \bm{x}_2. \label{eqn:x_hat_appx}
\end{align}
As $\bm{x}_2$ is unknown, we cannot directly estimate $\hat{\bm{x}}$ from (\ref{eqn:x_hat_appx}). Let $\bm{D} \bm{x}_2 = \bm{v}$, where $\bm{v}$ is sparse, i.e., $\bm{x}_2$ is the sparse derivative signal. In order to estimate $\bm{x}$, we minimize the following cost function:
\begin{align}
\arg \min_{\bm{v}} \frac{1}{2} \norm{ \bm{y} - \bm{L}^\tpose \bm{L} \bm{y} - \bm{H}^\tpose \bm{H}_1 \bm{v} }_2^2 + \lambda \norm{\bm{v}}_1, \label{eqn:l1_min_cf_1}
\end{align}
where $\lambda$ is the regularization parameter. The cost function in (\ref{eqn:l1_min_cf_1}) is convex. Because the orders of the numerator and denominator polynomials of the composite low-pass filter are equal, we can further simplify (\ref{eqn:l1_min_cf_1}) using the identity $\bm{I} - \bm{L}^\tpose \bm{L} = \bm{H}^\tpose \bm{H}$ (see Table \ref{tab:zero_phase_properties}). Therefore, (\ref{eqn:l1_min_cf_1}) can be rewritten as
\begin{align}
\arg \min_{\bm{v}}  \frac{1}{2} \norm{ \bm{H}^\tpose \bm{H} \bm{y} - \bm{H}^\tpose \bm{H}_1 \bm{v} }^2_2 + \lambda \norm{\bm{v}}_1. \label{eqn:l1_min_cf_2}
\end{align}
The optimization problem in (\ref{eqn:l1_min_cf_2}) is a standard $\ell_1$ norm sparse least squares problem, which can be solved using iterative optimization techniques \cite{figueiredo2003,daubechies2004,combettes2008,beck2009,afonso2010,goldstein2009}. In our work, we solve (\ref{eqn:l1_min_cf_2}) using the fast iterative shrinkage/threshold algorithm (FISTA) \cite{beck2009}. We use FISTA because most proximal algorithms work under extremely general conditions, including cases where the functions are non-smooth, and it offers an improved convergence rate $\mathcal{O}(1/k^2)$ while ISTA demonstrates convergence rate of $\mathcal{O}(1/k)$. We skip the details of the algorithm and direct the readers to \cite{beck2009} for more details. On solving (\ref{eqn:l1_min_cf_2}), we get $\bm{v}$, using which we can estimate $\bm{x}_2 = \bm{S} \bm{v}$ when $K = 1$, where $\bm{S} \in \mathbb{R}^{N \times N-1}$ is the integration matrix \cite{selesnick2014,selesnick2015} and given as 
\begin{align}
\bm{S} = \begin{bmatrix} 0 &   &   &  & \\
             1 & 0 &   &  & \\ 
             1 & 1 & 0 &  & \\
             \vdots & & \ddots & \\
             1 & 1 & \ldots & 1 & 0 \\
             1 & 1 & \ldots & 1 & 1 \end{bmatrix}, \label{eqn:reconst_matrix}
\end{align}
and $\bm{D}\bm{S} = \bm{I}$. Based on \cite[Proposition 1.3]{bach2012}, the vector $\bm{v}$ is a solution of the optimization problem in (\ref{eqn:l1_min_cf_2}) if and only if $\forall j = 1,\ldots,N-K$,
\begin{align}
\begin{cases} (1/\lambda)\left|\bm{H}_1^\tpose \bm{H} (\bm{H}^\tpose \bm{H} \bm{y} - \bm{H}^\tpose \bm{H} \bm{v})\right| \le 1 &\mbox{if } \bm{v}_j = 0 \\ (1/\lambda)\left[\bm{H}_1^\tpose \bm{H} (\bm{H}^\tpose \bm{H} \bm{y} - \bm{H}^\tpose \bm{H} \bm{v})\right] = \mathrm{sign}(\bm{v}_j) &\mbox{if } \bm{v}_j \ne 0 \end{cases}, \label{eqn:opt_condition}
\end{align}
where $\bm{v}_j$ is the $j$-th entry of $\bm{v}$ and $\mathrm{sign}$ is the sign function. We demonstrate the effectiveness of SASD and verify the optimality condition (\ref{eqn:opt_condition}) with the help of an illustrative example.
\begin{table*}[t!]
\centering
\caption{Comparative $\mathrm{RMSE}$ of Proposed and Existing Methods \label{tab:sasd}} \vspace{-0.5em}
\footnotesize
\label{tab:compare_performance}
\setlength\tabcolsep{3pt}
\def\arraystretch{0.70}
\begin{tabular}{c c cccc cccc}
\toprule
\toprule
\multirow{2}{*}{Degree of Filter} & \multirow{2}{*}{Noise ($\sigma$)}  & \multicolumn{4}{c}{Root-mean-square error} & \multicolumn{4}{c}{Convergence time (sec)}  \\
\cmidrule(lr){3-6} \cmidrule(lr){7-10} 
  &   & $\mathsf{LPF}_{\omega_0}(\cdot)$   & $\mathrm{TVD}$\cite{figueiredo2006}      & {SASS}\cite{selesnic2017} & {SASD} & $\mathsf{LPF}_{\omega_0}(\cdot)$   & $\mathrm{TVD}$\cite{figueiredo2006}      & {SASS}\cite{selesnic2017} & {SASD} \\
\midrule
\midrule
\multirow{3}{*}{$M = 1$}  & 0.1   &  0.188 $\pm$ 0.001  &  0.061 $\pm$ 0.003  &  0.089 $\pm$ 0.008 & \textbf{0.035 $\pm$ 0.005}  &  0.029 $\pm$ 0.002 &  0.002  &  {0.008 $\pm$ 0.003} & 0.120 $\pm$ 0.008  \\
                          & 0.3   &  0.203 $\pm$ 0.009  &  0.131 $\pm$ 0.009  &  0.151 $\pm$ 0.013 & \textbf{0.100 $\pm$ 0.015}  &  0.031 $\pm$ 0.002 &  0.002  &  {0.009 $\pm$ 0.003} & 0.120 $\pm$ 0.008  \\
                          & 0.5   &  0.227 $\pm$ 0.022  &  0.191 $\pm$ 0.016  &  0.186 $\pm$ 0.022 & \textbf{0.158 $\pm$ 0.024}  &  0.031 $\pm$ 0.004 &  0.003  &  {0.011 $\pm$ 0.004} & 0.121 $\pm$ 0.013  \\
                          \midrule
\multirow{3}{*}{$M = 2$}  & 0.1   &  0.188 $\pm$ 0.001  &  0.061 $\pm$ 0.003  &  0.049 $\pm$ 0.004 & \textbf{0.035 $\pm$ 0.005}  &  0.029 $\pm$ 0.002 &  0.002  &  {0.009 $\pm$ 0.003} & 0.124 $\pm$ 0.013  \\
                          & 0.3   &  0.203 $\pm$ 0.009  &  0.131 $\pm$ 0.009  &  \textbf{0.098 $\pm$ 0.014} & 0.100 $\pm$ 0.015  &  0.031 $\pm$ 0.002  &  0.002  &  {0.012 $\pm$ 0.005} & 0.122 $\pm$ 0.011  \\
                          & 0.5   &  0.227 $\pm$ 0.022  &  0.191 $\pm$ 0.016  &  \textbf{0.146 $\pm$ 0.024} & 0.158 $\pm$ 0.024  &  0.031 $\pm$ 0.004  &  0.003  &  {0.014 $\pm$ 0.006} & 0.123 $\pm$ 0.010  \\
                          \midrule

\multirow{3}{*}{$M = 3$}  & 0.1   &  0.188 $\pm$ 0.001  &  0.061 $\pm$ 0.003  &  0.035 $\pm$ 0.005 & \textbf{0.035 $\pm$ 0.003}  &  0.029 $\pm$ 0.002  &  0.002  &  {0.010 $\pm$ 0.003} & 0.138 $\pm$ 0.118  \\
                          & 0.3   &  0.203 $\pm$ 0.009  &  0.131 $\pm$ 0.009  &  \textbf{0.094 $\pm$ 0.015} & 0.100 $\pm$ 0.015  &  0.031 $\pm$ 0.002  &  0.002  &  {0.016 $\pm$ 0.008} & 0.140 $\pm$ 0.090  \\
                          & 0.5   &  0.227 $\pm$ 0.022  &  0.191 $\pm$ 0.016  &  \textbf{0.149 $\pm$ 0.024} & 0.158 $\pm$ 0.024  &  0.031 $\pm$ 0.004  &  0.003  &  {0.021 $\pm$ 0.016} & 0.145 $\pm$ 0.096  \\
                          \midrule

\multirow{3}{*}{$M = 4$}  & 0.1   &  0.188 $\pm$ 0.001  &  0.061 $\pm$ 0.003  &  NA & \textbf{0.035 $\pm$ 0.005}  &  0.029 $\pm$ 0.002  &  0.002  &  NA & 0.141 $\pm$ 0.038  \\
                          & 0.3   &  0.203 $\pm$ 0.009  &  0.131 $\pm$ 0.009  &  NA & \textbf{0.100 $\pm$ 0.015}  &  0.031 $\pm$ 0.002  &  0.002  &  NA & 0.140 $\pm$ 0.033  \\
                          & 0.5   &  0.227 $\pm$ 0.022  &  0.191 $\pm$ 0.016  &  NA & \textbf{0.158 $\pm$ 0.024}  &  0.031 $\pm$ 0.004  &  0.003  &  NA & 0.141 $\pm$ 0.032  \\
\bottomrule
\bottomrule
\end{tabular}
\end{table*}
\begin{figure}[t!]
\centering
\includegraphics[width=0.47\textwidth, height=0.47\textwidth]{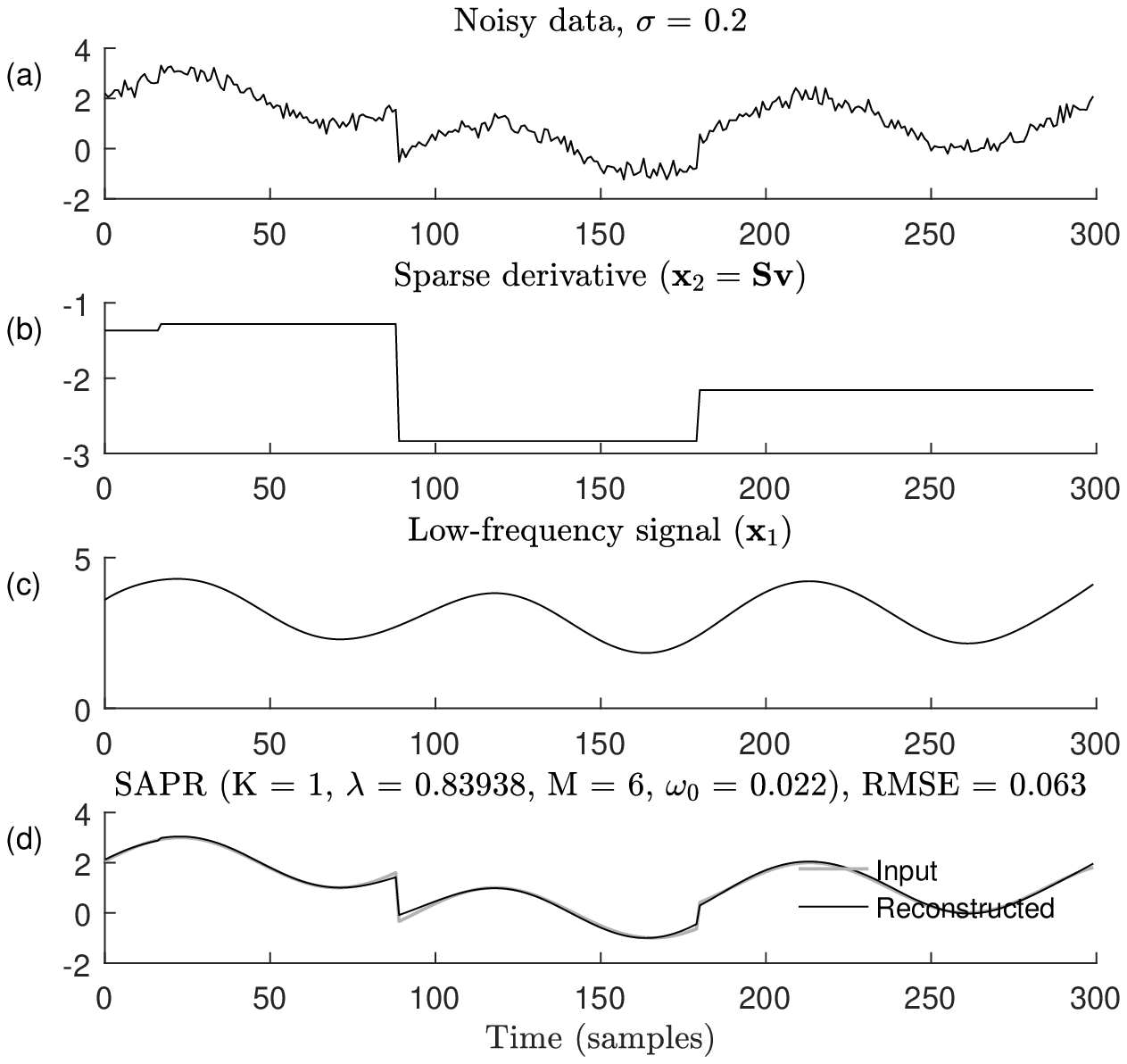}
\vspace{-1em}
\caption{Signal denoising. (a) Input signal with additive white Gaussian noise of $\sigma = 0.2$. (b) Reconstructed low-frequency signal with $\omega_0 = 0.044$ rad/s. (c) Reconstructed discontinuous signal modeled as a sparse-derivative signal. (d) Denoised input signal which is the sum of the reconstructed low-frequency and sparse-derivative signal. \label{fig:sig_denoise}}
\end{figure}
\subsubsection{Example}
We illustrate an example to demonstrate the performance of the SASD method proposed in Section \ref{sec:sig_denoising}. In Fig. \ref{fig:sig_denoise}(a), we plot a noisy measurement signal segment, where the original signal contains a low-frequency sinusoid with two discontinuities that appear at sample indices 90 and 180, respectively \cite{selesnick2012_poly}. To denoise the measured signal with minimum root mean-square error ($\mathrm{RMSE}$), we first need to determine the cut-off frequency of the low-pass filter $\omega_0$ and design the zero-phase high-pass. An estimate of the cut-off frequency is obtained from the Fourier spectrum plot of the measurement data. The cut-off frequency is set to $\omega_1 = 0.044 \pi$ rad/s, and we use that value to design an sixth-order zero-phase high-pass Butterworth filter ($M = 3$), denoted by $\bm{H}^\tpose \bm{H}$. In the preprocessing step, we extrapolate the input sequence at the start and end by $P = 20$ samples, obtained using first-order polynomial approximations of the first $P$ and last $P$ samples, respectively. In Fig. \ref{fig:sig_denoise}(b)-(d), we plot the low-frequency signal $\bm{x}_1$, and the $K$-order sparse derivative signal $\bm{x}_2$ for $K = 1$, and regularization parameter $\lambda = 1.0$. An initial estimate of $\lambda$ is determined using the `three-sigma' rule \cite{selesnick2014}. In Fig. \ref{fig:sig_denoise}(b) and (c), we plot the reconstructed low-frequency signal obtained by solving (\ref{eqn:sass_lpf_apprx}) and sparse-derivative signal given as $\bm{x}_2 = \bm{S} \bm{v}$, where $\bm{S}$ is given in (\ref{eqn:reconst_matrix}), respectively. The noise free low-frequency signal preserves the smoothness of the input signal whereas the reconstructed signal $\bm{x}_2$ preserves the discontinuities. Finally, in Fig. \ref{fig:sig_denoise}(d), we plot the denoised input signal. The denoised signal $\bm{x}$ is obtained from (\ref{eqn:x_hat_estimate}), with $\mathrm{RMSE} = 0.063$.

{\dred To demonstrate the robustness of the proposed filter designs in the SASD signal model, we perform a Monte Carlo simulation across different orders of filter and noise levels. In Table \ref{tab:sasd}, we present the average values of the root-mean square error and convergence/computation time of the proposed and existing methods for $100$ realizations of each setting of the order of the filter and noise level. As can be seen, our proposed filter designs when applied in the sparsity-assisted signal smoothing (SASS) signal model demonstrates consistent performance across different orders of filter when the noise level is held constant. In contrast, the SASS method, although computationally efficient, demonstrates variable performance when the noise levels are held constant and the order of the filters are changed. Furthermore, the low-pass filtering cannot preserve the discontinues, and thus, the root-mean-square error obtained using low-pass filtering increases on increasing the noise levels. Similarly, the total variation denoising method introduces staircase-like artifacts and cannot preserve the smoothness of the signal, and therefore, demonstrates higher root-mean-square error than the proposed method.}
\subsubsection{\dred Example}
We illustrate the application of SASD algorithm for denoising real electrocardiogram (ECG) signal. The purpose of this example is to validate the proposed filter designs using real data and also demonstrate the matrix factorization method proposed in Section \ref{sec:factorize_zero_phase_mats} when the order of sparsity $K = 2$. The noisy signal $\bm{y}$, as shown in light gray color in Fig. \ref{fig:ecg_denoise}, consists of two PQRST segments of the electrical activity of the heart. Our goal is to denoise the ECG signal such that the PQRST segments are preserved. We begin by applying a low-pass filter with normalized cut-off frequency $\omega_0 = 0.02$ rads/s. The low-pass filter can denoise P and T segmetns; however, as shown in Fig. \ref{fig:ecg_denoise}(a), it cannot preserve the peak-to-peak voltage of the QRS complex. In Fig. \ref{fig:ecg_denoise}(b), we apply the TVD with a regularization parameter $\lambda = 20$ which was determined using a grid-search technique. The TVD method suppresses the noise and also preserves the peak-to-peak voltage in the QRS complex; however, it introduces staircase-like artifacts because it treats smooth segments of the signal as piecewise constants \cite{figueiredo2006}. On the contrary, both SASS and SASD, preserve smoothness and discontinuities because the cost function to be minimized accounts for filtering, and ramp-like signals when $K = 2$ (second-order sparse derivative), i.e., QRS complex.

\begin{figure}[t!]
\centering
\includegraphics[width=0.47\textwidth, height=0.47\textwidth]{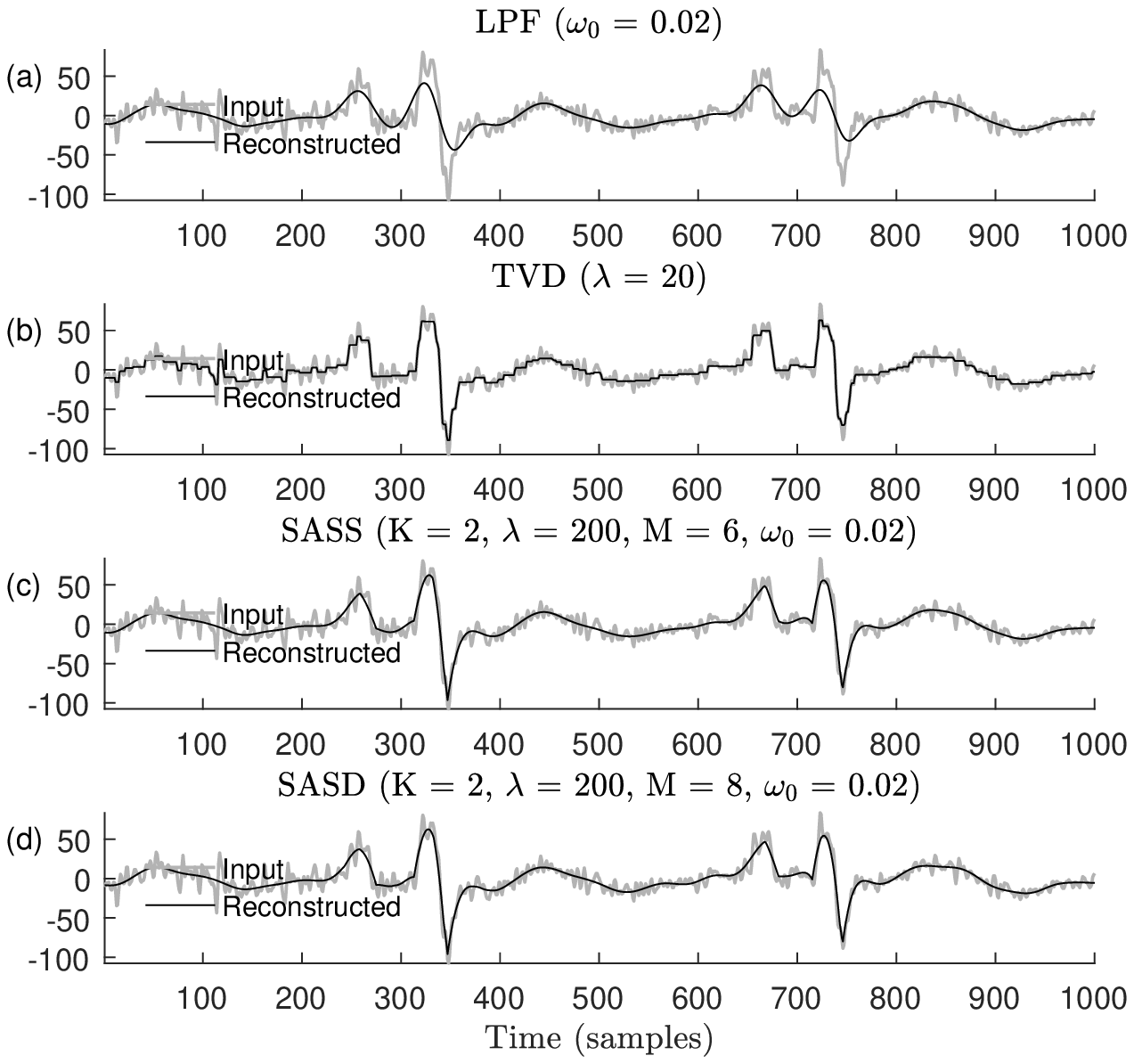}
\caption{ECG denoising. (a) Low-pass filtering with normalize cutoff frequency $\omega_0 = 0.02$ rads/s. (b) Total-variation denoising with $\lambda = 20$. (c) SASS using $\omega_0 = 0.02$ rads/s and $\lambda = 200$. (d) SASD using $\omega_0 = 0.02$ rads/s and $\lambda = 200$. \label{fig:ecg_denoise}}
\end{figure}
The output of the SASS algorithm in Fig. \ref{fig:ecg_denoise}(c) is obtained using $\lambda = 200$ and a sixth-order zero-phase high-pass Butterworth filter with normalized cutoff frequency $\omega_0 = 0.02$. The algorithm takes approximately $0.5$ seconds to converge. We keep the simulation parameters unchanged, i.e., $\lambda = 200$ and $\omega_0 = 0.02$, and plot the output of the SASD algorithm in Fig. \ref{fig:ecg_denoise}(d). Note that the cost function of SASD in (\ref{eqn:l1_min_cf_2}) does not use sparse banded matrices as filters and requires precomputed matrix $\bm{H}_1$ obtained using the matrix factorization method discussed in Section \ref{sec:factorize_zero_phase_mats}. The SASD algorithm takes approximately $5.0$ seconds to converge given that the factorized matrix, $\bm{H}_1$, obtained by solving (\ref{eqn:matrix_factor_opt_2}), is precomputed. Clearly, in terms of computational cost, the SASS algorithm outperforms the SASD because the zero-phase filters employed in the SASS algorithm are banded and sparse, which makes them computationally efficient. However, it is important to note that the proposed filter designs and matrix factorization method in Sections \ref{sec:zero_phase_mats} and \ref{sec:factorize_zero_phase_mats}, respectively, can be employed for signal denoising. 
\vfill\null 
\section{\dred Derivation: Sparsity-Assisted Pattern Recognition \label{appx_b}}
We apply the ADMM \cite[Chapter 3]{boyd2011} to solve the cost function in (\ref{eqn:cs_l1_min_cf_2}). We begin by decoupling the cost function in (\ref{eqn:cs_l1_min_cf_2}) using the `variable splitting' method. The cost can be written as
\begin{align}
\hspace{-1em}\arg \min_{\bm{k},\bm{u}_1}  &\biggl\{ \frac{1}{2} \norm{ \bm{H}^\tpose \bm{H} \bm{y} - \bm{B}^\tpose \bm{B} \boldsymbol\varPsi \bm{k} }_2^2 + \notag \\
& \qquad \qquad \qquad  \lambda_0 \norm{\bm{k}}_1 + \lambda_1 \norm{\bm{D} \boldsymbol\varPsi \bm{k}}_1 \biggr\} \notag \\
\mbox{such that}    &\quad \bm{u}_1 = \bm{k}. \label{eqn:var_split_kcom_model}
\end{align}
Applying ADMM to (\ref{eqn:var_split_kcom_model}) gives
\begin{subequations}
\begin{align}
\bm{u}_1 &\gets \arg \min_{\bm{u}_1} \biggl\{ \frac{1}{2} \norm{ \bm{H}^\tpose \bm{H} \bm{y} - \bm{B}^\tpose \bm{B} \boldsymbol\varPsi \bm{u}_1 }_2^2 + \biggr. \notag \\
         &\qquad \qquad \qquad \biggl. \frac{\mu}{2} \norm{\bm{u}_1 - \bm{k} - \bm{d}_1}_2^2  \biggr\} \label{eqn:kc_arg_min_u1} \\
\bm{k} &\gets \arg \min_{\bm{k}} \biggl\{ \lambda_0 \norm{\bm{k}}_1 + \lambda_1 \norm{\bm{D} \boldsymbol\varPsi \bm{k}}_1 + \notag  \\ 
    &\qquad \qquad \qquad \frac{\mu}{2} \norm{\bm{u}_1 - \bm{k} - \bm{d}_1}_2^2 \biggr\} \label{eqn:kc_arg_min_k} \\
\bm{d}_1 &\gets \bm{d}_1 - (\bm{u}_1 - \bm{k}) \label{eqn:update_u1}
\end{align}
\end{subequations}
To solve (\ref{eqn:kc_arg_min_u1}), we make the following substitutions: $\bm{M} =  \bm{B}^\tpose \bm{B} \boldsymbol\varPsi$, which simplifies (\ref{eqn:kc_arg_min_u1}) to
\begin{align}
\bm{u}_1 \gets \arg \min_{\bm{u}_1} \Bigl\{ \frac{1}{2} \norm{\bm{H}^\tpose \bm{H} \bm{y} - \bm{M} \bm{u}_1}^2_2 + \frac{\mu}{2} \norm{\bm{u}_1 - \bm{d}_1 - \bm{k}}^2_2 \Bigr\}. \notag
\end{align}
The above equation is a standard least squares problem whose solution is given as
\begin{align}
\bm{u}_1 \gets (\bm{M}^\tpose \bm{M} + \mu \bm{I})^{-1} (\bm{M}^\tpose \bm{H}^\tpose \bm{H} \bm{y} + \mu(\bm{k} + \bm{d}_1) ). \label{eqn:sol_arg_min_u}
\end{align}
Using the matrix inversion lemma \cite{harville1997matrix} to expand $(\bm{M}^\tpose \bm{M} + \mu \bm{I})^{-1}$, we can further simplify (\ref{eqn:sol_arg_min_u}) as
\begin{align}
\hspace{-0.5em}(\bm{M}^\tpose \bm{M} + \mu \bm{I})^{-1} = \frac{1}{\mu} \left[ \bm{I} - \bm{M}^\tpose (\mu \bm{I} + \bm{M} \bm{M}^\tpose)^{-1} \bm{M} \right] \label{eqn:matrix_inv_lemma}
\end{align}
Furthermore, the outer product of $\bm{M} \bm{M}^T = (\bm{B}^\tpose \bm{B})^2 $ where we used the generalized version of Parseval's identity for the $\mathrm{WDWT}$ operator, i.e., $\boldsymbol\Psi \boldsymbol\Psi^\tpose = \bm{I}$. Let $\bm{F}$ be defined as
\begin{align}
\bm{F} = \left[ \mu \bm{I} + (\bm{B}^\tpose \bm{B})^2 \right]^{-1}. \label{eqn:kc_matrix_inv}
\end{align}
Using (\ref{eqn:matrix_inv_lemma}), we can simplify (\ref{eqn:sol_arg_min_u}) as a two-step solution, which is implemented as
\begin{subequations}
\begin{align}
\bm{g}_1 &\gets \frac{1}{\mu} (\boldsymbol\varPsi^\tpose \bm{B}^\tpose \bm{B} \bm{H}^\tpose \bm{H} \bm{y} ) + (\bm{k} + \bm{d}_1)  \label{eqn:kc_sol_arg_min_g1}\\
\bm{u}_1 &\gets \bm{g}_1 - \boldsymbol\varPsi^\tpose \bm{B}^\tpose \bm{B} \bm{F} (\bm{B}^\tpose \bm{B} \boldsymbol\varPsi \bm{g}_1 ) \label{eqn:kc_sol_arg_min_u1}
\end{align}
\end{subequations}
To solve (\ref{eqn:kc_arg_min_k}), we again apply variable splitting and rewrite (\ref{eqn:kc_arg_min_k}) as
\begin{align}
&\bm{k} \gets \arg \min_{\bm{k}} \biggl\{ \frac{\mu}{2} \norm{\bm{u}_1 - \bm{d}_1 - \bm{k}}_2^2 + \notag \\ 
&\qquad \qquad \qquad \lambda_0 \norm{\bm{k}}_1 + \lambda_1 \norm{\bm{D} \boldsymbol\varPsi \bm{v}}_1 \biggr\} \notag \\
& \mbox{such that}  \quad \bm{v} = \bm{k}. \label{eqn:var_split_kc_arg_min_k}
\end{align}
Using the scaled augmented Lagrangian, we can minimize (\ref{eqn:var_split_kc_arg_min_k}) by developing the following iterative procedure
\begin{subequations}
\begin{align}
\bm{k} &\gets \arg \min_{\bm{k}} \biggl\{ \lambda_0 \norm{\bm{k}}_1 + \frac{\mu}{2} \norm{\bm{u}_1 - \bm{d}_1 - \bm{k}}_2^2 +  \notag \\
     &\qquad \qquad \qquad \frac{\eta}{2} \norm{\bm{v} - \bm{d}_2 - \bm{k}} \biggr\} \label{eqn:kc_arg_min_k1} \\
\bm{v} &\gets \arg \min_{\bm{v}} \biggl\{ \lambda_1 \norm{\bm{D} \boldsymbol\varPsi \bm{v}}_1 + \frac{\eta}{2} \norm{\bm{v} - \bm{d}_2 - \bm{k}} \biggr\} \label{eqn:kc_arg_min_v} \\
\bm{d}_2 &\gets \bm{d}_2 - (\bm{v} - \bm{k}) \label{eqn:update_d2}
\end{align}
\end{subequations}
To solve (\ref{eqn:kc_arg_min_k1}), we remove those terms that are not dependent on $\bm{k}$ from (\ref{eqn:kc_arg_min_k1}) and make the following substitution:
\begin{align}
\bm{p} = \left( \mu(\bm{u}_1 - \bm{d}_1) + \eta(\bm{v} - \bm{d}_2) \right)/(\mu + \eta) \label{eqn:kc_p}.
\end{align}
Therefore, (\ref{eqn:kc_arg_min_k1}) can be simplified as
\begin{align}
\bm{k} &\gets \arg \min_{\bm{k}} \biggl\{ \frac{\lambda_0}{\mu + \eta} \norm{\bm{k}}_1 + \frac{1}{2} \norm{\bm{p} - \bm{k}}^2_2 \biggr\}. \label{eqn:kc_sim_cf_k}
\end{align}
The solution of (\ref{eqn:kc_sim_cf_k}) is the solution to the least absolute shrinkage and selection operator (LASSO) problem \cite{donoho1995} and expressed as
\begin{align}
\bm{k} &\gets \mathrm{soft}(\bm{p},{\lambda_0}/({\mu + \eta})). \label{eqn:kc_k_cf_sol}
\end{align}
To solve (\ref{eqn:kc_arg_min_v}), we simplify the cost function as follows:
\begin{align}
\bm{v} &\gets \arg \min_{\bm{v}} \biggl\{ \lambda_1 \norm{\bm{D} \boldsymbol\varPsi \bm{v}}_1 + \frac{\eta}{2} \norm{\bm{v} - \bm{m}} \biggr\}  \\
     &= \arg \min_{\bm{v}} \biggl\{ \lambda_1 \Omega(\boldsymbol\varPsi \bm{v}) + \frac{\eta}{2} \norm{\bm{v} - \bm{m}} \biggr\} \\
     &= \arg \min_{\bm{v}} \biggl\{ \lambda_1 h(\bm{v}) + \frac{\eta}{2} \norm{\bm{v} - \bm{m}} \biggr\} \\
     &= \mathrm{prox}_{h}(\bm{m}) \\
     &= \bm{m} + \boldsymbol\varPsi^\tpose \left( \mathrm{prox}_{\Omega} (\boldsymbol\varPsi \bm{m}) - \boldsymbol\varPsi \bm{m} \right) \label{eqn:semi_ortho_tf}
\end{align}
where $\bm{m} = \bm{k} + \bm{d}_2$, $\Omega(\bm{x}) = \frac{\lambda_1}{\eta} \norm{\bm{D} \bm{x}}_1$, and $h(\bm{v}) = \Omega(\boldsymbol\varPsi \bm{v})$.
In (\ref{eqn:semi_ortho_tf}), we used the semi-orthonormal linear transform of the proximal operator \cite{combettes2011}. The proximal operator in (\ref{eqn:semi_ortho_tf}) can be further simplified as
\begin{align}
\mathrm{prox}_{\Omega} (\boldsymbol\varPsi \bm{m}) &= \arg \min_{\bm{x}} \biggl\{ \Omega(\bm{x}) + \frac{1}{2} \norm{\bm{x} - \boldsymbol\varPsi \bm{m}}_2^2 \biggr\} \notag \\
                           &= \arg \min_{\bm{x}} \biggl\{ \frac{\lambda_1}{\eta} \norm{\bm{D} \bm{x}}_1 + \frac{1}{2} \norm{\bm{x} - \boldsymbol\varPsi \bm{m}}_2^2 \biggr\} \notag \\
                           &= \mathrm{tvd}(\boldsymbol\varPsi \bm{m}, \lambda_1/\eta).
\end{align}
where $\mathrm{tvd}(\cdot,\cdot)$ represents the solution to the total-variation denoising problem \cite{rinaldo2009,tibshirani2009}. Therefore, the solution to the optimization problem in (\ref{eqn:kc_arg_min_v}) can be written in two-steps as
\begin{subequations}
\begin{align}
\bm{m} &\gets \bm{d}_2 + \bm{k} \label{eqn:sol_v_1} \\
\bm{v} &\gets \bm{m} + \boldsymbol\varPsi^\tpose \left( \mathrm{tvd}(\boldsymbol\varPsi \bm{m}, \lambda_1/\eta) - \boldsymbol\varPsi \bm{m} \right) \label{eqn:sol_v_2}.
\end{align}
\end{subequations}
The details of implementing the SAPR algorithm is listed as Algorithm \ref{alg:sapr_kcomp}.
\begin{algorithm}[t!]
\small
\setstretch{1.05}
\caption{Sparsity-Assisted Pattern Recognition (\ref{eqn:cs_l1_min_cf_2})}\label{alg:sapr_kcomp}
\begin{algorithmic}
\Procedure{SAPR}{$\bm{y}$, $\bm{H}$, $\bm{B}$, $\lambda_0$, $\lambda_1$, $\mu$, $\eta$}
\State \textbf{initialize}
\State $\bm{F} = \left[ \mu \bm{I} + (\bm{B}^\tpose \bm{B})^2 \right]^{-1}$ \Comment{From (\ref{eqn:kc_matrix_inv})},
\State $\bm{k} \gets \boldsymbol\varPsi^\tpose \bm{B}^\tpose \bm{B} \bm{y}$, $\bm{v} \gets \bm{k}$
\State $\bm{d}_1 \gets \bm{0}$, $\bm{d}_2 \gets \bm{0}$
\State $\bm{b}_1 \gets (1/\mu) \boldsymbol\varPsi^\tpose \bm{B}^\tpose \bm{B} \bm{H}^\tpose \bm{H} \bm{y}$
\Repeat
  \State $\bm{g}_1 \gets \bm{b}_1 + \bm{k} + \bm{d}_1$ \Comment{From (\ref{eqn:kc_sol_arg_min_g1})}
  \State $\bm{u}_1 \gets \bm{g}_1 - \boldsymbol\varPsi^\tpose \bm{B}^\tpose \bm{B} \bm{F} (\bm{B}^\tpose \bm{B} \boldsymbol\varPsi \bm{g}_1 )$ \Comment{From (\ref{eqn:kc_sol_arg_min_u1})}
  \State $\bm{p} \gets { \mu(\bm{u}_1 - \bm{d}_1) + \eta(\bm{v} - \bm{d}_2) }/{(\mu + \eta)}$ \Comment{From (\ref{eqn:kc_p})}
  \State $\bm{k} \gets \mathrm{soft}\left(\bm{p},{\lambda_0}/{(\mu + \eta)}\right)$ \Comment{From (\ref{eqn:kc_k_cf_sol})}
  \State $\bm{m} \gets \bm{d}_2 + \bm{k}$ \Comment{From (\ref{eqn:sol_v_1})}
  \State $\bm{v} \gets \bm{m} + \boldsymbol\varPsi^\tpose \left( \mathrm{tvd} \left(\boldsymbol\varPsi \bm{m}, {\lambda_1}/{\eta} \right) - \boldsymbol\varPsi \bm{m} \right)$ \Comment{From (\ref{eqn:sol_v_2})}
  \State $\bm{d}_1 \gets \bm{d}_1 - (\bm{u}_1 - \bm{k})$ \Comment{From (\ref{eqn:update_u1})}
  \State $\bm{d}_2 \gets \bm{d}_2 - (\bm{v} - \bm{k})$ \Comment{From (\ref{eqn:update_d2})}
\Until{convergence}
\State \Return $\bm{B}^\tpose \bm{B} \boldsymbol\varPsi\bm{k}$
\EndProcedure
\end{algorithmic}
\end{algorithm}
\section{\dred K-Complex Detection\label{sec:param_kcomplex}}
\subsection{Parameter Selection}
In this Appendix, we develop a methodology to tune the regularization parameters to solve the optimization problem in (\ref{eqn:cs_l1_min_cf_2}). We begin by the discretizing the regularization parameters so that $\lambda_0 \in [100,160]$ and $\lambda_1 \in [10,70]$, in step sizes of $5$ and $5$, respectively. Because performing a grid search on the entire K-complex EEG database in \cite{devuyst2010} is computationally expensive, we choose a small set of epochs to determine the feasible operating region of the regularization parameters. We select one epoch from each dataset, such that the minimum sampling rate of the dataset is $200$ Hz, the dataset is annotated by at least two experts, and the selected $30$ second epoch contains maximum number of K-complexes. Note that the length of each epoch is determined based on an established scoring criteria \cite{berry2012aasm} used by experts when scoring K-complexes manually.

\begin{figure}[t!]
\centering
\includegraphics[width=0.49\textwidth, height=0.32\textwidth]{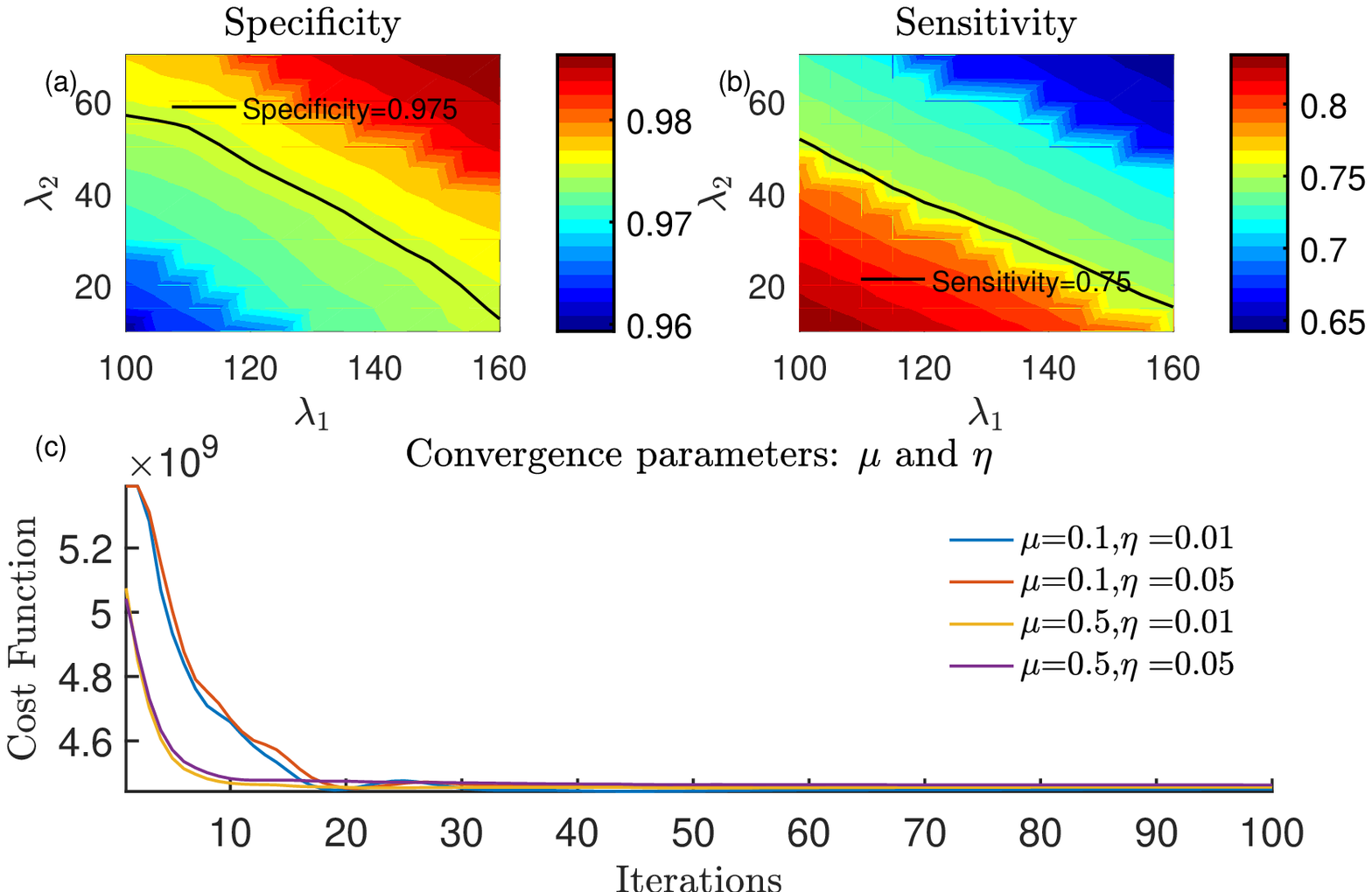}
\caption{Regularization parameters for SAPR algorithm. (a) Specificity plot across all values of $\lambda_1$ and $\lambda_2$. (b) Cost function across different values of $\mu$ and $\eta$.\label{fig:rparam_kcomp}}
\end{figure}
Next, to determine a feasible operating region of the regularization parameters, we perform a grid search using the selected epochs (total of five epochs where each epoch is of $30$ seconds length, i.e., $N = 6000$ samples). To evaluate the performance of the SAPR method, we compute the specificity and sensitivity across all sample points of the selected epochs for all regularization parameters. In Fig. \ref{fig:rparam_kcomp}(a) and (b), we plot the average value of the sensitivity and specificity across all $\lambda_1$ and $\lambda_2$. As can be seen, regions consisting of high values of specificity demonstrate low values of sensitivity and vice versa. To find a good balance between specificity and sensitivity, we use contour plot (solid black line in Fig. \ref{fig:rparam_kcomp}(a) and (b)) to indicate the regions where the specificity and sensitivity are $0.975$ and $0.75$, respectively. The contour represents the feasible operating region of the regularization parameters. Thereafter, to determine the parameter $\mu$ and $\eta$ which affect the rate of convergence of the SAPR algorihtm, we compute the average value of the cost function in (\ref{eqn:l1_min_cf_2}) for the selected epochs for different combinations of $\mu$ and $\eta$. Note that the parameters $\mu$ and $\eta$ does not affect the final value of the cost function. As can be seen in Fig. \ref{fig:rparam_kcomp}(c), the algorithm converges fastest when $\mu = 0.5$ and $\eta = 0.01$.

\begin{figure}[t!]
\centering
\includegraphics[width=0.49\textwidth, height=0.15\textwidth]{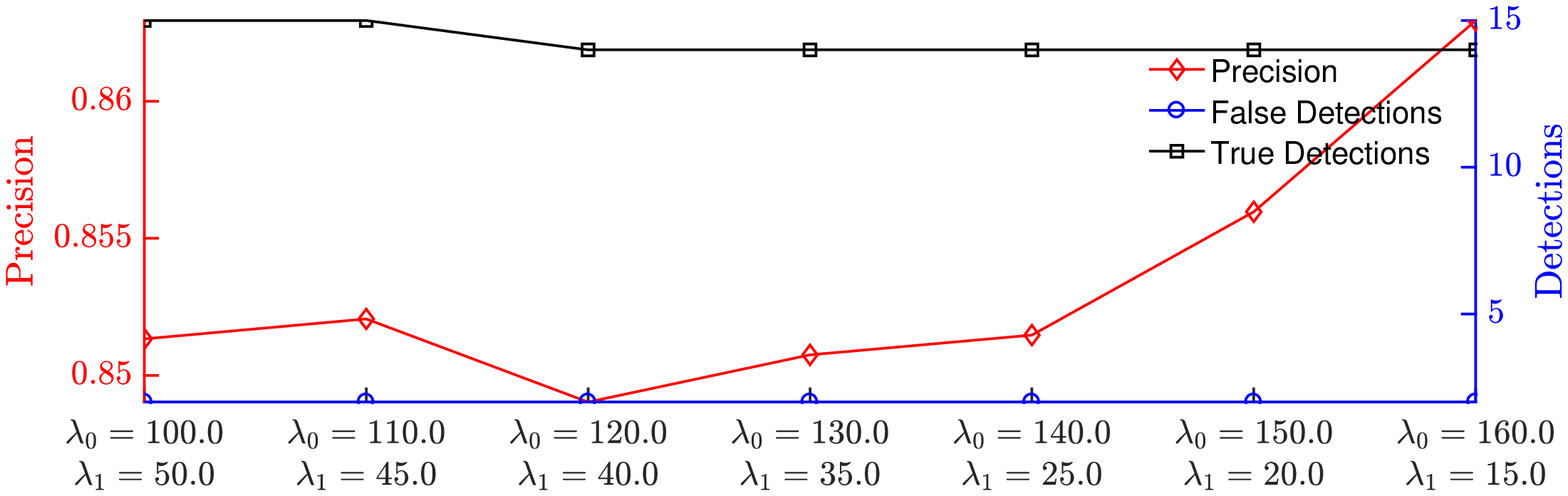}
\caption{Precision, false detections, and true detections across different values of $\lambda_1$ and $\lambda_2$. \label{fig:rparam_kcomp_1}}
\end{figure}
Using Fig. \ref{fig:rparam_kcomp_1}, we reduced the search space of the regularization parameters from a two dimensional space to a feasible operating region represented as the space between the two solid lines in Fig. \ref{fig:rparam_kcomp}(a) and (b). Finally, to find an optimal value of the regularization parameters $\lambda_1$ and $\lambda_2$ in the feasible operating region, and compute various performance measures across different values of $\lambda_1$ and $\lambda_2$ for the selected epochs, and plot these measures in Fig. \ref{fig:rparam_kcomp_1}. As can be seen, the number of correctly and falsely detected events does not vary across the regularization parameters. However, the average value of precision across all sample points of the selected epochs increases on increasing the regularization parameters. To find a good balance between the performance metrics identified in Fig. \ref{fig:rparam_kcomp_1}, we select $\lambda_1 = 160.0$ and $\lambda_2 = 15.0$. Note that we determined the regularization parameters using selected epochs of the K-complex database \cite{devuyst2010}. For our sample, there were four females and the mean age $27.40 \pm 11.055$ years. The peak-to-peak voltage of the K-complex signal depends on the average age group of the cohort. For instance, an elderly cohort tends to generate a lower average peak-to-peak voltage than a young cohort. In such scenarios, the regularization parameters can be determined using the methodology in described in Appendix \ref{sec:param_kcomplex}, or by rescaling the amplitude of the original input signal to match the average peak-to-peak voltage of the \cite{devuyst2010} database and using the optimal regularization parameters of the K-complex database in \cite{devuyst2010}.
\subsection{Examples}
In Fig. \ref{fig:kcomplex_detoks}, we plot the output of the DETOKS algorithm for K-complex detection using sleep-EEG data from \texttt{excerpt4.edf} dataset. The DETOKS algorithm detects the K-complex regions accurately. However, it cannot separate K-complexes that appear close to each other. In addition, the TEKO is non-zero in regions where there is no K-complex signal. In contrast, as shown in Fig. \ref{fig:kcomplex}, the SAPR algorithm generates individual peaks for every K-complex pattern it detects. Further, the TKEO output generated using the SAPR algorithm is zero when there is no K-complex signal.
\begin{figure}[t!]
\centering
\includegraphics[width=0.48\textwidth, height=0.47\textwidth]{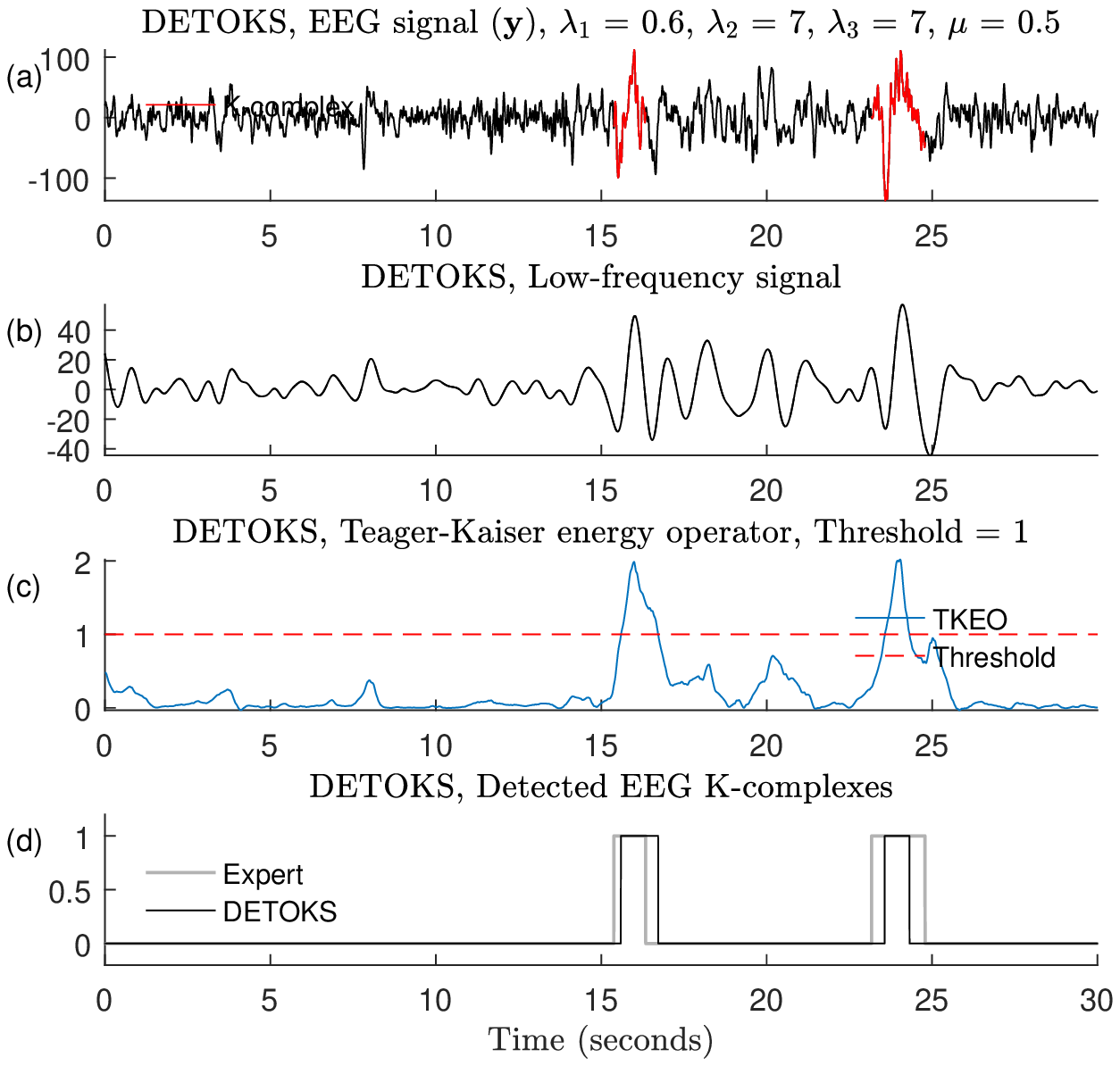}
\caption{DETOKS for K-complex detection. (a) 30 second epoch of sleep-EEG data obtained from \texttt{excerpt4.edf}. The epoch consists of two K-complexes as identified by two experts. (b) Low-pass filtered signal obtained after removing the oscillatory and transient components. (c) Signal obtained by applying the Teager-Kaiser energy operator on low-frequency signal. (d) Expert and algorithm annotated K-complex regions \label{fig:kcomplex_detoks}}
\end{figure}

In Fig. \ref{fig:kcomplex_sapr}, we plot the output of the SAPR algorithm for \texttt{excerpt1.edf}. The sleep-EEG epoch in Fig. \ref{fig:kcomplex}(a) consists of two K-complex regions as annotated by experts and slow wave activity around $6-12$ seconds. The AASM scoring manual defines slow wave activity as ``waves of frequency $0.5-2.0$ Hz and peak-to-peak amplitude $>75$ $\mu$V, measured over the frontal regions referenced to the contralateral ear or mastoid. K-complexes would be considered slow waves if they meet the definition of slow wave activity.'' To minimize the number of false detections caused by slow wave activity, we use an upper threshold of $2.25$ seconds to remove the slow wave activity. In addition, if the $\rm TKEO$ energy is above a certain fixed threshold, and two or more peaks representing the K-complex patterns that are closely separated (within $1.5$ seconds duration) are detected, then we select the first peak. By selecting the first peak, we are minimizing the number of false detections caused due to slow wave sleep which belongs to the same frequency band as the K-complex signal.
\begin{figure}[h!]
\centering
\includegraphics[width=0.48\textwidth, height=0.47\textwidth]{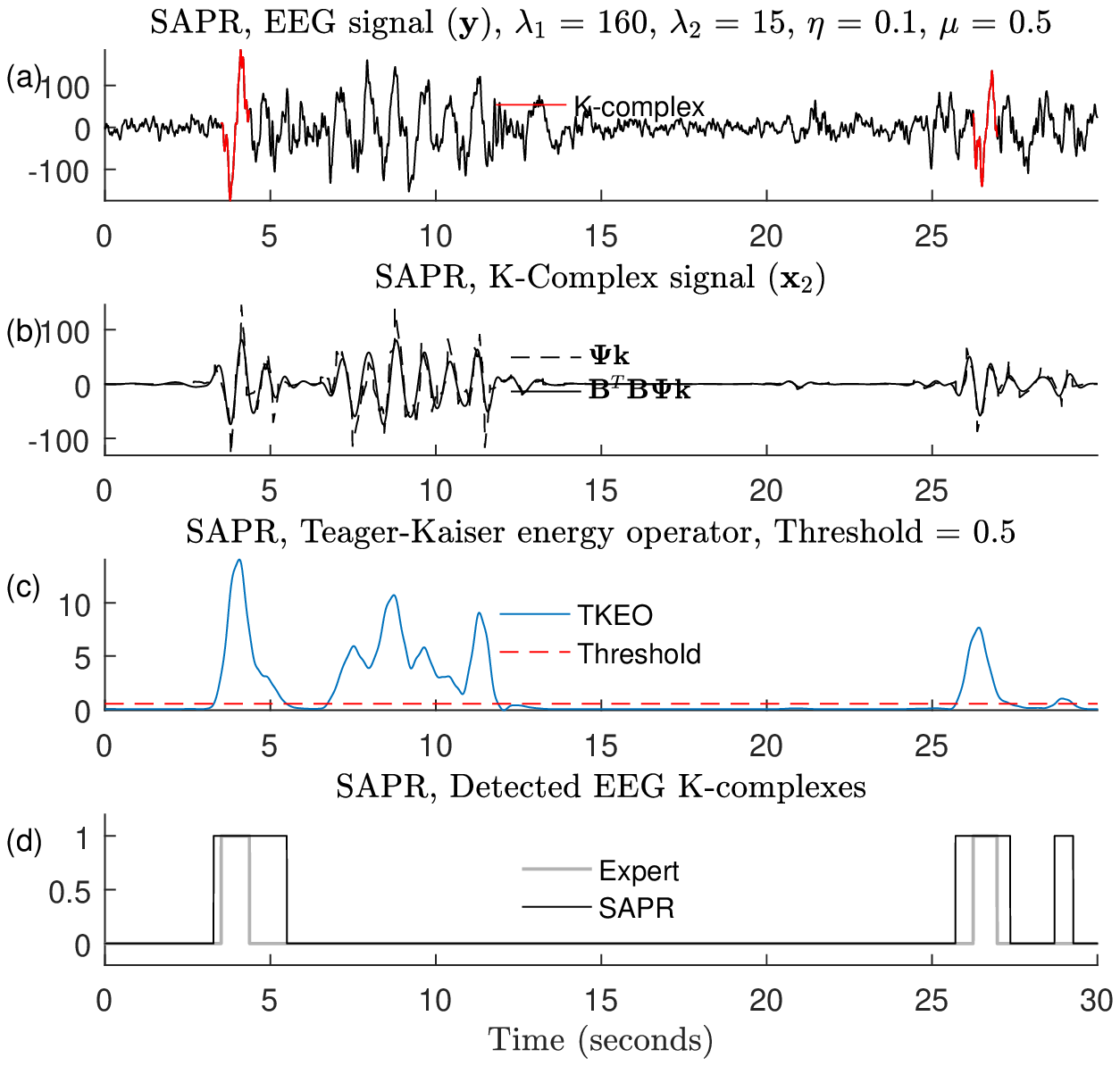}
\caption{SAPR for K-complex detection. (a) 30 second epoch of sleep-EEG data obtained from \texttt{excerpt1.edf}. The epoch consists of two K-complexes as identified by two experts. (b) Reconstructed K-complex signal $\boldsymbol\varPsi \bm{k}$ and its corresponding band-pass filtered component $\bm{B} \bm{B}^\tpose \boldsymbol\varPsi \bm{k}$ obtained using the SAPR method. (c) Signal obtained by applying the Teager-Kaiser energy operator on $\bm{B} \bm{B}^\tpose \boldsymbol\varPsi \bm{k}$. (d) Expert and algorithm annotated K-complex regions \label{fig:kcomplex_sapr}}
\end{figure}
\vfill\null 
\section{\dred Derivation: Sparsity-Assisted Signal Denoising and Pattern Recognition \label{appx_spindle}}
We apply the ADMM \cite[Chapter 4]{boyd2011} to solve the cost function in (\ref{eqn:spin_l1_min_cf_2}). We begin by decoupling the cost function in (\ref{eqn:spin_l1_min_cf_2}) using the `variable splitting' method. The cost can be written as
\begin{align}
\arg \min_{\bm{c}, \bm{x}_3, \bm{u}_1, \bm{u}_2} &\quad \biggl\{ \frac{1}{2} \norm{ \bm{H}^\tpose \bm{H} \left( \bm{y} - \bm{x}_3 \right) - \bm{B}^\tpose \bm{B} \boldsymbol\varPhi \bm{c} }_2^2 + \notag \\ 
                                          & \qquad \qquad \lambda_0 \norm{\bm{c}}_1 + \lambda_1 \norm{ \bm{D} \bm{x}_3}_1 + \lambda_2 \norm{\bm{x}_3}_1 \biggr\}, \notag \\
                                          \mbox{such that} &\quad \bm{u}_1 = \bm{c}, \quad \bm{u}_2 = \bm{x}_3. \label{eqn:var_split_spin_model}
\end{align}
Using scaled augmented Lagrangian, we can further minimize (\ref{eqn:var_split_spin_model}) using an iterative procedure as follows:
\begin{subequations}
\begin{align}
\bm{u}_1, \bm{u}_2 &\gets \arg \min_{\bm{u}_1, \bm{u}_2} \biggl\{ \frac{1}{2} \norm{ \bm{H}^\tpose \bm{H} \left( \bm{y} - \bm{u}_2 \right) - \bm{B}^\tpose \bm{B} \boldsymbol\varPhi \bm{u}_1 }_2^2 + \notag \\
                   &\quad \frac{\mu}{2} \norm{\bm{u}_1 - \bm{d}_1 - \bm{c}}_2^2 + \frac{\mu}{2} \norm{\bm{u}_2 - \bm{d}_2 - \bm{x}_3}_2^2 \biggr\} \label{eqn:u1_u2_cf}\\
\bm{c}, \bm{x}_3   &\gets \arg \min_{\bm{c}, \bm{x}_3} \biggl\{ \lambda_0 \norm{\bm{c}}_1 + \lambda_1 \norm{ \bm{D} \bm{x}_3}_1 + \lambda_2 \norm{\bm{x}_3}_1 + \notag \\
                   &\quad \frac{\mu}{2} \norm{\bm{u}_1 - \bm{d}_1 - \bm{c}}_2^2 + \frac{\mu}{2} \norm{\bm{u}_2 - \bm{d}_2 - \bm{x}_3}_2^2 \biggr\} \label{eqn:c_x3_cf}\\
\bm{d}_1           &\gets \bm{d}_1 - (\bm{u}_1 - \bm{c})    \label{eqn:u1_cf}\\
\bm{d}_2           &\gets \bm{d}_2 - (\bm{u}_2 - \bm{x}_3)  \label{eqn:u2_cf}
\end{align}
\end{subequations}
where $\mu > 0$ is an auxillary variable.  To solve (\ref{eqn:u1_u2_cf}), we make the following substituion:
\begin{align}
\bm{u} = \begin{bmatrix} \bm{u}_1 \\ \bm{u}_2 \end{bmatrix}, \quad \bm{d} = \begin{bmatrix} \bm{d}_1 \\ \bm{d}_2 \end{bmatrix}, \quad \bm{m} = \begin{bmatrix} \bm{c} \\ \bm{x}_3 \end{bmatrix},
\end{align} 
and $\bm{M} = [\bm{B}^\tpose \bm{B} \boldsymbol\varPhi \quad \bm{H}^\tpose \bm{H}]$. The optimization problem in (\ref{eqn:u1_u2_cf}) can be rewritten as
\begin{align}
\bm{u} &\gets \arg \min_{\bm{u}} \biggl\{ \frac{1}{2} \norm{\bm{H}^\tpose \bm{H} \bm{y} - \bm{M} \bm{u} }_2^2 + \norm{\bm{u} - \bm{d} - \bm{m} }_2^2 \biggr\} \label{eqn:sim_arg_min_u}
\end{align}
Equation (\ref{eqn:sim_arg_min_u}) is a standard least squares problem whose solution is given as
\begin{align}
\bm{u} \gets (\mu \bm{I} + \bm{M}^\tpose \bm{M} )^{-1} (\bm{M}^\tpose \bm{H}^\tpose \bm{H} \bm{y} + \mu(\bm{d} + \bm{m}) ). \label{eqn:sol_arg_min_u}
\end{align}
Using the matrix inversion lemma \cite{harville1997matrix} to expand $(\bm{M}^\tpose \bm{M} + \mu \bm{I})^{-1}$, we can further simplify (\ref{eqn:sol_arg_min_u}) as
\begin{align}
\hspace{-0.5em}(\bm{M}^\tpose \bm{M} + \mu \bm{I})^{-1} = \frac{1}{\mu} \left[ \bm{I} - \bm{M}^\tpose (\mu \bm{I} + \bm{M} \bm{M}^\tpose)^{-1} \bm{M} \right] \label{eqn:matrix_inv_lemma}
\end{align}
The outer product of $\bm{M} \bm{M}^T = (\bm{H}^\tpose \bm{H})^2 + (\bm{B}^\tpose \bm{B})^2 $, obtained by using the generalized version of Parseval's identity for the $\mathrm{STFT}$ operator. Define
\begin{align}
\bm{F} = \left[ \mu \bm{I} + (\bm{B}^\tpose \bm{B})^2 + (\bm{H}^\tpose \bm{H})^2 \right]^{-1}. \label{eqn:matrix_inverted}
\end{align}
Using (\ref{eqn:matrix_inv_lemma}), we can simplify (\ref{eqn:sol_arg_min_u}) as a two-step solution, which is implemented as
\begin{subequations}
\begin{align}
\bm{g}_1 &\gets (\bm{c} + \bm{d}_1) + \frac{1}{\mu} (\boldsymbol\varPhi^\tpose \bm{B}^\tpose \bm{B} \bm{H}^\tpose \bm{H} \bm{y} )              \label{eqn:sol_g1} \\
\bm{g}_2 &\gets (\bm{x}_3 + \bm{d}_2) + \frac{1}{\mu} ( (\bm{H}^\tpose \bm{H})^2 \bm{y}) \label{eqn:sol_g2} \\
\bm{g} &\gets (\bm{B}^\tpose \bm{B} \boldsymbol\varPhi \bm{g}_1 + \bm{H}^\tpose \bm{H} \bm{g}_1) \label{eqn:sol_g1_g2} \\
\bm{u}_1 &\gets \bm{g}_1 - \boldsymbol\varPhi^\tpose \bm{B}^\tpose \bm{B} \bm{F} \bm{g} \label{eqn:sol_u1} \\
\bm{u}_2 &\gets \bm{g}_2 - \bm{H}^\tpose \bm{H} \bm{F} \bm{g} \label{eqn:sol_u2}
\end{align}
\end{subequations}
The optimization problem in (\ref{eqn:c_x3_cf}) can be further simplified as
\begin{subequations}
\begin{align}
\bm{c}   &\gets \arg \min_{\bm{c}} \biggl\{ \lambda_0 \norm{\bm{c}}_1 + \frac{\mu}{2} \norm{\bm{u}_1 - \bm{d}_1 - \bm{c}}_2^2 \biggr\}, \label{eqn:c_cf} \\
\bm{x}_3 &\gets \arg \min_{\bm{x}_3} \biggl\{ \lambda_1 \norm{ \bm{D} \bm{x}_3}_1 + \lambda_2 \norm{\bm{x}_3}_1 + \notag \\ 
         &\qquad \qquad \qquad \frac{\mu}{2} \norm{\bm{u}_2 - \bm{d}_2 - \bm{x}_3}_2^2  \biggr\} \label{eqn:x3_cf}.
\end{align} 
\end{subequations}
The solution of (\ref{eqn:c_cf}) is the solution to the least absolute shrinkage and selection operator (LASSO) problem \cite{donoho1995} and expressed as
\begin{align}
\bm{c} &\gets \mathrm{soft}(\bm{u}_1 - \bm{d}_1,\lambda_0/\mu) \label{eqn:c_cf_sol}
\end{align}
where the soft-threshold function is defined as
\begin{align}
\mathrm{soft}(x,T) \triangleq \begin{cases} x - T(x/|x|), &\quad |x| > T \\
                      0,        &\quad |x| \le T \end{cases}.
\end{align}
The solution of (\ref{eqn:x3_cf}) is the solution to the fused LASSO \cite{rinaldo2009,tibshirani2009} and expressed as
\begin{align}
\bm{x}_3 &\gets \mathrm{soft}\left(\mathrm{tvd}\left(\bm{u}_2 - \bm{d}_2,\lambda_1/\mu\right),\lambda_2/\mu\right) \label{eqn:sol_x3}
\end{align}
where $\mathrm{tvd}(\cdot,\cdot)$ represents the solution to the total-variation denoising problem. The details of implementing the SASDPR algorithm is listed as Algorithm \ref{alg:sasdpr_spindle}.
\begin{algorithm}[t!]
\small
\setstretch{1.05}
\caption{\small Sparsity-Assisted Signal Denoising and Pattern Recognition (\ref{eqn:spin_l1_min_cf_2})}\label{alg:sasdpr_spindle}
\begin{algorithmic}
\Procedure{SASDPR}{$\bm{y}$, $\bm{H}$, $\bm{B}$, $\lambda_0$, $\lambda_1$, $\lambda_2$, $\mu$}
\State \textbf{initialize}
\State $\bm{F} = \left[ \mu \bm{I} + (\bm{B}^\tpose \bm{B})^2 + (\bm{H}^\tpose \bm{H})^2 \right]^{-1}$ \Comment{From (\ref{eqn:matrix_inverted})},
\State $\bm{c} \gets \boldsymbol\varPhi^\tpose \bm{B}^\tpose \bm{B} \bm{y}$, $\bm{x}_3 \gets \bm{H}^\tpose \bm{H} \bm{y}$
\State $\bm{d}_1 \gets \bm{0}$, $\bm{d}_2 \gets \bm{0}$
\State $\bm{b}_1 \gets (1/\mu) \boldsymbol\varPhi^\tpose \bm{B}^\tpose \bm{B} \bm{H}^\tpose \bm{H} \bm{y}$, $\bm{b}_2 \gets (1/\mu) (\bm{H}^\tpose \bm{H})^2 \bm{y}$ 
\Repeat
  \State $\bm{g}_1 \gets \bm{b}_1 + \bm{c} + \bm{d}_1$ \Comment{From (\ref{eqn:sol_g1})}
  \State $\bm{g}_2 \gets \bm{b}_2 + \bm{x}_3 + \bm{d}_2$ \Comment{From (\ref{eqn:sol_g2})}
  \State $\bm{g} \gets \bm{B}^\tpose \bm{B} \boldsymbol\varPhi \bm{g}_1 + \bm{H}^\tpose \bm{H} \bm{g}_1$ \Comment{From (\ref{eqn:sol_g1_g2})}
  \State $\bm{u}_1 \gets \bm{g}_1 - \boldsymbol\varPhi^\tpose \bm{B}^\tpose \bm{B} \bm{F} \bm{g}$ \Comment{From (\ref{eqn:sol_u1})} 
  \State $\bm{u}_2 \gets \bm{g}_2 - \bm{H}^\tpose \bm{H} \bm{F} \bm{g}$ \Comment{From (\ref{eqn:sol_u2})}
  \State $\bm{c} \gets \mathrm{soft}\left(\bm{u}_1 - \bm{d}_1, {\lambda_0}/{\mu} \right)$ \Comment{From (\ref{eqn:c_cf_sol})}
  \State $\bm{x}_3 \gets \mathrm{soft}\left(\mathrm{tvd}\left(\bm{u}_2 - \bm{d}_2,{\lambda_1}/{\mu}\right),{\lambda_2}/{\mu}\right)$ \Comment{From (\ref{eqn:sol_x3})}
  \State $\bm{d}_1 \gets \bm{d}_1 - (\bm{u}_1 - \bm{c})$ \Comment{From (\ref{eqn:u1_cf})} 
  \State $\bm{d}_2 \gets \bm{d}_2 - (\bm{u}_2 - \bm{x}_3)$  \Comment{From (\ref{eqn:u2_cf})}
\Until{convergence}
\State \Return $\bm{B}^\tpose \bm{B} \boldsymbol\varPhi\bm{c}$
\EndProcedure
\end{algorithmic}
\end{algorithm}
\section{\dred Sleep Spindle Detection\label{sec:param_spindle}}
\subsection{Parameter Selection}
In this Appendix, we develop a methodology to tune the regularization parameters to solve the optimization problem in (\ref{eqn:spin_l1_min_cf_2}). We begin by the discretizing the regularization parameters so that $\lambda_1 \in [0.30,0.80]$, $\lambda_1 \in [3.0,6.0]$, and $\lambda_2 \in [3.0,6.0]$, in step sizes of $0.1$, $0.2$, and $0.2$, respectively. To minimize the computational load in identifying the feasible operating region, we use the same approach as discussed in Appendix \ref{sec:param_kcomplex}, and select one epoch from every dataset.

\begin{figure}[t!]
\centering
\includegraphics[width=0.49\textwidth, height=0.46\textwidth]{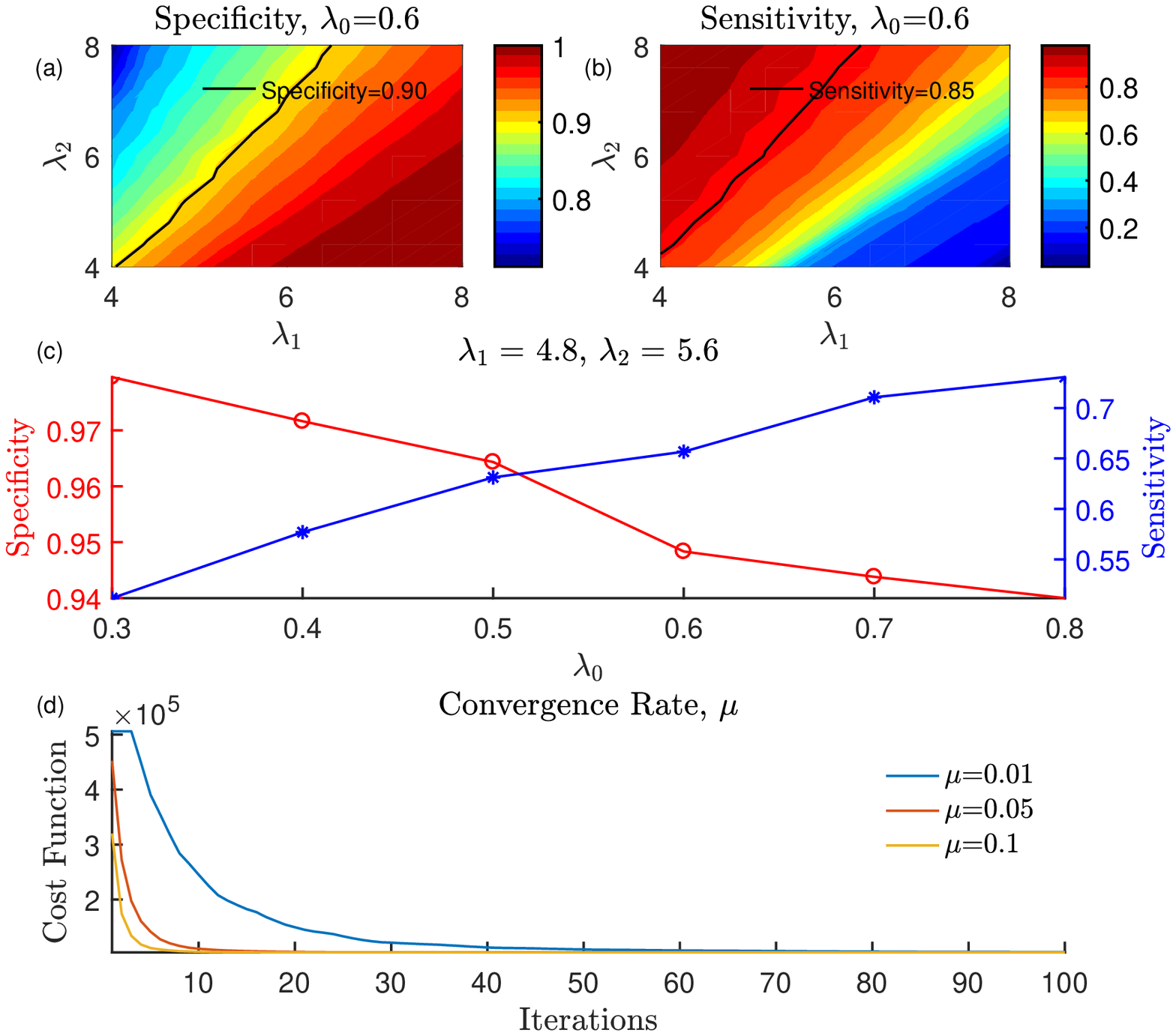}
\caption{Regularization parameters for SASDPR algorithm. (a) Specificity plot across all values of $\lambda_1$ and $\lambda_2$ for fixed $\lambda_0 = 0.6$. (b) Sensitivity plot across all values of $\lambda_1$ and $\lambda_2$ for fixed $\lambda_0 = 0.6$. (c) Specificity and sensitivity across different $\lambda_0$. (d) Cost function across different values of $\mu$. \label{fig:rparam_spindle}}
\end{figure}
Next, to determine a feasible operating region of the regularization parameters, we perform a grid search using the selected epochs (total of four epochs, each of $30$ seconds length, i.e., $N = 6000$ sample points). To evaluate the performance of the SASDPR method, we compute specificity and sensitivity across all sample points of the selected epochs. In Fig. \ref{fig:rparam_spindle}(a) and (b), we plot the average value of the sensitivity and specificity across all $\lambda_1$ and $\lambda_2$ for fixed $\lambda_0 = 0.6$. As can be seen, regions containing high values of specificity demonstrate low sensitivity and vice versa. To find a good balance between specificity and sensitivity, we use contour plot (solid black line in Fig. \ref{fig:rparam_spindle}(a) and (b)) to indicate the regions where the specificity and sensitivity are $0.90$ and $0.85$, respectively. To verify that $\lambda_0 = 0.6$ generates the best performance, we plot the specificity and sensitivity for different values of $\lambda_0$ by selecting one point of $\lambda_1$ and $\lambda_2$, i.e., $\lambda_1 = 4.8$ and $\lambda_2 = 5.6$ which belongs to the feasible operating region of the regularization parameters. As can be seen in Fig. \ref{fig:rparam_spindle}(b), the two curves representing specificity and sensitivity intersect in the interval of $\lambda_0 \in [0.5,0.6]$. Finally, to determine the parameter $\mu$, we compute the average value of the cost function in (\ref{eqn:spin_l1_min_cf_2}) across the selected epochs for different values of $\mu$. Note that $\mu$ only affects the convergence rate of the algorithm and does not affect the final value of the cost function. As can be seen in Fig. \ref{fig:rparam_spindle}(c), the algorithm converges fastest when $\mu = 0.1$.

\begin{figure}[t!]
\centering
\includegraphics[width=0.49\textwidth, height=0.15\textwidth]{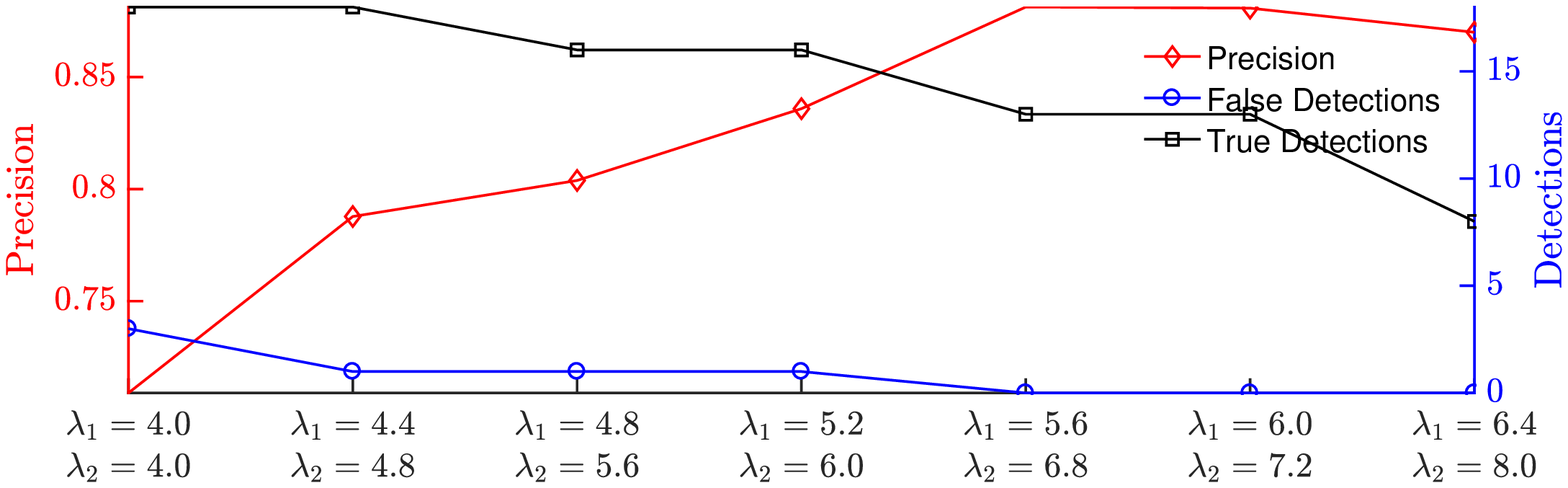}
\caption{Precision, false detections, and true detections across different values of $\lambda_1$ and $\lambda_2$ when $\lambda_0 = 0.6$. \label{fig:rparam_spindle_1}}
\end{figure}
Using Fig. \ref{fig:rparam_spindle}, we reduced the search space of the regularization parameters from a three dimensional space to a feasible operating region between the two solid lines in Fig. \ref{fig:rparam_spindle}(a) and (b). To find the best $\lambda_1$ and $\lambda_2$, we select points in the feasible operating region and plot the performance of the SASDPR algorithm in Fig. \ref{fig:rparam_spindle_1} for the selected epochs. We notice that the number of correctly and falsely detected events decreases on increasing the value of the regularization parameters. However, the average value of precision across all sample points of the selected epochs increases on increasing the value of the regularization parameters. To find a good balance between the performance metrics identified in Fig. \ref{fig:rparam_spindle_1}, we select $\lambda_1 = 4.8$ and $\lambda_2 = 5.6$.
\subsection{Example}
In Fig. \ref{fig:spindle_detoks}, we plot the output of the DETOKS algorithm for sleep spindle detection using sleep-EEG data from \texttt{excerpt5.edf} dataset. The algorithm decomposes the input signal into three components: a) low-frequency signal, b) oscillatory signal, and c) sum of sparse and sparse-derivative signal. A Butterworth band-pass filter with passband $11${--}$16$ Hz is applied to the oscillatory pattern to obtain the output in Fig. \ref{fig:spindle_detoks}(b). In Fig. \ref{fig:spindle_detoks}(c), using the TKEO, we detect the sleep spindle regions so that the TEKO is above a fixed threshold of $0.03$.
\begin{figure}[t!]
\centering
\includegraphics[width=0.47\textwidth, height=0.46\textwidth]{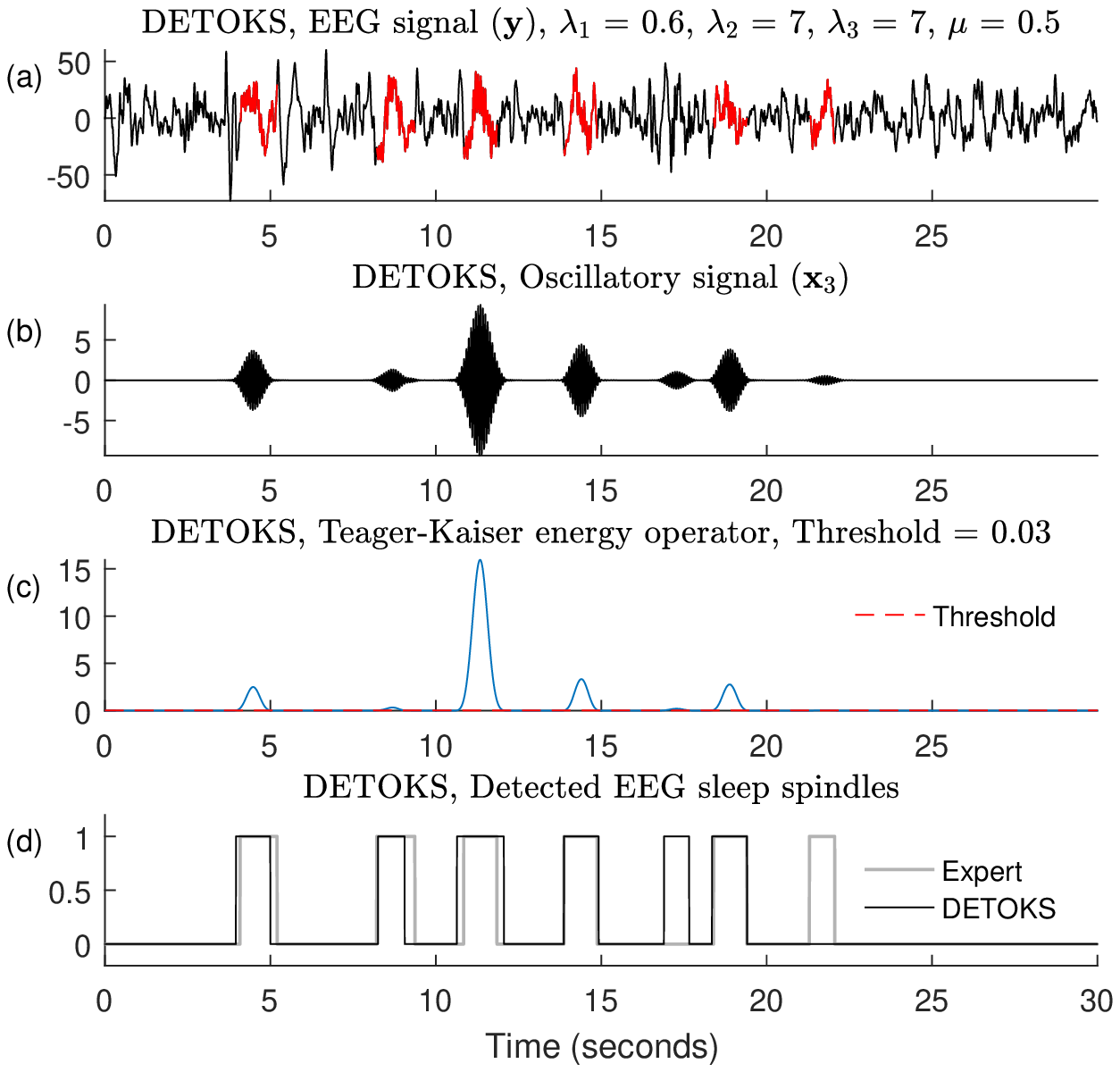}
\caption{Spindle detection. (a) 30 second epoch of sleep-EEG data obtained from \texttt{excerpt5.edf}. The epoch consists of six spindles as identified by two experts. (b) Oscillatory signal component $\bm{x}_2$ detected using the DETOKS algorithm. (c) Signal obtained by applying Teager-Kaiser energy operator on the extracted oscillatory signal component. (d) Expert and algorithm annotated sleep spindle regions. \label{fig:spindle_detoks}}
\end{figure}
\vfill \null


\ifCLASSOPTIONcaptionsoff
  \newpage
\fi

\end{document}